\algrenewcommand{\algorithmiccomment}[1]{\hspace{-1.2cm} #1}
\g@addto@macro\bfseries{\boldmath}
\newcommand{\bra}[1]{\langle #1|}
\newcommand{\ket}[1]{|#1\rangle}
\newcommand{\braket}[2]{\langle #1|#2\rangle}
\newcommand{\ketbra}[2]{\ket{#1}\!\bra{#2}}
\newcommand{\Abs}[1]{\left|#1\right|}
\newcommand{\e}{\mathrm{e}}
\newcommand{\I}{\mathrm{i}}
\renewcommand{\t}{{\scriptscriptstyle\mathsf{T}}}
\newcommand{\id}{\operatorname{id}}
\DeclarePairedDelimiter{\ceil}{\lceil}{\rceil}
\theoremstyle{definition}
\newtheorem{theorem}{Theorem}
\newtheorem{proposition}[theorem]{Proposition}
\newtheorem{remark}{Remark}
\renewcommand{\qedsymbol}{$\blacksquare$}
\renewcommand{\qedsymbol}{\unskip\nobreak\quad\qedsymbol}
\renewcommand{\qedsymbol}{$\blacksquare$}
\newcommand{\qedgen}{$\blacktriangleleft$}
\definecolor{dblue}{RGB}{14, 34, 102}
\def\maketitle{
\@author@finish
\title@column\titleblock@produce
\suppressfloats[t]}
\begin{document}

\title{A resource- and computationally-efficient protocol for\\multipartite entanglement distribution in Bell-pair networks}

\author{S.~Siddardha Chelluri}\email{schellur@uni-mainz.de}
\affiliation{Institute of Physics, Johannes-Gutenberg University of Mainz, Staudingerweg 7, 55128 Mainz, Germany}

\author{Sumeet Khatri}\email{skhatri@vt.edu}
\affiliation{Dahlem Center for Complex Quantum Systems, Freie Universit\"{a}t Berlin, 14195 Berlin, Germany}
\affiliation{Department of Computer Science, Virginia Tech, Blacksburg, VA 24061, USA}
\affiliation{Virginia Tech Center for Quantum Information Science and Engineering, Blacksburg, VA 24061, USA}

\author{Peter van Loock}\email{loock@uni-mainz.de}
\affiliation{Institute of Physics, Johannes-Gutenberg University of Mainz, Staudingerweg 7, 55128 Mainz, Germany}

\date{\today}

\let\oldaddcontentsline\addcontentsline 
\renewcommand{\addcontentsline}[3]{\oldaddcontentsline{#1}{lot}{#3}} 

\begin{abstract}

Multipartite entangled states, such as Greenberger--Horne--Zeilinger (GHZ) states, are important resources in multiparty quantum networking tasks. We consider protocols for generating such states from networks of Bell pairs and local operations and classical communication. We present a computationally-efficient protocol for generating GHZ states that is also efficient with respect to the number of consumed Bell pairs, (local) gates, and Bell-pair sources. Our protocol: (1) requires $O(N)$ gates in a network with $N$ nodes, independent of the network topology; (2) has time complexity $O(N^2)$, avoiding the Steiner tree and any other computationally-hard problem; (3) maintains a near-optimal number of consumed Bell pairs. Numerically, our protocol outperforms those based on (approximate) Steiner trees with respect to number of gates and Bell-pair sources. We prove that the minimal Bell-pair source cost is given by solving the graph-theoretic dominating set problem, and we demonstrate numerically that our protocol is nearly optimal for this quantity. Finally, we analytically characterize the impact of noisy Bell pairs and gates on the fidelity of the distributed GHZ states.

\end{abstract}

\maketitle


\section*{Introduction}

Recent years have seen a rapid rise in the development and realization of small-scale quantum devices, such as small-scale quantum computers and quantum sensors~\cite{pirandola2018sensingadvances,tse2019quantumenhancedLIGO,gambetta2020ibm,luo2023progressphotonicschips}. Individually, these small-scale devices are not particularly powerful nor useful. However, by \textit{networking} these small-scale devices, it may become possible to execute large-scale applications in a distributed fashion~\cite{CCT+20,awschalom2021interconnects,humble2021quantumcomputersHPC}, and thus make strides towards the goal of quantum advantage.


Entanglement is the resource that enables distributed quantum information processing tasks, such as quantum communication, computation, and sensing, over long distances. Both bipartite and multipartite entanglement can be used to achieve these tasks. Indeed, bipartite entanglement, along with local operations and classical communication (LOCC) between spatially separated nodes, enables the fundamental protocol of teleportation~\cite{BBC+93,BBP96}, and multipartite entanglement enables various distributed and measurement-based quantum computing strategies~\cite{gottesman1999gateteleportation,eisert2000nonlocalgates,RB01,leung2001MBQC,nielsen2003MBQC,leung2004MBQC,josza2005introMBQC,danos2007distributedMBQC,piveteau2024circuitknitting,broadbent2009blindQC}, and can enhance quantum sensing~\cite{toth2012multipartitemetrology,hyllus2012multipartitemetrology,ZZS18,XZCZ19,guo2020distributedsensing}.

\begin{figure}
    \centering
    \includegraphics[width=0.97\columnwidth]{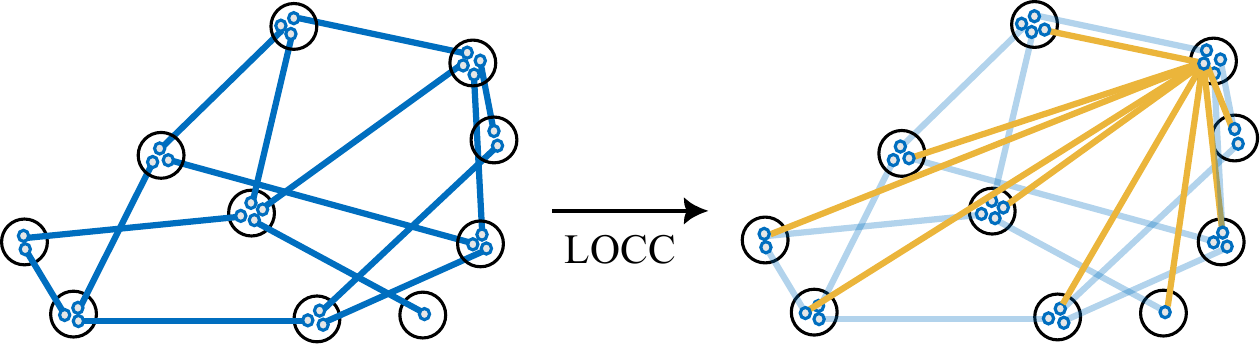}
    \caption{\textbf{Creation of a GHZ state from a network of Bell pairs and LOCC.} (Left) A network of Bell pairs, consisting of nodes (black circles) containing qubits (blue circles). Every edge connecting two qubits represents a Bell pair. Our goal is to transform the Bell-pair network, via local operations and classical communication (LOCC), into a graph state. (Right) A star graph state (which is local-unitary-equivalent to a GHZ state), indicated in yellow, shared by all nodes in the network.}
    \label{fig:LOCC_Bell_pairs_to_multipartite}
\end{figure}

The creation of bipartite entanglement (e.g., Bell pairs) between nearest neighbors in a network has been realized experimentally in recent years~\cite{DSG+16,SSH+17,Hump+18,PHB+21,hermans2022teleportation,pompili2022demonstrationstack,exp_1,knaut2024entanglementnetwork,exp_3,exp_4,exp_5}, and will arguably form the backbone of large-scale quantum networks. Therefore, when considering the distribution of multipartite entanglement in quantum networks, it is meaningful to consider protocols in which the aim is to transform a network of Bell pairs, via LOCC, into a multipartite entangled state shared by a given set of nodes in the network; see Fig.~\ref{fig:LOCC_Bell_pairs_to_multipartite}. Some prior works assume that these multipartite entangled states are already distributed in advance in the network~\cite{PWD18,PD19,HPE19,freund2024quantumdatabus}, without consideration of how this might be done in the first place. Other works consider first creating the multipartite states locally, and then distributing them over quantum channels and/or via teleportation~\cite{CC12,EKB16,fischer2021graph,avis2023multipartitecentral} using the network of nearest-neighbor Bell pairs. The latter strategy requires quantum repeater protocols~\cite{BDC98,DBC99,SSR+11,ABC21,azuma2023repeatersRMP} for creating Bell pairs between one node and all of the others, which is in general costly in terms of the number of Bell pairs and local qubits.

In this work, we consider LOCC-based entanglement distribution of multipartite entangled states directly from a network of Bell pairs, as illustrated in Fig.~\ref{fig:LOCC_Bell_pairs_to_multipartite}. Previously, Ref.~\cite{CC12} has provided a protocol in this setting that creates a graph state shared among all nodes in the network, such that the topology of the corresponding graph mimics that of the Bell-pair network. They have shown that their protocol can outperform the repeater-based protocol that first creates bipartite entanglement between one node and all of the others and then teleports the locally-created graph state. Similarly, Ref.~\cite{MMG19} provides a protocol for distributing an arbitrary graph state using a network of Bell pairs, even among a chosen subset of nodes, but this protocol involves (in the most general case) solving the NP-hard Steiner tree problem~\cite{hwang1992Steinertree_book,Brazil2015} as an initial step. The protocol in Ref.~\cite{fischer2021graph} involves generating the graph state locally at a node and then distributing the state to the other nodes via appropriately selected paths using the Bell-pair network. We refer to the Supplementary Information for a more detailed summary of prior work.

\begin{table}[t]
    \includegraphics[width=\columnwidth]{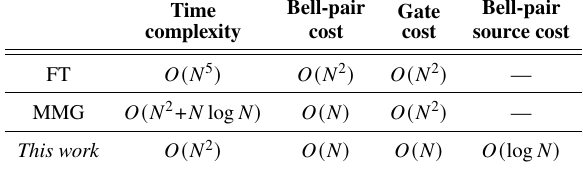}
    \caption{\textbf{Summary of results and comparison to prior work.} The number of nodes in the network is $N$. These results are specific to GHZ states. Both FT~\cite{fischer2021graph} and MMG~\cite{MMG19} consider the creation of arbitrary graph states.
    The gate cost refers to the cost of local two-qubit gates, as these outweigh single-qubit gates in terms of noise, and therefore cost. The Bell-pair cost is for the creation of a GHZ state in the entire network.
    The Bell-pair source cost follows the asymptotic behavior of the Erdős–Rényi model, as established in Refs.~\cite{dom_set_bound_2001,dom_set_bound_2015}. 
    }
    \label{tab:summary_comparisons}
\end{table}

The main contribution of our work is an alternative protocol for distributing graph states, specifically Greenberger--Horne--Zeilinger (GHZ) states~\cite{GHZ89}, in arbitrary network topologies directly from the Bell pairs that connect the nearest-neighbor nodes of the network. GHZ states are useful, for instance, for conference key agreement~\cite{murta2020CKAreview}, and are necessary for achieving optimal precision in certain quantum metrological tasks~\cite{toth2012multipartitemetrology,hyllus2012multipartitemetrology}. Unlike Ref.~\cite{MMG19}, our protocol is not based on solving the Steiner tree problem, nor any other computationally hard problem. Also, unlike Ref.~\cite{fischer2021graph}, our protocol does not require first creating Bell pairs between one node in the network and all of the others.
Nevertheless, we can provably achieve either comparable or better performance compared to Refs.~\cite{MMG19,fischer2021graph}; see Table~\ref{tab:summary_comparisons} for a summary.

The usual figures of merit for assessing the performance of entanglement distribution protocols include the number of consumed Bell pairs, waiting time, and fidelity. In this work, we consider also the computational efficiency, the total number of (local) gates, and the number of Bell-pair sources. These additional figures of merit are motivated by the fact that gates are currently prone to errors, which means that applying more gates comes at the cost of reduced fidelity. The amount of classical communication, particularly between distant nodes for the purpose of communicating measurement outcomes, is also constrained in practice due to limitations on the coherence times of quantum memories. From these facts, we can conclude that LOCC operations are not free in practice, unlike the usual information-theoretic setting of entanglement distribution in which LOCC operations are considered free. (We note that LOCC operations are considered free in Refs.~\cite{MMG19,fischer2021graph}.) Furthermore, when we envision the development of large-scale quantum networks, the question of how many Bell-pair sources are needed in the network naturally arises, and it is meaningful to consider keeping the number of Bell-pair sources to a minimum. With all of these points in mind, we are motivated to seek protocols that are efficient with respect to the following figures of merit: (1) computational efficiency; (2) number of gates applied; and (3) number of Bell-pair sources.

Our protocol 
is efficient with respect to all three of the above figures of merit. It is based on successively creating small GHZ states, starting from the highest-degree nodes, and then merging them to obtain larger GHZ states that eventually span either the entire network or some given subset of nodes. 
We assess the performance of our protocol on models of real networks, which exhibit the small-world property, along with high density and central hubs. We consider the Erd\H{o}s--R\'{e}nyi, Barab\'{a}si--Albert, and Waxman models~\cite{erdosrenyi1959,erdosrenyi1960,waxman1988,wattsstrogatz1998,albert1999diameterWWW,barabasi1999scalefree,Barabasi16_book,Newman2018_book}, which capture these realistic network features.

We prove that our protocol has time complexity $O(N^2)$ and 
requires $O(N)$ gates to produce a GHZ state shared by all nodes in the network. On the other hand, we prove that the protocol of Ref.~\cite{MMG19} requires at most $O(N^2)$ gates, and
we provide numerical evidence that our protocol indeed improves the gate count compared to Ref.~\cite{MMG19}. 
It is also remarkable that the gate count for our protocol depends \textit{only} on the number $N$ of nodes, and not on any other property of the network, e.g., topology and density of nodes.

We also prove
that our protocol consumes the same optimal number of Bell pairs as in Ref.~\cite{MMG19} when entangling all nodes in the network. 
When entangling only a subset of nodes, we provide numerical evidence that our protocol is nearly optimal with respect to consumed Bell pairs, i.e., the number of consumed Bell pairs in our protocol is close to that of the Steiner tree approach of Ref.~\cite{MMG19}. As such, our protocol represents a heuristic algorithm for the Steiner tree problem.

Regarding the number of Bell-pair sources, our protocol has the notable feature that it automatically prescribes the location of the Bell-pair sources. Indeed, by creating small-scale GHZ states at the highest-degree nodes within the network, these sources only need to be located at the highest-degree nodes, rather than at every individual node or at the midpoint of every edge of the network. It is then natural to ask whether this placement of Bell-pair sources is optimal. 
We prove that minimizing the number of Bell-pair sources is equivalent to the \textit{minimum dominating set problem} from graph theory~\cite{dom_set_alg_book}, which is an NP-hard problem in general~\cite{dom_set_np}. This result represents an interesting connection to graph theory, analogous to the well-known connection between the minimal Bell-pair cost and the Steiner tree problem. With this result, we are able to show numerically 
that our protocol is nearly optimal with respect to the number of Bell-pair sources. In other words, our protocol provides a heuristic algorithm for both the Steiner tree and minimum dominating set problems. Furthermore, by using an approximate algorithm for the minimum dominating set problem~\cite{dom_set_alg_book}, we evaluate the Bell-pair source counts of algorithms based on the Steiner tree, and we find 
that our protocol outperforms these existing algorithms for both the Erd\H{o}s--R\'{e}nyi and Barab\'{a}si--Albert network models.

Finally, 
we consider our protocol in the presence of noise, both noisy gates and noisy Bell pairs, and we derive exact analytical expressions for the fidelity of the output state with respect to the desired GHZ state. These analytical expressions hold for arbitrary noise models for both the Bell pairs and the gates.

\section*{Results}

\subsection*{Our protocol}

We model a network with a graph. The vertices of the graph correspond to the nodes in the network. Every node has a number of qubits equal to the degree of the corresponding vertex. The edges in the graph dictate how these qubits within the nodes are connected to the qubits in other nodes via quantum channels, which are used to generate Bell pairs $\ket{\Phi}\coloneqq\frac{1}{\sqrt{2}}(\ket{0,0}+\ket{1,1})$ between the qubits in different nodes. In particular, if the graph has a vertex with degree $k$, then its corresponding network has a node with $k$ qubits, and each one of these qubits can be connected via a Bell pair to a qubit in the neighboring node; see Fig.~\ref{fig:LOCC_Bell_pairs_to_multipartite} for an example. Starting from the network of Bell pairs, we consider the task of distributing a GHZ state among either all nodes of the network (the \textit{complete case}) or a given subset of nodes in the network (the \textit{subset case}). Every qubit of the GHZ state should belong to one of the desired nodes.

The $N$-party GHZ state~\cite{GHZ89} is defined as $\ket{\text{GHZ}_N}\coloneqq\frac{1}{\sqrt{2}}(\ket{0}^{\otimes N}+\ket{1}^{\otimes N})$.
Graphically, GHZ states can be depicted using a star graph; see Fig.~\ref{fig:GHZ_Protocols}. Notably, because the GHZ state is permutation invariant, the center of the star can be placed at any of the qubits, a fact that we make use of extensively in our protocol. (This fact is in contrast to the star graph state, which has a distinguished central qubit, and to change the central qubit requires local operations.) Throughout the rest of this work, we use the term ``star'' to refer to both a GHZ state and its graphical depiction.

\begin{figure}
    \centering
    \includegraphics[width=0.85\columnwidth]{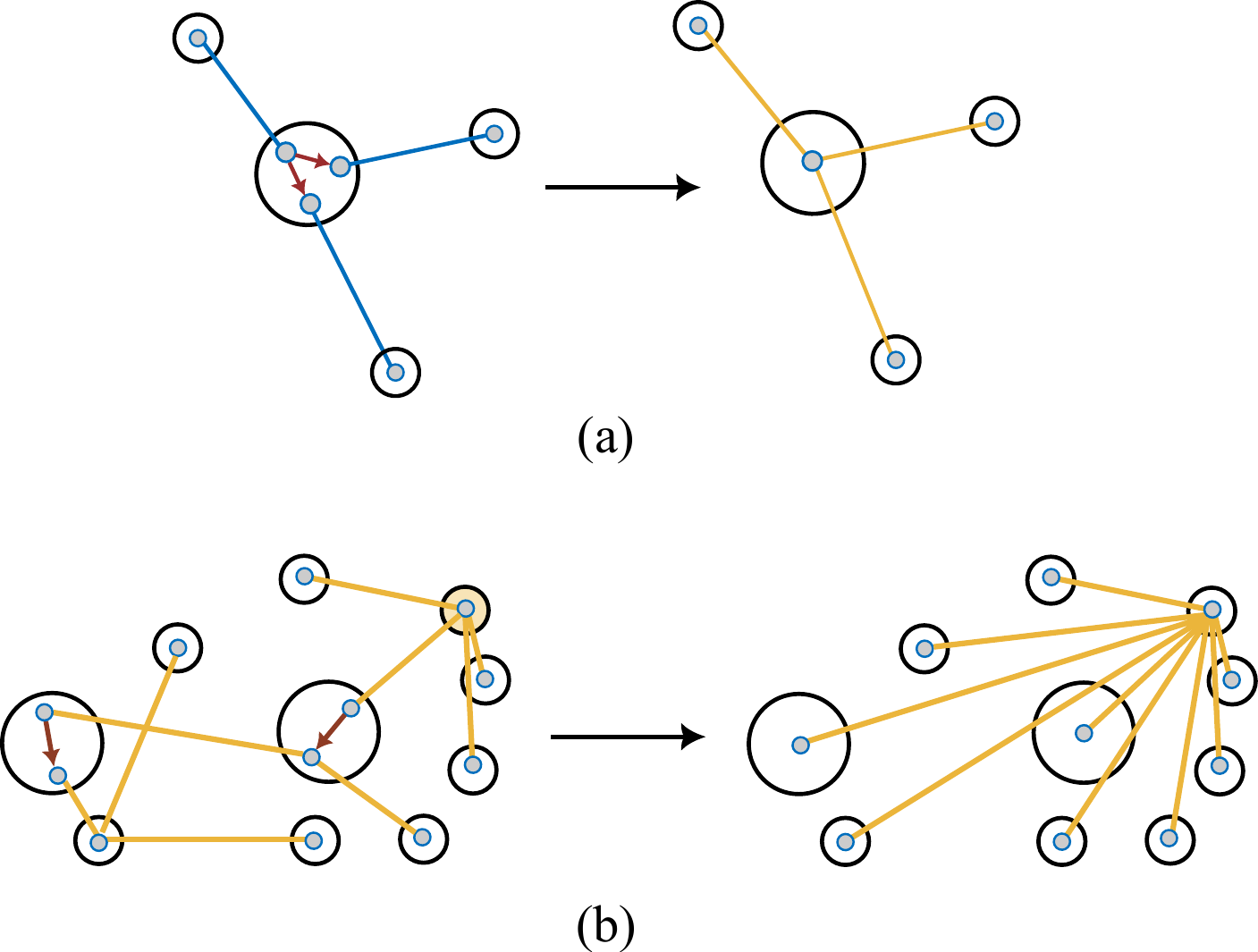}
    \caption{\textbf{Fusion protocols~\cite{kruszynska2006purificationgraphstates,PWD18}.} (a) Fusion of Bell pairs connected to a central node to create a GHZ state (Protocol~\ref{alg:Bell_star_to_GHZ}). Arrows indicate CNOT gates, pointing from control qubits to target qubits. (b) Fusion of GHZ states in a tree topology (Protocols~\ref{alg:GHZ_merge_protocol} and \ref{alg:multiple_GHZ_merge_tree_protocol_main}), with the root node highlighted. 
    In both protocols, after the CNOT gates, the target qubits are measured in the Pauli-$Z$ basis.}
    \label{fig:GHZ_Protocols}
\end{figure}

\begin{figure*}
    \centering
    \includegraphics[width=0.95\textwidth]{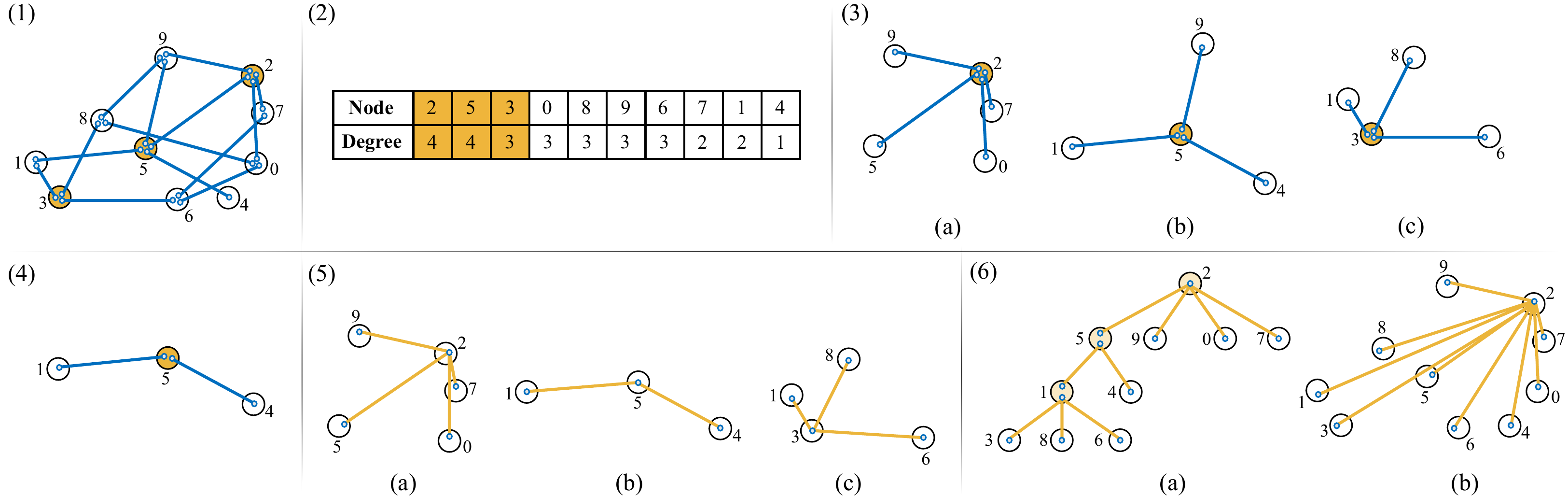}
    \caption{
    \textbf{Example execution of Protocol~\ref{alg:GHZ_network_complete} for distributing a GHZ state among all network nodes.} (1) The initial network of Bell pairs. (2) Table of nodes and their degrees in descending order, as per Step~2. (3) The selection of star subgraphs according to Steps~3–6, with (a), (b), and (c) representing the consecutive stars obtained from the first three entries of the table. At this point, all nodes are included in at least one star. (4) Duplicate nodes and edges are removed, according to Steps~7–11, where stars (3a) and (3c) are unaltered, but the edge between nodes 5 and 9 in star (3b) is deleted. (5) The locally created GHZ states as in Step~12. (6a) GHZ states from (5) arranged as a tree, with the center of (5c) shifted to node 1. (6b) The final distributed GHZ state after Step~13.}
    \label{fig:GHZ_state_dist_example}
\end{figure*}

In Protocol~\ref{alg:GHZ_network_complete}, we present our protocol for creating a GHZ state in an arbitrary network of Bell pairs, and in Fig.~\ref{fig:GHZ_state_dist_example} we provide an example of how it works. 
The protocol makes use of two fusion protocols, illustrated in Fig.~\ref{fig:GHZ_Protocols}: Protocol~\ref{alg:Bell_star_to_GHZ}, which creates local, small-scale GHZ states centered around nodes with the highest degree; and Protocol~\ref{alg:multiple_GHZ_merge_tree_protocol_main}, which merges small-scale GHZ states (using Protocol~\ref{alg:GHZ_merge_protocol}) in a tree topology to achieve the desired GHZ state among all of the desired nodes. 
We note that Protocol~\ref{alg:Bell_star_to_GHZ} is a well-known method for creating a GHZ state from Bell pairs in a star topology, and Protocol~\ref{alg:multiple_GHZ_merge_tree_protocol_main} is based on a well-known method for merging GHZ states; see, e.g., Refs.~\cite{kruszynska2006purificationgraphstates,PWD18}.

\begin{algorithm}[H]
\caption{GHZ creation in a Bell-pair network}\label{alg:GHZ_network_complete}
\begin{algorithmic}[1]
\Require Network of Bell pairs, represented by a graph $G$ with $N$ nodes. 
\Ensure GHZ state shared among all $N$ nodes.
\State Create an empty list, \texttt{SG}.
\State Create a table listing the degree of each node in $G$, sorted in descending order. If multiple nodes have the same degree, then order them randomly.
\State Select the top-most entry of the table. Save the star subgraph, consisting of the corresponding node and its neighboring nodes, as an element in \texttt{SG}.
\State Delete the top most entry of the table, and remove the corresponding node and its associated edges from $G$.
\State Repeat steps 3 and 4 until all nodes in $G$ are included in at least one star.
\State If any edges remain in $G$, add them along with their two incident nodes to \texttt{SG}.
\State Select the largest star in \texttt{SG}, label it \texttt{S0}, and remove it from \texttt{SG}. Create a new list called \texttt{MSG} and add \texttt{S0} to it.
\State Find another star in \texttt{SG} that shares at least one node in common with any star in \texttt{MSG} and call it \texttt{S1}. 
\State Check if all nodes of \texttt{S1} are already in any stars of \texttt{MSG}. If so, remove \texttt{S1} from \texttt{SG} and return to the previous step.
\State If the central node of \texttt{S1} is in common with any star in \texttt{MSG}, remove all common nodes (and the associated edges) from \texttt{S1} except its central node. If not, remove all common nodes (and the associated edges) except one randomly-chosen common node. Remove \texttt{S1} from \texttt{SG} and add the modified \texttt{S1} to \texttt{MSG}.
\State Repeat steps 8--10, naming stars as \texttt{S2}, \texttt{S3}, etc., until \texttt{SG} is empty.
\State For every star in \texttt{MSG}, apply Protocol~\ref{alg:Bell_star_to_GHZ} to create a GHZ state among the nodes in the star.
\State Using Protocol~\ref{alg:multiple_GHZ_merge_tree_protocol_main}, merge all of the small GHZ states into one.

\end{algorithmic}
\end{algorithm}

In the subset case, we can first generate a GHZ state for the entire graph using our Protocol~\ref{alg:GHZ_network_complete} and then use $X$-measurements to remove the qubits outside of the target subset, resulting in a GHZ state shared among the desired nodes. However, this method consumes significantly more gates and Bell pairs than necessary. Instead, we first apply Protocol~\ref{alg:GHZ_network_subset}, which constructs a connected subgraph that includes the desired nodes, and then using this subgraph as the input to Protocol~\ref{alg:GHZ_network_complete} we generate the GHZ state within this subgraph. Finally, we $X$-measure out any qubits in the subgraph that are not part of the target subset. (If the set of desired nodes already forms a connected subgraph, then Protocol~\ref{alg:GHZ_network_subset} is not required.) We provide an example of how Protocol~\ref{alg:GHZ_network_subset} works in Fig.~\ref{fig:BFS_travel}. Note that this initial step of constructing a connected subgraph is in contrast to the protocol in Ref.~\cite{MMG19}---which we refer to as the ``MMG protocol'' from now on---in which the subset case involves first identifying a Steiner tree (a process known to be NP-hard~\cite{garey1977SteinertreeNP}) and then executing the star expansion protocol of the same work to obtain the GHZ state among the selected nodes. 


\begin{algorithm}[H]
\caption{GHZ state from Bell pairs in a star topology~\cite{kruszynska2006purificationgraphstates,PWD18}}\label{alg:Bell_star_to_GHZ}
\begin{algorithmic}[1]
\Require $k\in\{2,3,\dotsc\}$ Bell pairs $\ket{\Phi}_{A_iB_i}$, $k+1$ nodes in a star topology. (The central node shares a Bell pair with each of the $k$ outer nodes.)
\Ensure GHZ state $\ket{\text{GHZ}_{k+1}}$ shared by all $k+1$ nodes.
\State Apply the CNOT gates $\text{CNOT}_{A_1A_2}$, $\text{CNOT}_{A_1A_3}$, \ldots ,$\text{CNOT}_{A_1A_k}$ on the qubits in the central node.
\State Measure the qubits $A_2,A_3,\dotsc,A_k$ in the $Z$-basis $\{\ket{0},\ket{1}\}$.
\State Communicate the outcome of the measurement on qubit $A_i$ to the qubit $B_i$, $i\in\{2,3,\dotsc,k\}$. For every $i\in\{2,3,\dotsc,k\}$: if the outcome communicated by $A_i$ is 0, then nothing is done on qubit $B_i$; if the outcome communicated by $A_i$ is 1, then apply the Pauli-$X$ gate to qubit $B_i$.
\end{algorithmic}
\end{algorithm}

\begin{algorithm}[H]
\caption{Fusion of two GHZ states~\cite{kruszynska2006purificationgraphstates,PWD18}}\label{alg:GHZ_merge_protocol}
\begin{algorithmic}[1]
\Require Two GHZ states, $\ket{\text{GHZ}_{n+1}}_{A_1B_{1:n}}$ and $\ket{\text{GHZ}_{m+1}}_{A_2C_{1:m}}$, $n,m\in\{1,2,\dotsc\}$, such that qubits $A_1$ and $A_2$ are located at the same node.
\Ensure GHZ state $\ket{\text{GHZ}_{n+m+1}}_{A_1B_{1:n}C_{1:m}}$.
\State Perform the gate $\text{CNOT}_{A_1A_2}$ between the qubits $A_1$ and $A_2$.
\State Measure the qubit $A_2$ in the $Z$-basis (i.e., the $\{\ket{0},\ket{1}\}$ basis).
\State Communicate the outcome $x\in\{0,1\}$ of the measurement to the nodes $C_1,C_2,\dotsc,C_m$. If the outcome is $x=0$, then $C_1,C_2,\dotsc,C_m$ do nothing; if the outcome is $x=1$, then $C_1,C_2,\dotsc,C_m$ apply the Pauli-$X$ gate to their qubits.
\end{algorithmic}
\end{algorithm}

\begin{algorithm}[H]
\caption{Fusion of GHZ states in a tree topology~\cite{kruszynska2006purificationgraphstates,PWD18}}\label{alg:multiple_GHZ_merge_tree_protocol_main}
\begin{algorithmic}[1]
\Require GHZ states in a tree configuration, with a root node identified.
\Ensure GHZ state shared by all of the nodes.
\State Apply a CNOT gate at every node except for the root node and the leaf nodes. The control qubit is in a GHZ state with qubits in the next highest level of the tree, and the target qubits is in a GHZ with qubits in the next lowest level of the tree. 
\State Measure the target qubits Pauli-$Z$ basis $\{\ket{0},\ket{1}\}$. The outcomes are communicated to all nodes.
\State For every leaf node of the tree, apply the Pauli-$X$ correction $X^{z_1\oplus z_2\oplus\dotsb\oplus z_{\ell-1}}$, where $z_1,z_2,\dotsc,z_{\ell}\in\{0,1\}$ are the measurement outcomes obtained at the nodes along the path going from the leaf node to the root node.
\end{algorithmic}
\end{algorithm}

\begin{algorithm}[H]
\caption{Connected subgraph generation}\label{alg:GHZ_network_subset}
\begin{algorithmic}[1]
\Require Network of Bell pairs, represented by a graph $G$ with $N$ nodes, and a list of desired nodes, \texttt{DN}.
\Ensure A connected subgraph of $G$ containing all of the nodes in~\texttt{DN}.
\State Generate the subgraph induced by \texttt{DN}, call it $G'$.
This subgraph may not be connected; if it is connected, output $G'$.
\State Randomly choose a node from \texttt{DN} to serve as the root node. Using a breadth-first search (BFS) algorithm~\cite{bfs}, create a tree-like structure starting from the root node by visiting its first-level (i.e., nearest) neighbors, then the second-nearest neighbors, and so on, without repeating any nodes (see Fig.~\ref{fig:BFS_travel}).  
\State Iteratively delete leaf nodes (red nodes in Fig.~\ref{fig:BFS_travel}) from the BFS tree unless they are part of the set of desired nodes (\texttt{DN}). Continue this process until all remaining leaf nodes belong to \texttt{DN}, and no further removals are possible.
\State Construct a subgraph using nodes from BFS tree (yellow and blue nodes from Fig.~\ref{fig:BFS_travel}) and corresponding edges from $G$.
\end{algorithmic}
\end{algorithm}

\begin{figure}
   \centering
   \includegraphics[width=0.85\columnwidth]{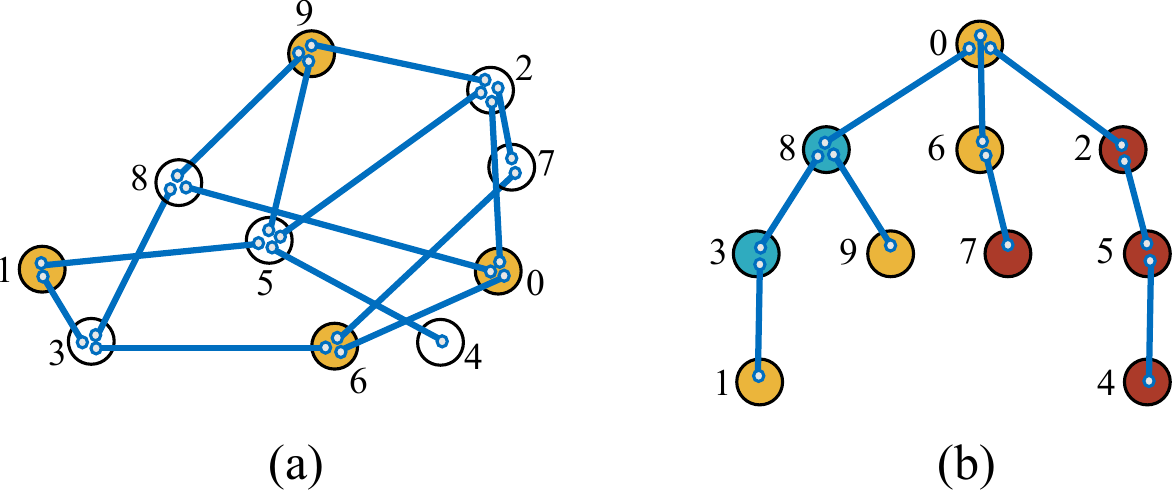}
   \caption{\textbf{Using Protocol~\ref{alg:GHZ_network_subset} to determine a connected subgraph containing a desired subset of nodes.} (a) The original network, with the desired nodes highlighted in yellow, forming the set $\texttt{DN}=\{0,1,6,9\}$. (b) The nodes highlighted in red are not necessary and are ignored, while the nodes highlighted in blue are necessary to have a connected subgraph. The resulting subgraph $G$ consists of the desired yellow nodes and the blue nodes.}\label{fig:BFS_travel}
\end{figure}

\subsubsection*{Remarks}

\begin{enumerate}[leftmargin=2ex]
    \item The graph $G$ in Protocol~\ref{alg:GHZ_network_complete} can be arbitrary, as long as it is connected, i.e., if there exists a path connecting every pairs of vertices in $G$. Note that because we are in the LOCC setting, an LOCC protocol can create a GHZ among all nodes of $G$ only if $G$ is connected.
    
    \item  In our protocols, the locations of the star creations are not unique, in part because the highest-degree nodes (selected in Step~2 of the protocol) need not be unique and could be chosen randomly. 

    \item Protocol~\ref{alg:GHZ_network_complete} requires a number of Bell pairs equal to the total sum of Bell pairs across all the stars in the list \texttt{MSG} from Step~12. The fusion protocols do not require any additional Bell pairs. It is also sufficient to use a number of Bell pair sources that is equal to the number of elements in the list \texttt{MSG}. 

    \item For a given network, Protocols~\ref{alg:GHZ_network_complete} and \ref{alg:GHZ_network_subset} need not be executed in real time; they can be performed offline, as they depend solely on the topology of the initial Bell-pair network. Indeed, up to Step~11, Protocol~\ref{alg:GHZ_network_complete} is purely classical and is intended to determine the list \texttt{MSG} that determines what fusions the nodes must do. This list can be determined in advance of the actual implementation in a network and distributed to the nodes. When it is time for the actual implementation, in real time, the nodes simply take the ``instructions'' represented by \texttt{MSG} and execute Protocols~\ref{alg:Bell_star_to_GHZ} and \ref{alg:multiple_GHZ_merge_tree_protocol_main}, coordinating with each other as needed. Similarly, Protocol~\ref{alg:GHZ_network_complete} is purely classical and can be executed in advance to determine the subgraph on which Protocol~\ref{alg:GHZ_network_complete} would be performed.
    
    \item Protocol~\ref{alg:GHZ_network_complete} will produce a tree. If we are in the complete case, then this is a minimum spanning tree for $G$. In the subset case, it will be a minimum spanning tree corresponding to the subgraph containing the desired nodes, as generated by Protocol~\ref{alg:GHZ_network_subset}. However, we emphasize that while our algorithm produces a tree, much like prior algorithms, ours is not just any tree --- it features a specific structure aimed at minimizing the number of Bell pair sources required for creation of the GHZ state. 
    
    \item If the given network has one node with a degree equal to \( N-1 \)—that is, a node connected to all other nodes—then the protocol finishes in one step.

\end{enumerate}

\subsubsection*{Classical communication cost}

Protocol~\ref{alg:Bell_star_to_GHZ} requires $k-1$ bits of classical communication, where $k$ is the number of nodes in the star graph. In Protocol~\ref{alg:multiple_GHZ_merge_tree_protocol_main}, the number of bits is determined as follows. We first note that all correction operations in Protocol~\ref{alg:multiple_GHZ_merge_tree_protocol_main} are done by the leaf nodes only --- the intermediate internal nodes (i.e., non-leaf and non-root nodes) do not have to apply correction operations. Every internal node performs a single fusion operation, consisting of the CNOT gate followed by a Pauli-$Z$ measurement, and the measurement outcome is communicated downward in the tree to its corresponding leaf nodes. Every leaf node, $v_{\ell}$, therefore receives $L_{v_{\ell}}$ bits, where $L_{v_{\ell}}$ is the number of internal nodes along the path from $v_{\ell}$ to the root node. This means that the total classical communication cost of Protocol~\ref{alg:multiple_GHZ_merge_tree_protocol_main} is $\sum_{\ell}L_{v_{\ell}}$, where we sum over all leaf nodes. In total, the classical communication cost of Protocol~\ref{alg:GHZ_network_complete} is the total number of bits needed to create the local GHZ states from the set $\texttt{MSG}$ using Protocol~\ref{alg:Bell_star_to_GHZ} combined with the total number of bits needed in Protocol~\ref{alg:multiple_GHZ_merge_tree_protocol_main} to fuse the local GHZ states into the overall GHZ state shared by the desired nodes.

\subsubsection*{Creation of arbitrary graph states}\label{sec:arbitrary_graph_states}

While our primary focus is on creating GHZ states, it is possible to use our protocol to create other graph states. A first strategy is one based on Ref.~\cite{MMG19}. Specifically, from a GHZ state shared by all of the nodes, we can create the complete graph (all-to-all connected graph) via local complementation. Then, if a desired graph state can be created from the complete graph by deleting vertices and their corresponding edges, implemented via $Z$-basis measurements, or via other local operations on the complete graph state, then such states can be generated by first using Protocol~\ref{alg:GHZ_network_complete} to create a GHZ state in the network and then performing the appropriate local operations. This strategy can be enhanced with additional qubits, as described in Ref.~\cite{MMG19}, in order to create the so-called edge-decorated complete graph using GHZ states in order to generate an arbitrary graph state. Alternatively, instead of proceeding via the GHZ state, we can modify Protocol~\ref{alg:GHZ_network_complete} by replacing Protocol~\ref{alg:Bell_star_to_GHZ} therein with the Bipartite~B protocol of Ref.~\cite{CC12}, and then perform the appropriate fusion operations similar to those in Protocol~\ref{alg:multiple_GHZ_merge_tree_protocol_main}. In this case, determining the appropriate fusion operations in order to achieve the desired graph state is non-trivial, depending in general on both the desired graph state and the network topology, and thus we leave this as an interesting direction for future work.



\subsection*{Time complexity}\label{sec-complexity-proofs}

We now present results on the time complexity of our protocols in the worst-case scenario, focusing solely on the classical steps and neglecting the complexity associated with implementing quantum operations. We present all proofs in Methods.

\begin{theorem}[Time complexity: complete case]\label{complexity_complete_proof}
    The time complexity of Protocol~\ref{alg:GHZ_network_complete} is $O(N^2)$, where $N$ is the total number of nodes in the graph.
\end{theorem}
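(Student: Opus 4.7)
The plan is to bound the running time of each classical step of Protocol~\ref{alg:GHZ_network_complete} separately and sum the contributions. Since the number of edges in $G$ is bounded by $\binom{N}{2}=O(N^2)$, any subroutine whose cost is linear in the edge count is automatically $O(N^2)$, so I would aim to show that every step falls under this umbrella.

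First, step~2 requires computing the degree of each node and sorting. Reading the adjacency structure and accumulating degrees costs $O(N+|E|)=O(N^2)$, and sorting by degree costs $O(N\log N)$ (or $O(N)$ via counting sort, since degrees are integers at most $N-1$). Steps~3--6 form a greedy loop: in each iteration one selects the current highest-degree remaining node $v$, records the star consisting of $v$ and its current neighbors into \texttt{SG}, and deletes $v$ together with its incident edges from $G$. There are at most $N$ iterations, and the work done inside one iteration is $O(\deg(v))$ for enumerating the neighbors of $v$ and removing the incident edges. Summing over iterations gives $O(\sum_v \deg(v))=O(|E|)=O(N^2)$. Step~6 only walks over any leftover edges once more, so it contributes $O(|E|)$.

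The key substep is steps~7--11, which assemble \texttt{MSG}. To avoid a naive $O(N^3)$ bound, I would maintain a hash set $U$ that stores every node already appearing in some star of \texttt{MSG}, supporting insertion and membership queries in $O(1)$ expected time. For each candidate star $S\in\texttt{SG}$, scanning its nodes once is enough to (i) detect whether $S$ shares a node with \texttt{MSG} (step~8), (ii) check whether every node of $S$ already lies in $U$ (step~9), and (iii) perform the de-duplication of step~10 while recording which retained node serves as the connector. The work per candidate star is $O(|S|)$, so the total work across steps~7--11 is $\sum_{S\in\texttt{SG}}|S|$, which equals the total number of node--star incidences. A node appears in a star of \texttt{SG} only as a selected center or as a neighbor of a selected center, so this sum is at most $O(|E|)=O(N^2)$.

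After step~11, the de-duplication rule forces the stars in \texttt{MSG} to meet pairwise in at most one node while jointly covering all $N$ nodes; consequently $\sum_{S\in\texttt{MSG}}|S|=O(N)$, and the classical bookkeeping needed to schedule the operations of Protocols~\ref{alg:Bell_star_to_GHZ} and~\ref{alg:multiple_GHZ_merge_tree_protocol_main} in steps~12 and~13 (a traversal of this tree of stars) is $O(N)$. Summing all contributions yields $O(N^2)$, as claimed. The main obstacle I anticipate is the amortized analysis of steps~7--11: a direct pairwise comparison of stars is cubic, so the argument has to route the per-star work through the hash set $U$ and charge the total against the edge incidences of $G$ in order to stay within the stated bound.
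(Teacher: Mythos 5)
Your proposal is correct and follows essentially the same route as the paper's proof: a step-by-step cost decomposition in which each phase is charged against the edge count $E=O(N^2)$ (plus $O(N\log N)$ for sorting), giving $O(N^2)$ overall. You are in fact slightly more careful than the paper in two places---the hash-set/amortized accounting for Steps~7--11, where the paper simply asserts $O(E+K)$ without specifying how candidate stars are located, and the fusion bookkeeping, which you bound by $O(N)$ where the paper uses the looser $O(N\cdot K)$---but these refinements do not change the structure of the argument or its conclusion.
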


\begin{theorem}[Time complexity: subset case]\label{complexity_subset_proof}
    The time complexity of Protocol \ref{alg:GHZ_network_subset} is $O(N+E)=O(N^2)$, where $N$ is the total number of nodes in the graph.
\end{theorem}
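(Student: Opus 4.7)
The plan is to bound the running time of each of the four steps of Protocol~\ref{alg:GHZ_network_subset} separately and add the contributions. Throughout, $N$ is the number of nodes and $E$ the number of edges of $G$, and I will use the standard fact that BFS on a graph runs in $O(N+E)$ time.

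First I would handle Step~1: constructing the subgraph $G'$ induced by \texttt{DN} requires a single pass over the adjacency structure of $G$, which is $O(N+E)$, and then testing whether $G'$ is connected is another BFS on $G'$, also $O(N+E)$. Step~2 is just a single BFS from the chosen root, so it contributes $O(N+E)$ directly.

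Step~3 is the only step that needs a moment of thought, and it is the main obstacle. The BFS tree $T$ produced in Step~2 has at most $N$ nodes and $N-1$ edges, and we must iteratively delete all leaves not in \texttt{DN}. A naive implementation that repeatedly scans the tree for prunable leaves could in principle take $\Theta(N^2)$ time, but one can do it in $O(N)$: compute the degree in $T$ of every node (in $O(N)$), initialize a queue with all non-\texttt{DN} leaves, and upon popping a node $v$, remove it, decrement the degree of its unique remaining neighbor $u$, and enqueue $u$ if $u$ has become a leaf and $u\notin\texttt{DN}$. Each node enters the queue at most once, so the total cost is $O(N)$. (Even the naive bound is dominated by the contributions from the other steps.) Step~4 rebuilds the output subgraph by iterating over the edges of $G$ incident to the retained nodes, which is again $O(N+E)$.

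Summing these four contributions gives $O(N+E)$. Since any simple graph on $N$ vertices satisfies $E\leq\binom{N}{2}=O(N^2)$, this is bounded by $O(N^2)$, as claimed. I would close by noting that, in sparse networks where $E=O(N)$, the protocol in fact runs in linear time, matching the intuition that Protocol~\ref{alg:GHZ_network_subset} is essentially a BFS followed by a tree-peeling pass.
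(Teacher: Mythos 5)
Your proof is correct and follows essentially the same decomposition as the paper's (BFS traversal in $O(N+E)$ plus leaf removal in $O(N)$, then $E=O(N^2)$). You are in fact slightly more careful than the paper on Step~3: the paper simply asserts that leaf removal "goes through every node once," whereas you supply the queue-based peeling argument that justifies the $O(N)$ bound and rules out the naive $\Theta(N^2)$ implementation.
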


To create a GHZ state on a subset of nodes, Protocol~\ref{alg:GHZ_network_subset} should be executed first, followed by Protocol~\ref{alg:GHZ_network_complete}. Since each protocol is applied exactly once, regardless of the size of initial network or subset, the overall complexity is $O(N^2)$. It is worth pointing out that while polynomial-time approximation algorithms for Steiner trees exist~(see, e.g., Refs.~\cite{kou1981steiner,mehlhorn1988steiner,robins2005steinertreeapproximation}), the algorithms in Refs.~\cite{kou1981steiner,robins2005steinertreeapproximation} have time complexity $O(S'\cdot N^2)$, where $S'$ is the number of desired nodes, and the widely-implemented algorithm of Ref.~\cite{mehlhorn1988steiner} has time complexity $O(E+N\log N)$. This then implies that the time complexity of the MMG protocol~\cite{MMG19}, which requires a Steiner tree algorithm as a subroutine, is $O(E+N\log N)$. Our Protocol~\ref{alg:GHZ_network_subset} has a faster time complexity of $O(E+N)$ compared to Refs.~\cite{kou1981steiner,mehlhorn1988steiner,robins2005steinertreeapproximation}. At the same time, let us remark that if we are only in the complete case, then we need Protocol~\ref{alg:GHZ_network_complete} only, which produces a minimum spanning tree, and its time complexity of $O(N^2)$ is comparable to known algorithms for minimum spanning trees; see, e.g., Ref.~\cite{fredman1994spanningtrees}. However, while the minimum spanning tree would be enough if we were concerned only with the number of Bell pairs, 
the advantage of Protocol~\ref{alg:GHZ_network_complete} becomes clear when considering the number of Bell-pair sources as a figure of merit in addition to the number of Bell pairs.

Let us also consider the time complexity of the protocol of Ref.~\cite{fischer2021graph}. The algorithm of Ref.~\cite{fischer2021graph} uses a network flow algorithm that has time complexity $O(N\cdot E^2)$~\cite[Chapter~26]{bfs}.

\subsection*{Gate cost}\label{sec-analytical_results_gates}

To determine the gate cost of our protocol, we start by counting the number of gates needed in Protocols~\ref{alg:Bell_star_to_GHZ} and \ref{alg:GHZ_merge_protocol}.
\begin{enumerate}
    \item Protocol~\ref{alg:Bell_star_to_GHZ} requires $n-1=N-2$ gates, where $n$ is the number of Bell pairs input to the protocol and $N=n+1$ is the number of nodes involved in the protocol. 
    \item Protocol~\ref{alg:GHZ_merge_protocol} requires one gate, independent of the sizes of the GHZ states being merged.
\end{enumerate}
Note that we do not consider the Pauli correction gates in the gate count because in near-term hardware, experiments show that two-qubit gates are typically one or two orders of magnitude noisier than single-qubit gates~\cite{Moses2023}. The proofs of the following results are presented in Methods.

\begin{theorem}[Number of gates: complete case]\label{thm:gates_complete_case}
   The gate cost of Protocol~\ref{alg:GHZ_network_complete} in the complete case is $N-2$ gates, where $N$ is the total number of nodes in the network.
\end{theorem}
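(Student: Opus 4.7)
The plan is to decompose the CNOT cost of Protocol~\ref{alg:GHZ_network_complete} into its Step~12 contribution (creating the local GHZ states) and its Step~13 contribution (fusing them), and to show that the explicit dependence on the number of stars cancels.

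First I would fix notation: let $K$ be the number of stars in \texttt{MSG} at the end of Step~11, and let $k_1,\dots,k_K$ denote their sizes (number of nodes). The key structural fact that I need is the identity $\sum_{i=1}^K k_i = N + (K-1)$. To prove it, I would observe that Steps~8--10 ensure that every star added to \texttt{MSG} after the first shares exactly one node with the union of the previously added stars (since all other common nodes are deleted), while Steps~5 and~6 guarantee that every one of the $N$ nodes lies in at least one star. Therefore, summing the star sizes counts every node in the network exactly once, except for the $K-1$ ``gluing'' nodes which are each double-counted, giving the claimed identity. A clean way to present this is by induction on $i$: the partial union $\texttt{S0}\cup\dots\cup\texttt{S}i$ covers $\big(\sum_{j\le i} k_j\big)-i$ distinct nodes.

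Next I would invoke the two gate counts stated just before the theorem. Protocol~\ref{alg:Bell_star_to_GHZ} applied to a star of $k_i$ nodes uses $k_i-2$ CNOT gates, with the convention that a 2-node ``star'' produced in Step~6 contributes zero gates (consistent with the fact that a lone Bell pair already \emph{is} a $\ket{\text{GHZ}_2}$ and requires no fusion). Summing over all stars in Step~12 gives
\begin{equation}
\sum_{i=1}^{K}(k_i-2)=\bigl(N+K-1\bigr)-2K=N-K-1.
\end{equation}
For Step~13, the $K$ local GHZ states are arranged in a tree under the ``shares a node'' relation, so they can be fused by $K-1$ applications of the pairwise merge subroutine, each of which uses a single CNOT, contributing $K-1$ CNOTs.

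Adding the two contributions yields $(N-K-1)+(K-1)=N-2$, which is the claimed bound. No step is conceptually difficult; the only thing that needs genuine care is the bookkeeping identity $\sum_i k_i=N+K-1$, because one must be sure that Step~10 never deletes \emph{both} copies of a shared node and that the sharing graph on the stars in \texttt{MSG} is indeed a tree (not a forest and not containing cycles), which follows from the ``shares at least one node with \emph{any} star in \texttt{MSG}'' requirement in Step~8 together with the deduplication of Step~10.
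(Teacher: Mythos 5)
Your proposal is correct and follows essentially the same route as the paper's proof: decompose the cost into $\sum_i (k_i-2)=N-K-1$ for the local GHZ creations plus $K-1$ for the pairwise fusions, using the same bookkeeping identity $\sum_i k_i = N+K-1$ that the paper states as ``there will be $k-1$ nodes that are repeated one time.'' The only difference is that you justify that identity more carefully (tracing it to Steps~8--10 and noting the degenerate two-node stars from Step~6), which is a welcome tightening but not a different argument.
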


\begin{theorem}[Number of gates: subset case]\label{thm:gate_cost_subset}
   In the subset case, the gate cost of our protocol is $S-2$ gates where $S$ is the total number of nodes in the subgraph (the subgraph obtained using Protocol~\ref{alg:GHZ_network_subset}).
\end{theorem}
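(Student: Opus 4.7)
The plan is to reduce the subset case directly to the complete case already handled by Theorem~\ref{thm:gates_complete_case}. The key observation is that the subset version of our protocol proceeds in two stages: first, Protocol~\ref{alg:GHZ_network_subset} is executed to produce a connected subgraph $G'$ of $G$ that contains all desired nodes, with $|V(G')| = S$; second, Protocol~\ref{alg:GHZ_network_complete} is executed on $G'$ exactly as in the complete case, and finally the nodes in $V(G')$ that are outside the target subset are removed via $X$-measurements.

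I would first note that the CNOT-gate counting in Theorem~\ref{thm:gates_complete_case} is agnostic to whether the input graph arose as the original network or as a subgraph obtained from Protocol~\ref{alg:GHZ_network_subset}. That theorem bounds the two-qubit gate cost of Protocol~\ref{alg:GHZ_network_complete} in terms of the number of vertices alone, using only the facts that (i) the small stars chosen in \texttt{MSG} have sizes $n_1,\ldots,n_k$ with $\sum_i n_i = |V|+k-1$, and (ii) the merging phase uses $k-1$ CNOTs. Applying this bookkeeping with $|V|$ replaced by $S$ immediately yields $S-2$ gates for the stage that runs Protocol~\ref{alg:GHZ_network_complete} on $G'$.

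It then remains to account for the final step of removing the qubits in $V(G') \setminus \texttt{DN}$. This is done by measuring each such qubit in the Pauli-$X$ basis, which is a single-qubit measurement (possibly followed by a single-qubit Pauli correction on a neighboring qubit to maintain the GHZ form). Since the gate-count convention established just before Theorem~\ref{thm:gates_complete_case} counts only two-qubit gates, these operations contribute nothing to the tally. Combining both stages gives the claimed total of $S-2$ gates.

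I do not anticipate a genuine obstacle here: the content is essentially a reduction to Theorem~\ref{thm:gates_complete_case} together with a remark that the subset-restriction step is gate-free under our accounting. The only point that deserves a sentence of care is making explicit that ``$S$ is the number of vertices of the connected subgraph produced by Protocol~\ref{alg:GHZ_network_subset}'', not the size of \texttt{DN} itself, and that the $X$-measurement clean-up at the end is not counted among the gates.
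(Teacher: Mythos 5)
Your proposal is correct and matches the paper's approach: the paper's proof simply states that the argument is analogous to Theorem~\ref{thm:gates_complete_case}, which is exactly the reduction you carry out, with the bookkeeping applied to the $S$-vertex subgraph from Protocol~\ref{alg:GHZ_network_subset}. Your additional remark that the final $X$-measurement clean-up contributes no two-qubit gates is a useful explicit detail the paper leaves implicit.
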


The proof of Theorem~\ref{thm:gate_cost_subset} is analogous to the proof of Theorem~\ref{thm:gates_complete_case}.

It is notable here that the number of gates depends \textit{only} on the number of nodes in the subgraph---no specific properties of the structure/topology/connectivity of the graph matters, except for the fact that it should be connected. In particular, this means that if we take any fraction $f\in(0,1]$ of the nodes $N$ of our network to comprise our subset, then the size $S$ of the subgraph containing these nodes satisfies $S\geq Nf$, which means that the number of gates is bounded from below as $S-2\geq Nf-2$.

\paragraph*{Comparison to Ref.~\texorpdfstring{\cite{MMG19}}{MMG}.}

The gate cost of the star expansion protocol of Ref.~\cite{MMG19} is as follows. If $k$ is the degree of the node on which the star expansion protocol is performed, then:
\begin{itemize}
    \item If the central node is to be kept, then the number of gates is $g_1(k)\coloneqq\binom{k}{2}+2k-1=\frac{1}{2}k^2+\frac{3}{2}k-1$.
    \item If the central node is to be discarded, then the number of gates is $g_2(k)\coloneqq\binom{k}{2}+k=\frac{1}{2}k^2+\frac{1}{2}k$.
\end{itemize}
Using this, we can establish the following fact about the number of gates required to create a GHZ state according to the protocol in Ref.~\cite{MMG19}.

\begin{proposition}[Upper bound on gate cost of the MMG protocol~{\cite{MMG19}}]\label{prop:gate_cost_MMG}
    For any graph with $N$ nodes, the number of gates required to create a GHZ state among all $N$ nodes, according to the MMG protocol, scales as $O(N^2)$.
\end{proposition}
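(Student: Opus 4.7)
The plan is to trace through the MMG protocol's structure and use the elementary fact that in a tree the sum of squared degrees is $O(N^2)$. Specifically, as described earlier in the excerpt, the MMG protocol first produces a tree $T$ (a spanning tree in the complete case, a Steiner tree in the subset case) whose vertex set is a subset of the $N$ nodes of the original graph, and then performs a star expansion at each internal node of $T$. The per-star cost is $g_1(k)$ or $g_2(k)$, both of which are $\tfrac{1}{2}k^2 + O(k)$, where $k$ is the \emph{tree} degree of the node being expanded.

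The key step is to bound the total gate count
\begin{equation}
G \;\leq\; \sum_{v \in V(T)} \max\{g_1(\deg_T(v)),\, g_2(\deg_T(v))\} \;=\; O\!\Bigl(\sum_{v \in V(T)} \deg_T(v)^2\Bigr).
\end{equation}
Since $T$ is a tree on at most $N$ vertices, its degree sequence satisfies $\sum_v \deg_T(v) = 2(|V(T)|-1) \leq 2(N-1)$ and $\max_v \deg_T(v) \leq N-1$. Therefore
\begin{equation}
\sum_{v \in V(T)} \deg_T(v)^2 \;\leq\; \Bigl(\max_{v} \deg_T(v)\Bigr) \sum_{v \in V(T)} \deg_T(v) \;\leq\; (N-1)\cdot 2(N-1) \;=\; O(N^2),
\end{equation}
which gives $G = O(N^2)$ and proves the proposition.

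The only subtlety — and the main thing to be careful about — is to argue that the star expansions really are performed on tree degrees, not on the (possibly $\Theta(N)$) degrees of the original graph, and that no additional expansions are triggered elsewhere in the protocol. Once this is verified from the description of the MMG protocol, the bound follows from the two elementary inequalities above. A brief remark noting that the bound $(N-1)\cdot 2(N-1)$ is tight, in the sense that a star on $N$ vertices saturates the quadratic scaling via a single expansion at the hub, would also clarify why the $O(N^2)$ scaling cannot be improved in general for the MMG protocol.
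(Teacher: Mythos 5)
Your proposal is correct and follows essentially the same route as the paper: both bound the total cost by summing the quadratic per-node star-expansion cost over the tree's vertices and then use $\sum_v \deg_T(v) = 2(|V(T)|-1)$ to conclude $O(N^2)$, differing only in the elementary inequality used to bound $\sum_v \deg_T(v)^2$ (you use $\bigl(\max_v \deg_T(v)\bigr)\sum_v \deg_T(v)$, the paper uses $\bigl(\sum_v \deg_T(v)\bigr)^2$). Your added tightness remark and the caution about tree degrees versus original-graph degrees are sensible but do not change the argument.
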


Let us now also observe that we can significantly reduce the gate cost of the star expansion Protocol by noticing that the end result of the protocol is a GHZ state, either with or without the central node. If the central node is to be excluded, then by the Bipartite~B protocol of Ref.~\cite{CC12}, the gate cost is $k-1$. If the central node is to be included, then the gate cost is also $k-1$, using Protocol~\ref{alg:Bell_star_to_GHZ}. This represents a quadratic reduction in gate cost, from $O(k^2)$ (as established in Ref.~\cite{MMG19}) to $O(k)$.

\paragraph*{Comparison to Ref.~\texorpdfstring{\cite{fischer2021graph}}{Fischer}.} The lower bound on gate count in Ref.~\cite{fischer2021graph} is reached when a single node is connected to all other nodes in the graph, resulting in \( N - 2 \) gates—consistent with our protocol's gate count. However, as we prove below, the upper bound in Ref.~\cite{fischer2021graph} scales as \( O(N^2) \). Therefore, our protocol generally achieves a lower gate count, except in the specific scenario where one node connects to all others, in which case our protocol matches the gate count in Ref.~\cite{fischer2021graph}.

\begin{proposition}[Upper bound on gate cost in Ref.~{\cite{fischer2021graph}}]\label{prop:gate_cost_fisher}
    For a graph with \( N \) nodes, the upper bound on the number of gates required to create a GHZ state across all \( N \) nodes, as specified by the protocol in Ref.~\cite{fischer2021graph}, scales as \( O(N^2) \).
\end{proposition}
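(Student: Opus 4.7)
The plan is to decompose the protocol of Ref.~\cite{fischer2021graph} into its two natural stages --- local GHZ creation at a chosen source node, followed by path-based distribution of the remaining qubits --- and upper-bound the gate cost of each stage separately. First I would recall that the protocol generates the full $N$-qubit GHZ state locally at a single source node, and then distributes each of the $N-1$ non-source qubits to its destination node along a selected path of the Bell-pair graph. The local creation stage requires only $O(N)$ two-qubit gates, by the same star-fusion reasoning used for Protocol~\ref{alg:Bell_star_to_GHZ}, so the asymptotic cost of the whole protocol is controlled by the distribution stage.

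For the distribution stage, transporting a qubit across a Bell-pair path of length $\ell$ --- whether via iterated entanglement swapping or via sequential teleportation --- contributes $O(\ell)$ two-qubit gates in total, with each hop of the chain costing $O(1)$ CNOTs (Pauli corrections are not counted, consistent with the convention adopted elsewhere in this section). Let $\ell_i$ denote the length of the path along which the $i$-th non-source qubit is delivered. Then the total distribution gate count is at most $c\sum_{i=1}^{N-1}\ell_i$ for some absolute constant $c$. The key combinatorial observation is then that in any connected graph on $N$ vertices, every such path length is bounded by the diameter, hence by $N-1$, so
\begin{equation}
\sum_{i=1}^{N-1}\ell_i \;\leq\; (N-1)^2 \;=\; O(N^2),
\end{equation}
which, combined with the $O(N)$ local-creation cost, yields the claimed $O(N^2)$ upper bound.

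To round out the picture I would also exhibit a worst-case family showing that the bound cannot be improved in general: for the path graph $P_N$ with the source placed at one endpoint, one has $\sum_i \ell_i = N(N-1)/2 = \Theta(N^2)$, and this is precisely the regime in which our Protocol~\ref{alg:GHZ_network_complete} strictly improves on Ref.~\cite{fischer2021graph}, from $\Theta(N^2)$ down to $N-2$. The step I expect to be the main obstacle, and the only one requiring a careful reading of the original protocol rather than a generic routing argument, is pinning down that each hop of the distribution stage really costs only $O(1)$ two-qubit gates --- in particular, checking that no extra factor depending on local node degree or on ancillary memory qubits sneaks into the per-hop overhead. Once this $O(1)$-per-hop claim is verified against the description in Ref.~\cite{fischer2021graph}, the summation above completes the proof.
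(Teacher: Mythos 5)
Your proposal is correct, and it takes a broader route than the paper. The paper's proof consists \emph{only} of your final paragraph: it considers the chain topology with the root at one endpoint, counts the connection transfers as $(N-2)+(N-3)+\dotsb+0=\frac{N(N-3)}{2}+1$, and concludes $O(N^2)$. In other words, the paper exhibits a single worst-case family achieving quadratic cost, which establishes that the protocol of Ref.~\cite{fischer2021graph} can require $\Theta(N^2)$ gates, but it does not supply the complementary general argument that \emph{no} graph forces more than $O(N^2)$ gates. Your two-stage decomposition (local creation in $O(N)$ gates, then path-based delivery at $O(1)$ gates per hop with each path of length at most $N-1$, summed over $N-1$ qubits) fills exactly that gap, so your version actually proves the literal statement ``the upper bound scales as $O(N^2)$'' for arbitrary connected graphs, while the paper's proof is better read as showing the worst case is at least quadratic --- which is what matters for the comparison against the $N-2$ gate count of Protocol~\ref{alg:GHZ_network_complete}. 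Both arguments rest on the same unverified-in-detail premise that each connection transfer/hop costs a constant number of two-qubit gates (the paper simply asserts ``each transfer utilizing one gate''), so the obstacle you flag is shared; the only caveat in your general bound is that bounding each $\ell_i$ by $N-1$ presumes the routing uses simple paths, which is the case for the path-selection algorithms of Ref.~\cite{fischer2021graph}. Note also a harmless bookkeeping discrepancy: the paper counts $N-2$ transfers for the farthest node on the chain (distance minus one), whereas you count $\ell_i$ per qubit, giving $N(N-1)/2$ versus the paper's $\frac{(N-1)(N-2)}{2}$; both are $\Theta(N^2)$.
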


\subsection*{Bell-pair cost}\label{sec-analytical_results_Bell_pairs}

\begin{theorem}[Number of Bell pairs used: complete case]\label{full_edges_proof}
     For the complete case of our protocol, the number of Bell pairs used is equal to $N-1$. This matches the lower bound of $N-1$, which is the minimum number of Bell pairs needed to create an arbitrary graph state.
\end{theorem}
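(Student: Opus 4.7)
The argument splits into an upper bound (the protocol actually consumes $N-1$ Bell pairs) and a matching lower bound (no LOCC protocol starting from Bell pairs on a connected subgraph can do better). The upper bound is the more structural part, and it is where I would spend most of the effort; the lower bound is essentially a one-line connectivity argument.

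For the upper bound, the plan is to show that after Steps~7--11 of Protocol~\ref{alg:GHZ_network_complete} the union of the modified stars in \texttt{MSG} is a spanning tree of the input graph $G$ (this is essentially the content of Remark~5, which I would make rigorous). I would proceed by induction on the stars \texttt{S0},\texttt{S1},\texttt{S2},\ldots in the order they are added to \texttt{MSG}. The key invariant to maintain is: after processing the $j$-th star, the union of modified stars in \texttt{MSG} is a connected, acyclic subgraph (i.e., a tree). The base case is trivial since \texttt{S0} is itself a star, hence a tree. For the inductive step, Step~10 removes all nodes of \texttt{S1} that are shared with the current \texttt{MSG} except for exactly one (either the central node of \texttt{S1}, if that is the shared node, or a single randomly chosen shared node). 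Consequently, the modified \texttt{S1} attaches to the existing tree at exactly one vertex, and therefore the union remains connected and acyclic. Step~5 guarantees all $N$ nodes of $G$ are covered, and Step~6 guarantees connectivity at the final stage (so the tree is spanning). Once this structural claim is established, counting Bell pairs is straightforward: if \texttt{MSG} contains $K$ modified stars with $k_1,k_2,\ldots,k_K$ edges respectively, then the total number of vertices covered is $1+\sum_{i=1}^K k_i$ (one root plus one new vertex per edge in a tree), which must equal $N$, giving $\sum_i k_i = N-1$. Since Protocol~\ref{alg:Bell_star_to_GHZ} consumes exactly $k_i$ Bell pairs per star of size $k_i+1$ and the fusion Protocol~\ref{alg:multiple_GHZ_merge_tree_protocol_main} consumes no additional Bell pairs, the total cost is $N-1$.

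For the lower bound, I would argue by contradiction: suppose an LOCC protocol starting from a subset $B$ of Bell pairs in the network produces a GHZ state (or any genuine multipartite entangled state) on all $N$ nodes using $|B|<N-1$ Bell pairs. Then the subgraph induced by $B$ has strictly fewer than $N-1$ edges on $N$ vertices and is therefore disconnected. Since LOCC cannot create entanglement between parties that do not share any pre-existing entanglement---neither directly nor through intermediaries---the final state must be separable across the disconnected components of $B$, contradicting the fact that a GHZ state on $N$ parties has genuine $N$-partite entanglement. Hence $|B|\geq N-1$.

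\textbf{Main obstacle.} The routine piece is the counting; the real work is verifying the tree invariant under the modifications in Steps~7--11. In particular, one must carefully argue that every star added after \texttt{S0} shares \emph{exactly} one node with the union of previously-added modified stars (Step~9 handles the degenerate case where all nodes of \texttt{S1} are already present, by discarding \texttt{S1} entirely, which is essential to avoid introducing cycles), and that Step~10 correctly leaves exactly one shared vertex regardless of whether the central node of \texttt{S1} is the shared one. Making this casework airtight, and confirming that the edge-pruning inside Step~10 does not disconnect any star from its designated attachment point, is the step I expect to require the most care.
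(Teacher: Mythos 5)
Your proposal is correct and takes essentially the same route as the paper: a connectivity argument for the lower bound of $N-1$, and for the upper bound the observation that Steps~7--11 leave exactly one shared node between each newly added star and the existing union, so that the modified stars form a spanning tree with $N-1$ edges. The paper's own proof states this much more tersely (it simply asserts that a single common node is necessary and sufficient and that Step~10 enforces this); your inductive verification of the tree invariant and the explicit edge count are a more careful write-up of the same idea, not a different argument.
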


Let us now examine the number of Bell pairs in the subset case by introducing the concept of the graph diameter. The diameter of an unweighted graph is defined as the length of the longest shortest path between any two nodes. In simpler terms, it is the greatest distance, measured by the number of edges, between any pair of nodes. Here, the shortest path refers to the path connecting two nodes with the fewest edges.

\begin{theorem}[Number of Bell pairs used: subset case]\label{subset_edges_proof}
    Let $S$ be the size of the subgraph containing the desired subset of nodes, as generated from Protocol~\ref{alg:GHZ_network_subset}. The number of Bell pairs needed to generate a GHZ state is bounded from below by $S-1$ and bounded from above by $d_S(S-1)$, where $d_S$ is the diameter of the connected subgraph containing the desired subset of nodes, as generated from Protocol~\ref{alg:GHZ_network_subset}.
\end{theorem}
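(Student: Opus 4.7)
The plan is to establish the two inequalities separately, each drawing on Theorem~\ref{full_edges_proof} applied to the subgraph $G'$ of size $S$ produced by Protocol~\ref{alg:GHZ_network_subset}, supplemented by a diameter-based argument for the upper side.

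For the lower bound $S - 1$, I would argue as follows. By construction, $G'$ is connected and contains all of the desired nodes (plus any Steiner-like nodes retained after leaf pruning). Any LOCC procedure that generates a GHZ state among all $S$ vertices of $G'$ must consume Bell pairs whose underlying edges form a connected spanning subgraph of $G'$; otherwise the final state could not be correlated across the disconnected components. A connected graph on $S$ vertices has at least $S - 1$ edges, so at least $S - 1$ Bell pairs are needed. This lower bound is also tight: applying Theorem~\ref{full_edges_proof} to $G'$ (i.e.\ running Protocol~\ref{alg:GHZ_network_complete} on $G'$ in the complete-case) yields a GHZ state while consuming exactly $S - 1$ Bell pairs.

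For the upper bound $d_S(S - 1)$, the key observation is that $d_S$ is the diameter of $G'$, so any two vertices of $G'$ are joined by a path of length at most $d_S$. Consider an alternative spanning construction in which, starting from an arbitrary root, one attaches the remaining $S - 1$ vertices one at a time, each time via a path in $G'$. Every such attachment uses at most $d_S$ edges, so the total Bell-pair consumption is bounded above by $d_S(S - 1)$. Hence any reasonable spanning construction, including the tree produced by Protocol~\ref{alg:GHZ_network_complete}, consumes no more than $d_S(S - 1)$ Bell pairs, which proves the upper bound.

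The main obstacle I anticipate is ensuring the upper-bound argument remains watertight when paths from different attachments share edges: such sharing can only decrease the actual number of Bell pairs consumed relative to the naive sum of path lengths, so the bound $d_S(S - 1)$ remains a valid worst-case estimate, albeit not a tight one for our specific protocol. I would close by noting that Protocol~\ref{alg:GHZ_network_complete} saturates the lower bound $S-1$, so the range $[S-1, d_S(S-1)]$ is correct, with our protocol sitting at the optimal end.
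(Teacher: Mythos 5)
Your proposal follows essentially the same route as the paper: the lower bound comes from the connectivity argument of Theorem~\ref{full_edges_proof} applied to the subgraph (and is saturated by Protocol~\ref{alg:GHZ_network_complete}), and the upper bound comes from counting $S-1$ attachments each costing at most $d_S$ Bell pairs, which is exactly the paper's diameter argument. Your added remark that shared edges among attachment paths can only reduce the count is a small but welcome tightening of a step the paper leaves implicit.
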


\paragraph*{Comparison to Refs.~\texorpdfstring{\cite{MMG19,fischer2021graph}}{MMG, Fischer}.} In the complete case, for the MMG protocol~\cite{MMG19}, the number of Bell pairs required is $N - 1$, as it corresponds to the size of the minimum spanning tree. This is the same number of Bell pairs used in our protocol (Theorem~\ref{full_edges_proof}). However, the approach in Ref.~\cite{fischer2021graph} requires additional Bell pairs to connect the desired node to every other node, unlike our protocol and the MMG protocol. In the subset case, the number of Bell pairs needed in the MMG protocol is equal to the size of the tree, which will generally exceed $N_s - 1$, where $N_s$ is the number of nodes in the subset. This is because the Steiner tree will typically include more nodes than just the terminal nodes. In Ref.~\cite{fischer2021graph}, the same logic applies as in the complete case.

\subsection*{Bell-pair source cost}\label{sec-Bell-pair-sources}

It is commonly assumed that a Bell pair source must be placed at the midpoint of each utilized edge. Ideally, this allows all Bell pairs to be distributed simultaneously in a single time step. However, sources such as quantum dots can generate Bell pairs at very high rates (GHz range)~\cite{vanLoock2020}. This opens up an alternative approach to designing the placement of Bell pair sources. Instead of placing a source on every edge, one can place sources at selected nodes, and then have each source generate and distribute Bell pairs to its neighbors. While this introduces a small overhead due to sequential pair generation, it does not significantly affect timing thanks to the high repetition rates. Furthermore, and importantly, the number of edges in a network scales quadratically with the number of nodes, while the number of nodes grows linearly. Therefore, placing sources at the nodes rather than at edges can offer a significant advantage in terms of reducing the total number of sources required. With careful placement, a small number of node-based sources can efficiently cover the entire network. Now the key question becomes: given a network topology, what is the minimal number of Bell pair sources required? As we now show, this optimization problem is equivalent to the well-known minimum dominating set problem.

\begin{theorem}[Bell-pair source cost]\label{thm:bell_pair_dominating_set}
The problem of determining the minimum number of Bell pair sources required to cover all nodes in a quantum network is equivalent to the minimum dominating set problem on the network graph.
\end{theorem}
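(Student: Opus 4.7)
The plan is to prove the two problems are equivalent by exhibiting a direct correspondence between valid Bell-pair source placements and dominating sets of the network graph $G=(V,E)$, and then arguing that this correspondence preserves cardinality, so the minima coincide. Recall that a dominating set of $G$ is a subset $D\subseteq V$ such that every $v\in V\setminus D$ has at least one neighbor in $D$. The key physical observation driving the equivalence is the one highlighted in the paragraph preceding the theorem: a source placed at node $v$ can generate a Bell pair between $v$ and any of its neighbors, so a node $u$ can participate in a Bell pair (and hence in the eventual GHZ state) if and only if $u$ hosts a source or is adjacent to a node that does.

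For the forward direction, I would show that any dominating set $D\subseteq V$ yields a valid source placement. Suppose we place one Bell-pair source at each node $v\in D$. For any node $u\in V$: either $u\in D$, in which case it hosts its own source and can generate Bell pairs with each of its neighbors; or $u\notin D$, in which case by the dominating-set property there exists $v\in D$ with $\{u,v\}\in E$, and the source at $v$ can produce a Bell pair along the edge $\{u,v\}$. Thus every node in $V$ is covered, so $|D|$ sources suffice.

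For the reverse direction, I would show that any valid placement $S\subseteq V$ of Bell-pair sources covering all nodes must be a dominating set. Take any node $u\in V\setminus S$. Because $u$ hosts no source, the only way $u$ can participate in a Bell pair is via a source at a neighbor: there must exist $v\in S$ with $\{u,v\}\in E$. Hence every $u\in V\setminus S$ has a neighbor in $S$, which is exactly the definition of a dominating set. Combining the two directions, the collection of feasible source placements coincides exactly with the collection of dominating sets of $G$, and so the minimum cardinality in each problem equals that in the other.

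The two directions above are straightforward once the modeling assumption is nailed down, so I do not expect a serious technical obstacle in the argument itself. The main point requiring care is the formalization step at the start: one must make explicit the assumption (already justified in the preceding paragraph of the paper on high-repetition-rate sources) that a single source at a node $v$ suffices to generate Bell pairs with all of its neighbors sequentially, so that the constraint on $S$ reduces cleanly to the graph-theoretic covering condition rather than to a more complicated edge-covering condition. With that assumption in place, the equivalence is essentially immediate, and the NP-hardness of minimum dominating set~\cite{dom_set_np} transfers directly, which is the consequence the authors will want to invoke later.
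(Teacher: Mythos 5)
Your proposal is correct and follows essentially the same route as the paper's proof: both reduce the problem to the observation that a node is covered iff it lies in the closed neighborhood of some source node, so feasible placements are exactly dominating sets and the minima coincide. Your version merely makes the two directions of the equivalence explicit, which the paper states more compactly as a single definitional identification.
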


Since the minimum dominating set problem is known to be NP-hard \cite{dom_set_np}, the Bell pair source placement problem inherits this computational complexity. However, a simple structural observation provides a useful upper bound on the number of sources required. In any connected graph, each leaf node (a node of degree one) is connected to exactly one neighbor, which must be an internal node (i.e., node with degree greater than one). We begin by placing a source at every internal node.  By placing sources on all internal nodes, we ensure that every leaf node is adjacent to a source. Since a source can distribute Bell pairs to all its neighbors, each leaf node can receive a Bell pair from its connected internal node. Therefore, placing a source at a leaf node is unnecessary. Thus, placing sources only at all internal nodes suffices to cover the graph, giving a simple upper bound on the minimum number of Bell pair sources required. We emphasize that this is generally a better upper bound compared to placing the sources at the midpoints of the edges, based on the reasoning presented at the beginning of this subsection.

\subsection*{Bell pairs vs. GHZ states as building blocks}

Throughout the protocols presented so far, Bell pairs serve as the fundamental building blocks, and we assume the availability of sources capable of producing them. Importantly, Bell-pair sources are not required at every node; instead, their placement is chosen strategically, as specified in Protocol~\ref{alg:GHZ_network_complete} (Step 12, center of each star in MSG). A natural alternative is to consider sources that directly produce GHZ states. One key distinction is that Bell pairs are universal building blocks for entangled graph states: in principle, any graph state can be generated from Bell pairs~\cite{CC12}. By contrast, when GHZ states are used as the primitive resource, it is not clear whether other classes of graph states can be generated. Nevertheless, if the sole objective is the creation of a GHZ state, then using GHZ states as building blocks introduces nontrivial trade-offs in terms of resources and performance.

The main difference between distributing Bell pairs and distributing a GHZ state lies in the role of quantum memories. When Bell pairs are used, they do not need to be established simultaneously: thanks to quantum memories, each Bell pair can be generated independently and stored until all required links are available. Assuming, for simplicity, equal segment lengths, if $\eta$ denotes the transmission efficiency of the optical fiber for a single segment, the effective success probability for all Bell pair remains $\eta$. In contrast, if a GHZ state is produced directly as a building block, all qubits must be transmitted successfully in the same attempt. In this case, the overall transmission efficiency scales as $\eta^d$, where $d$ is the degree of the node. Consequently, from a rate perspective, Bell pairs offer a clear advantage. However, with respect to fidelity, some Bell pairs may be stored in memory for longer periods than others, potentially leading to nonuniform noise depending on memory quality and distribution rates. In contrast, when a GHZ state is distributed directly, all nodes receive their qubits simultaneously, which can lead to a higher overall state fidelity. Therefore, we perform a Monte Carlo simulation to compare the distribution of Bell pairs and GHZ states. We consider finite memory lifetimes, measured in discrete time steps, where one time step corresponds to the duration of a single entanglement-generation attempt. Bell pairs are assigned a fixed pre-determined lifetime, after which they are discarded and generated again. A lifetime of one time step becomes equivalent to the GHZ-state distribution case, as all Bell-pair generation attempts must succeed simultaneously. The results are shown in Fig.~\ref{fig:Bellpair-vs-GHZ}.

An additional degree of freedom comes from the choice of the GHZ state size used as a building block. Rather than generating a single GHZ state whose size matches the full degree of the node, one may instead distribute multiple smaller GHZ states and combine them locally. For instance, in the example above, two three-qubit GHZ states could be generated and then merged at the central node. This introduces a trade-off between transmission success probability and the number of local operations required. As a result, the relative performance of Bell pairs versus GHZ states depends on how this trade-off is optimized, and a fair comparison therefore requires a more detailed, quantitative analysis, which we leave for future work.

When GHZ states are used as the building blocks, two-qubit gates are required only to merge smaller GHZ states. Consequently, the total number of two-qubit gates needed is equal to the number of Bell-pair sources minus one, which is also equal to the size of the minimum dominating set, as shown in Theorem~\ref{thm:bell_pair_dominating_set}. By comparison, when Bell pairs are used as building blocks, the number of required two-qubit gates is $n-2$, where $n$ is the number of qubits in the final GHZ state, as proven in Theorem~\ref{thm:gates_complete_case}. In all cases $n-2$ is higher than the size of the dominating set. 

\begin{figure}
    \centering
    \includegraphics[width=0.90\columnwidth]{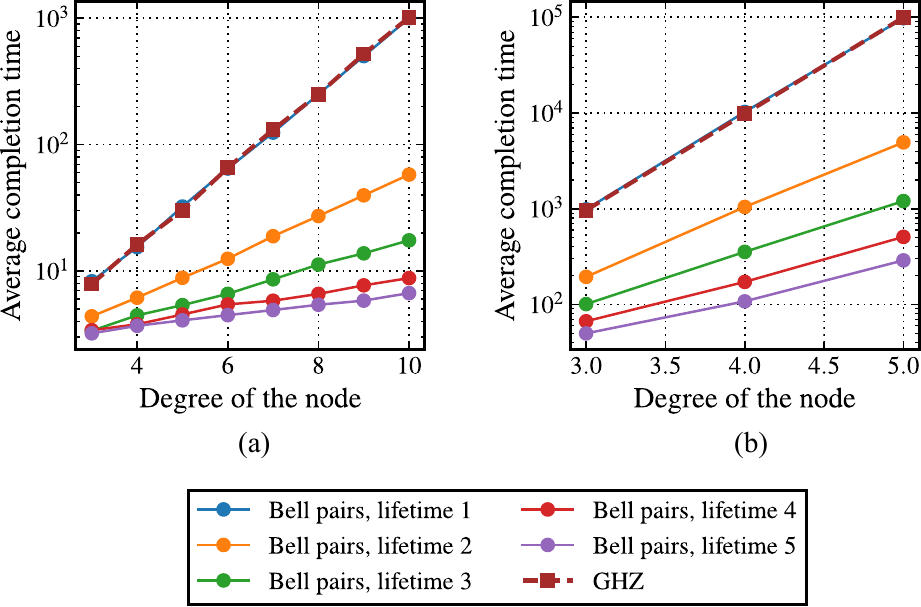}
    \caption{\textbf{Comparing Bell pairs and GHZ states as building blocks in a star network.} The Bell pair lifetime is defined as the number of time steps after which the pair is discarded and freshly generated again. (a) Transmission success probability of 0.5, corresponding to a segment length of 15~km given $L_{\text{att}} = 22$~km according to $p=\exp(-L/L_{\text{att}})$. (b) Transmission success probability of 0.1, corresponding to a segment length of 50~km.}
    \label{fig:Bellpair-vs-GHZ}
\end{figure}


\subsection*{Performance on random networks}\label{sec:numerical_results}

To evaluate our protocol, i.e., calculate the gate and Bell pairs costs, we implemented our algorithms and performed Monte Carlo simulations on various types of random networks~\cite{Bollobas2001_book,newman2001randomgraphs}. We made use of the Python package NetworkX~\cite{networkx}, and for finding Steiner trees, we applied Mehlhorn's approximate algorithm~\cite{mehlhorn1988steiner}. Details on the random network models are provided in Methods.


For our Monte Carlo simulations, we considered network sizes ranging from 100 to 500 nodes for each model, with 500 samples used for each network size. 

In Fig.~\ref{fig:BA_ER_gates_vs_nodes}, we plot the gate cost as a function of the size $S$ of the subgraph. We can see that the gate cost of our protocol agrees with the theoretical bounds discussed above, specifically Theorem~\ref{thm:gate_cost_subset}. Similarly, the gate cost of the MMG protocol~\cite{MMG19} agrees with the upper bound in Proposition~\ref{prop:gate_cost_MMG}. 

\begin{figure}
    \centering
    \includegraphics[width=0.70\columnwidth]{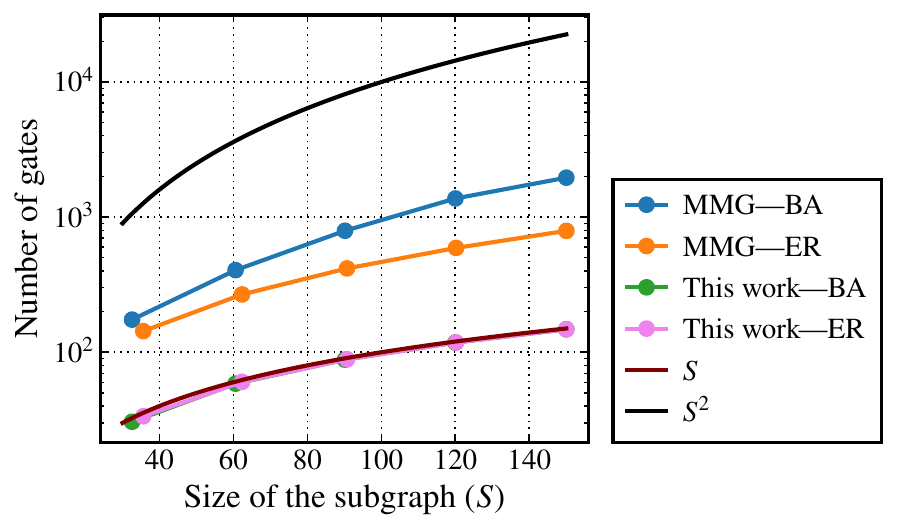}
    \caption{\textbf{Gate-cost comparisons for random networks, part~I.} Number of gates versus subgraph size for Erd\H{o}s--R\'{e}nyi and Barab\'{a}si--Albert networks with up to 500 nodes. We consider a fraction $f=0.3$ of nodes to be entangled, and we set $p=0.05$.} 
    \label{fig:BA_ER_gates_vs_nodes}
\end{figure}

In addition to understanding how the gate cost scales with the number of nodes (as presented in Fig.~\ref{fig:BA_ER_gates_vs_nodes}), we are interested in understanding how the gate cost depends on graphical properties of the networks. In Fig.~\ref{fig:BA_ER_gates_vs_p}, we plot the gate cost as a function of the parameter $p$ characterizing the ER and BA network models. In both models, the parameter $p$ gives an indication of the density of the network, with larger values of $p$ indicating higher density. Intuitively, as the density increases, we might expect the gate cost to decrease; however, as we can see in Fig.~\ref{fig:BA_ER_gates_vs_p}~(left), for the MMG protocol the gate cost actually \textit{increases} with $p$. In contrast, our protocol maintains a nearly constant number of gates because the size of the subgraph does not change beyond a certain value of $p$, as we can see in Fig.~\ref{fig:BA_ER_gates_vs_p}~(right). In particular, beyond that value of $p$, the connected subgraph obtained from Protocol~\ref{alg:GHZ_network_subset} contains only those nodes that are to be entangled, with no additional nodes. 

\begin{figure}
    \centering
    \includegraphics[width=0.95\columnwidth]{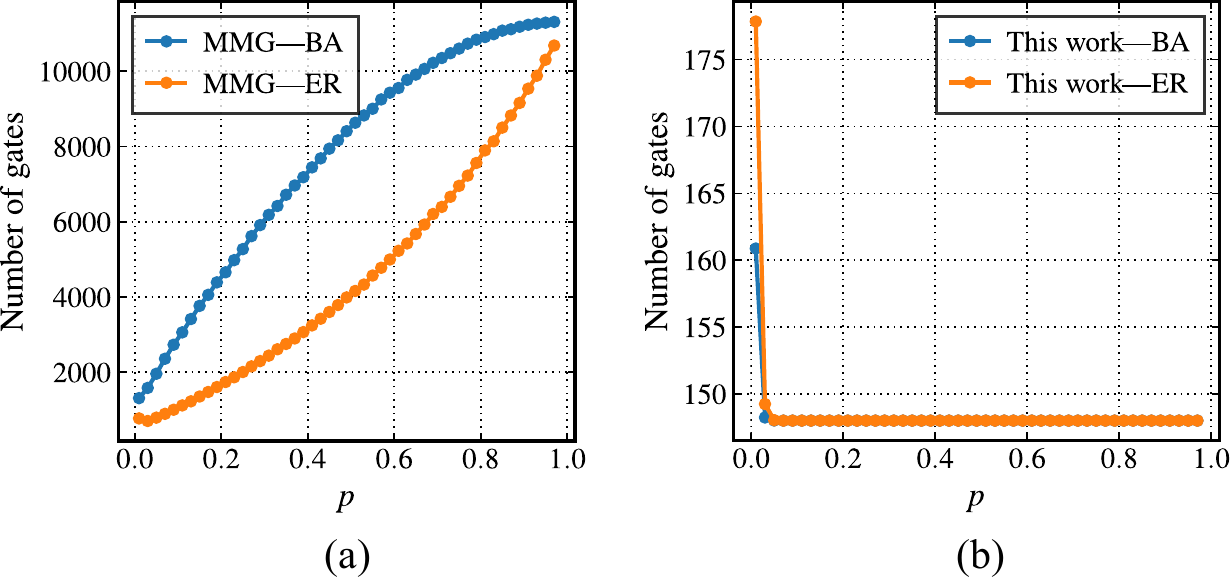}
    \caption{\textbf{Gate-cost comparisons for random networks, part~II.} Gate cost versus $p$ for Erd\H{o}s--R\'{e}nyi and  Barab\'{a}si--Albert networks with 500 nodes and a fraction $f=0.3$ of nodes to be entangled. We compare the MMG protocol (a) with this work (b).}
    \label{fig:BA_ER_gates_vs_p}
\end{figure}

In Fig.~\ref{fig:photonic_1}, we consider the Waxman model, which incorporates the distance between nodes to better reflect realistic networks. We set $\beta$ to unity, and we set the attenuation length between any two nodes to be $L_0 = 22~\text{km}$. Because the Waxman model does not always produce a connected graph, we focus on the largest connected component when analyzing this model. In this model, the largest connected component undergoes a phase transition with respect to diameter; see Fig.~\ref{fig:photonic_1}(a). This phase transition is also reflected in the change of density of the largest connected component, as shown in Fig.~\ref{fig:photonic_1}(b). In Fig.~\ref{fig:photonic_1}(c), for networks with 500 nodes, we can also see the effect of the phase transition on the gate cost of our protocol. Finally, in Fig.~\ref{fig:photonic_1}(d), we show how the gate cost varies with the size of the largest connected component. As with Fig.~\ref{fig:BA_ER_gates_vs_nodes}, the results are in agreement with the analytical results presented above. We observe multiple values for the gate cost when $S=500$, which reflects the dependence of the gate cost of the MMG protocol~\cite{MMG19} on the density of the graph.

\begin{figure}
    \centering
    \includegraphics[width=0.95\columnwidth]{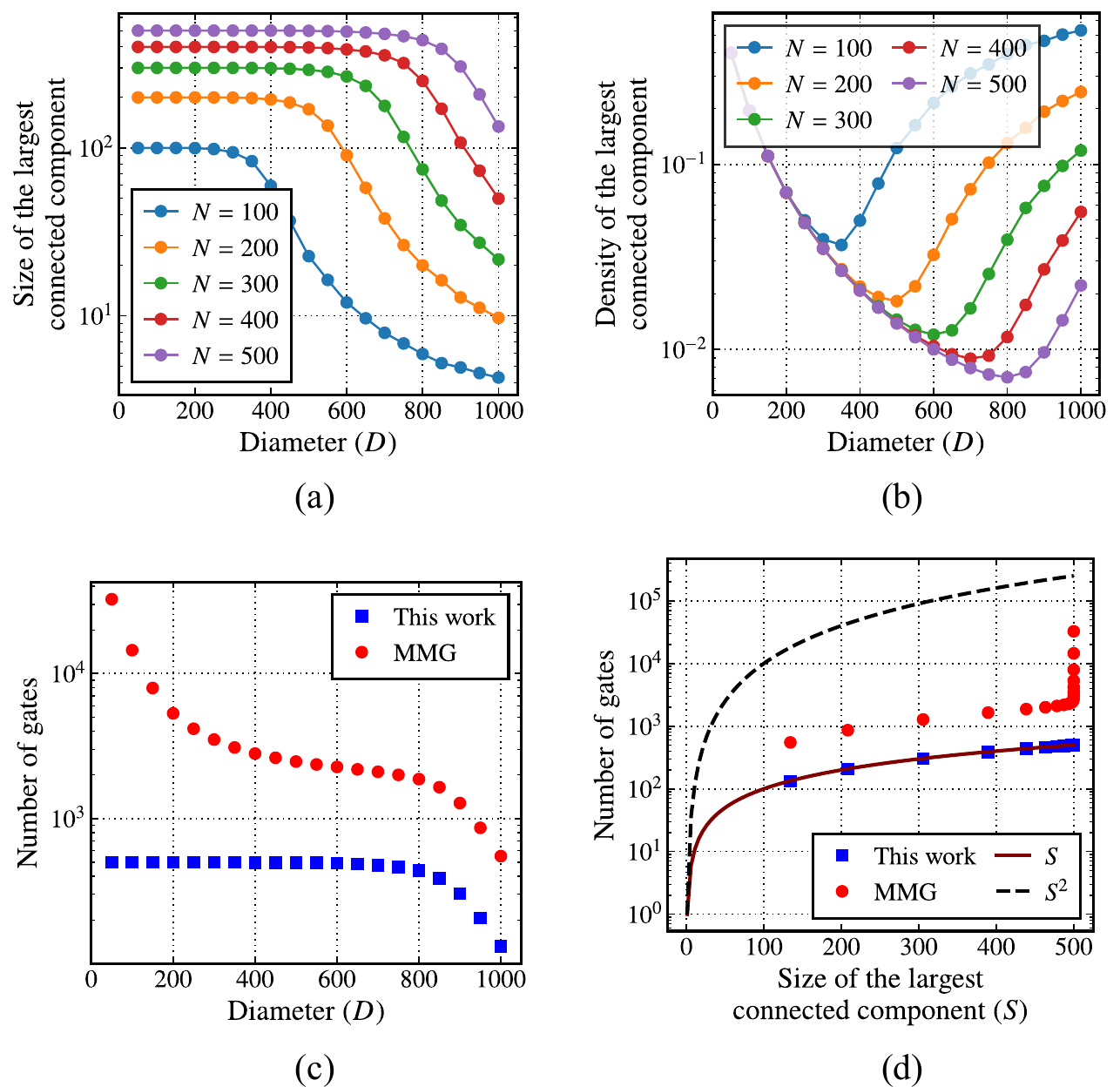}
    \caption{\textbf{Results for the Waxman model with $\beta =1$ and $L_0 = 22~\text{km}$.} (a) and (b): size and density of the largest connected component as a function of diameter. (c): Gate cost as a function of diameter for networks with 500 nodes. (d): Gate cost as a function of the size of the largest connected component.}
    \label{fig:photonic_1}
\end{figure}

Let us now consider the number of consumed Bell pairs. In particular, it is of interest to know, particularly in the subset case, how close to optimal is the number of Bell pairs consumed using Protocol~\ref{alg:GHZ_network_subset} compared to using the Steiner tree. In Fig.~\ref{fig:steiner_vs_ours}, we plot the ratio of the size of the subgraph to the size of the subset (i.e., the number of nodes to be entangled). This ratio tells us how many additional nodes are needed in the subgraph generated by our Protocol~\ref{alg:GHZ_network_subset} compared to the Steiner tree subgraph (generated using the protocol of Ref.~\cite{mehlhorn1988steiner}). We can see that for both the BA and ER models, the ratio of our subgraph is close to that of the Steiner tree, implying the number of consumed Bell pairs is also close. In particular, the ratios appear to be getting closer as the total number $N$ of nodes increases, indicating that our protocol may be optimal asymptotically.

\begin{figure}
    \centering
    \includegraphics[width=0.95\columnwidth]{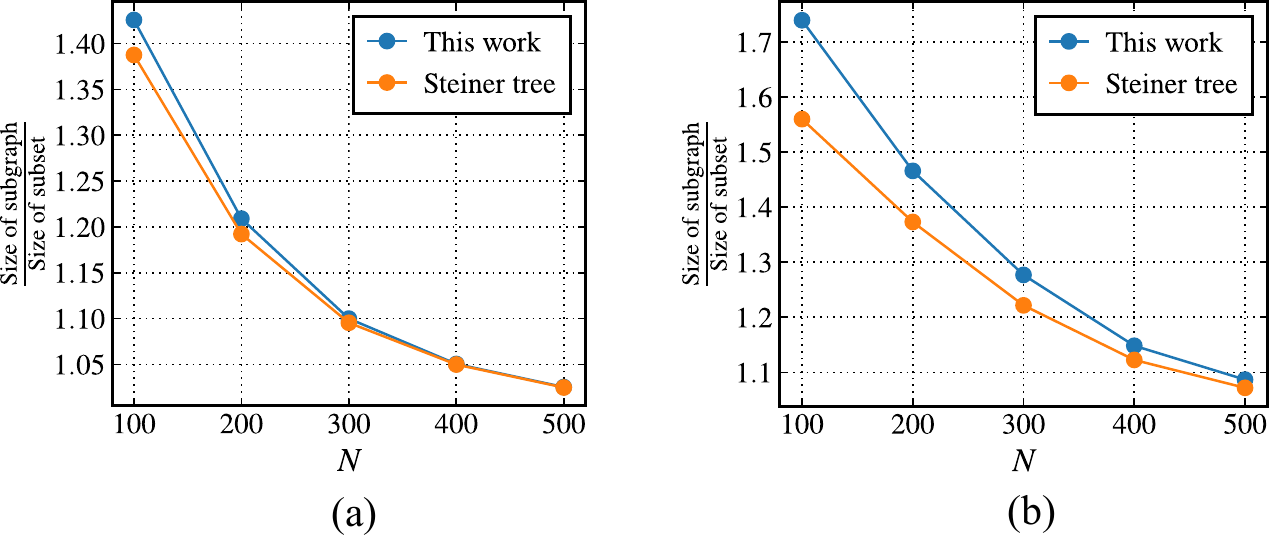}
    \caption{\textbf{Comparison of subgraph generation protocols.} Ratio of the subgraph size to the number of nodes to be entangled, as a function of the total number of nodes, for both the Steiner tree algorithm~\cite{mehlhorn1988steiner} and our Protocol~\ref{alg:GHZ_network_subset}. We consider the Barab\'{a}si--Albert (a) and Erd\H{o}s--R\'{e}nyi (b) models with $p=0.05$ and a fraction $f=0.1$ of nodes to be entangled.}
    \label{fig:steiner_vs_ours}
\end{figure}

\begin{figure}
    \centering
    \includegraphics[width=0.60\columnwidth]{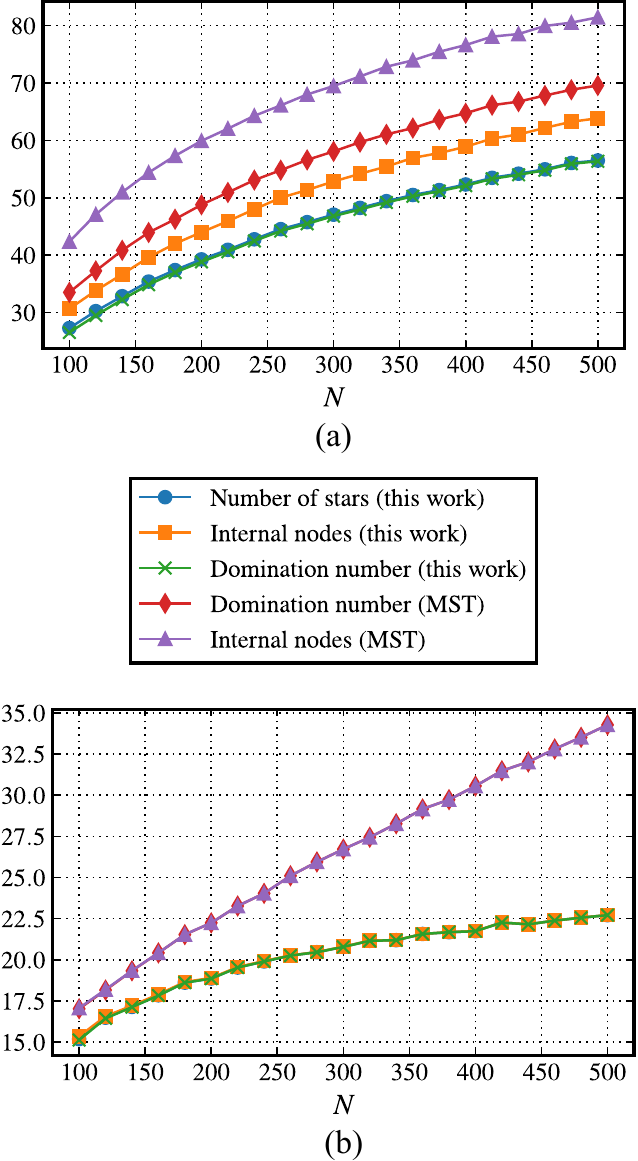}
    \caption{\textbf{Bell-pair source cost comparison for random networks.} The number of Bell pair sources needed as a function of the number of nodes ($N$) for the Erd\H{o}s--R\'{e}nyi model (a) and Barab\'{a}si--Albert model (b) with $p=0.05$. We consider the complete case (GHZ among all the nodes) using Protocol~\ref{alg:GHZ_network_complete} of this work, and we compare it to counting the internal nodes and the (approximate) domination number of the tree produced by both Protocol~\ref{alg:GHZ_network_complete} and the minimum spanning tree (MST).}
    \label{fig:Bell_pair_sources_ER_BA}
\end{figure}

Finally, we assess the number of Bell‐pair sources required by our protocol; see Fig.~\ref{fig:Bell_pair_sources_ER_BA}. 
We know that the size of a minimum dominating set (i.e., the domination number) provides a lower bound on the number of Bell‐pair sources, while the number of internal nodes provides an upper bound. In Fig.~\ref{fig:Bell_pair_sources_ER_BA}, we compare these bounds for the trees generated by our Protocol~\ref{alg:GHZ_network_complete} and those produced by a Minimum Spanning Tree (MST) of the same network. We also include the number of stars, that is, the number of elements in \texttt{MSG}, from Protocol~\ref{alg:GHZ_network_complete}, since \texttt{MSG} captures the list of stars that form the entire network (see Fig.~\ref{fig:GHZ_state_dist_example}, panel~(5)), and one source at the center of each star is sufficient. Importantly, computing the number of elements in \texttt{MSG} requires no additional effort; it is a direct byproduct of executing our Protocol~\ref{alg:GHZ_network_complete}, unlike computing a minimum dominating set or identifying internal nodes. Since finding a Minimum Weight Dominating Set (MWDS) is NP-hard, we employ the approximation algorithm from Ref.~\cite{dom_set_alg_book} in our simulations. In our setting, all nodes are assigned equal weight (set to 1), so the MWDS problem reduces to the classic Minimum Dominating Set (MDS) problem. Notably, we find that the lowest number of Bell-pair sources is achieved only when the MWDS algorithm is applied to the graph generated by Protocol~\ref{alg:GHZ_network_complete}; applying MWDS directly to the MST does not yield better results (see the red and purple curves in Fig.~\ref{fig:Bell_pair_sources_ER_BA}). Finally, we observe that Barabási--Albert networks require fewer Bell‐pair sources than Erdős--Rényi networks. Interestingly, as the network size increases in the Barabási--Albert model, the gap between our protocol and the MST approach also widens.

\subsection*{Performance with noise}\label{sec:noise_case}

In this section, we consider the case that the Bell pairs, locally-created GHZ states, and local gates are noisy. We note that our analysis is more general than the one in Ref.~\cite{CC12}, which only considered Pauli noise models, particularly depolarizing noise, while we consider arbitrary (generally non-Pauli and non-unital) noise models. All proofs are presented in Methods. 

\begin{theorem}[Fidelity after Protocol~\ref{alg:Bell_star_to_GHZ} with noisy Bell pairs]\label{thm:Bell_star_to_GHZ_post_fid}
    If in Protocol~\ref{alg:Bell_star_to_GHZ} the Bell pairs are noisy and given by arbitrary two-qubit density operators $\rho_{A_1B_1}^{(1)}, \rho_{A_2B_2}^{(2)},\dotsc,\rho_{A_nB_n}^{(n)}$, then the fidelity of the state after Protocol~\ref{alg:Bell_star_to_GHZ} with respect to the target GHZ state is given by
    \begin{equation}\label{eq-Bell_star_to_GHZ_post_fid_main}
        \sum_{\vec{z}\in\{0,1\}^{n-1}} \mu_{z'}^{(1)}\mu_{z_1}^{(2)}\mu_{z_2}^{(3)}\dotsb\mu_{z_{n-1}}^{(n)},
    \end{equation}
    where for every $\vec{z}=(z_1,z_2,\dotsc,z_{n-1})\in\{0,1\}^{n-1}$, we have let $z'=z_1\oplus z_2\oplus\dotsb\oplus z_{n-1}$, and $\mu_z^{(i)}=\bra{\Phi^{z,0}}\rho^{(i)}\ket{\Phi^{z,0}}$ for $z\in\{0,1\}$. We have also made use of the Bell state vectors
    \begin{equation}\label{eq-qubit_Bell_states_main}
        \ket{\Phi^{z,x}}\coloneqq(Z^zX^x\otimes\mathbb{I})\ket{\Phi},
    \end{equation}
    for $z,x\in\{0,1\}$. In particular, if the Bell pairs are subjected to the same noise, meaning that $\mu^{(i)}=\mu_z=\bra{\Phi^{z,0}}\rho_{A_iB_i}^{(i)}\ket{\Phi^{z,0}}$ for all $i\in\{1,2,\dotsc,n\}$, $z\in\{0,1\}$, then the fidelity is equal to
    \begin{equation}\label{eq-Bell_star_to_GHZ_post_fid_main_2}
        \frac{1}{2}\big((\mu_0-\mu_1)^n+(\mu_0+\mu_1)^n\big).
    \end{equation}
\end{theorem}


We emphasize that the expression in \eqref{eq-Bell_star_to_GHZ_post_fid_main} holds for noisy Bell pairs with an arbitrary noise model for each Bell pair. Also, the expression in \eqref{eq-Bell_star_to_GHZ_post_fid_main} involves not only the fidelity of the states (with respect to $\ket{\Phi^+}$), given by $\bra{\Phi^{0,0}}\rho_{A_iB_i}^i\ket{\Phi^{0,0}}$, but also the fidelity of the states with respect to $\ket{\Phi^{1,0}}=\frac{1}{\sqrt{2}}(\ket{0,0}-\ket{1,1})$. Also, while a noise analysis was performed in Ref.~\cite{CC12}, the model considered therein was depolarizing noise only. Our results hold for arbitrary noise models, which allows for a more precise and meaningful analysis for specific qubit architectures. 

Now, given a desired fidelity $F$ for the final GHZ state and a number $n$ of noisy Bell pairs impacted by the same type of noise, how much noise can be tolerated while achieving a fidelity of at least $F$ after Protocol~\ref{alg:Bell_star_to_GHZ}? Using \eqref{eq-Bell_star_to_GHZ_post_fid_main_2}, we can simply evaluate the condition
\begin{equation}\label{eq:fidelity_condition}
    \frac{1}{2}\big((\mu_0-\mu_1)^n+(\mu_0+\mu_1)^n\big)\geq F
\end{equation}
for any given noise model. Letting the noisy Bell pair be given by $\rho_{AB}=(\id\otimes\mathcal{N})(\Phi_{AB}^{0,0})$, where $\mathcal{N}$ is the quantum channel describing the noise, we obtain the following results for different choices of $\mathcal{N}$.
\begin{itemize}[left=0cm]
    \item \textit{Depolarizing noise:} $\mathcal{N}(\cdot)=(1-p)(\cdot)+\frac{p}{3}(X(\cdot)X+Y(\cdot)Y+Z(\cdot)Z)$, where $p\in[0,1]$ is the depolarizing parameter and $X,Y,Z$ are the single-qubit Pauli operators. The corresponding noisy Bell pair is given by $\rho_{AB}=(1-p)\Phi_{AB}^{0,0}+\frac{p}{3}(\Phi_{AB}^{1,0}+\Phi_{AB}^{0,1}+\Phi_{AB}^{1,1})$, where $\Phi_{AB}^{z,x}=\ketbra{\Phi^{z,x}}{\Phi^{z,x}}$. We have that $\mu_0=1-p$ and $\mu_1=\frac{p}{3}$, so the condition \eqref{eq:fidelity_condition} evaluates to
        \begin{equation}\label{eq-Bell_star_to_GHZ_noise_tolerance_Dep}
            \left(1-\frac{4p}{3}\right)^n+\left(1-\frac{2p}{3}\right)^n\geq 2F,
        \end{equation}
        which can be solved numerically to get an upper bound on $p$; see Fig.~\ref{fig:Bell_star_to_GHZ_region}.

    \item \textit{Dephasing noise:} $\mathcal{N}(\cdot)=(1-q)(\cdot)+qZ(\cdot)Z$, where $q\in[0,1]$ is the dephasing parameter. The corresponding noisy Bell pair is given by $\rho_{AB}=(1-q)\Phi_{AB}^{0,0}+q\Phi_{AB}^{1,0}$. In this case, $\mu_0=1-q$ and $\mu_1=q$, so the condition \eqref{eq:fidelity_condition} evaluates to
        \begin{align}
            &(1-2q)^n+1\geq 2F\\
            \Rightarrow & \, q\leq\frac{1}{2}\big(1-(2F-1)^{1/n}\big).
        \end{align}

    \item \textit{Pauli noise:} Both depolarizing and dephasing are examples of Pauli noise. A general Pauli-noise channel has the form $\mathcal{N}(\cdot)=p_I(\cdot)+p_XX(\cdot)X+p_Y Y(\cdot)Y+p_ZZ(\cdot)Z$, where $p_I,p_X,p_Y,p_Z\geq 0$ are the Pauli error rates and $p_I+p_X+p_Y+p_Z=1$. Bit-flip is another special case of this noise model, given by setting $p_X\equiv p$, $p_I=1-p$, and $p_Y=p_Z=0$. The bit-phase-flip noise model corresponds to setting $p_Y\equiv p$, $p_I=1-p$, and $p_X=p_Z=0$.
    
    For arbitrary Pauli noise, the corresponding noisy Bell pair is given by $\rho_{AB}=p_I\Phi_{AB}^{0,0}+p_X\Phi_{AB}^{0,1}+p_Z\Phi_{AB}^{1,0}+p_Y\Phi_{AB}^{1,1}$. Then, $\mu_0=p_I$ and $\mu_1=p_Z$, and the condition \eqref{eq:fidelity_condition} evaluates to $\frac{1}{2}((p_I-p_Z)^n+(p_I+p_Z)^n\geq F$. Evaluating this condition specifically for bit-flip noise leads to $(1-p)^n\geq F\Rightarrow p\leq1-F^{1/n}$. An analogous condition holds for the bit-phase-flip noise model.

    \item \textit{Amplitude damping noise:} $\mathcal{N}(\cdot)=K_1(\cdot)K_1^{\dagger}+K_2(\cdot)K_2^{\dagger}$, where $K_1=\begin{pmatrix} 1 & 0 \\ 0 & \sqrt{1-\gamma}\end{pmatrix}$, $K_2=\begin{pmatrix} 0 & \sqrt{\gamma} \\ 0 & 0 \end{pmatrix}$, and $\gamma\in[0,1]$ is the damping parameter. The corresponding noisy Bell pair is given by $\rho_{AB}=\frac{\gamma}{2}\ketbra{1,0}{1,0}+\left(1-\frac{\gamma}{2}\right)\ketbra{v}{v}$, where $\ket{v}=\frac{1}{\sqrt{2-\gamma}}(\ket{0,0}+\sqrt{1-\gamma}\ket{1,1})$. In this case, we have $\mu_0=\frac{1}{4}(1+\sqrt{1-\gamma})^2$ and $\mu_1=\frac{1}{4}(1-\sqrt{1-\gamma})^2$, so the condition \eqref{eq:fidelity_condition} evaluates to
        \begin{equation}\label{eq-Bell_star_to_GHZ_noise_tolerance_AD}
            (1-\gamma)^{n/2}+\left(1-\frac{\gamma}{2}\right)^n\geq 2F.
        \end{equation}

    \item \textit{$(T_1,T_2)$ noise:} $\mathcal{N}(\cdot)=K_1(\cdot)K_1^{\dagger}+K_2(\cdot)K_2^{\dagger}+K_3(\cdot)K_3^{\dagger}$, where $K_1=\begin{pmatrix} 1 & 0 \\ 0 & \sqrt{1-\gamma-\lambda}  \end{pmatrix}$, $K_2=\begin{pmatrix} 0 & \sqrt{\gamma} \\ 0 & 0 \end{pmatrix}$, $K_3=\begin{pmatrix} 0 & 0 \\ 0 & \sqrt{\lambda}  \end{pmatrix}$, and $1-\gamma=\e^{-t/T_1}$, $\lambda=\e^{-t/T_1}-\e^{-2t/T_2}$~\cite{ghosh2012surfacecodedecoherence}. (For $T_2=2T_1$, this reduces to the amplitude-damping noise model.) The corresponding noisy Bell pair has the form
    \begin{equation}
        \rho_{AB}=\frac{1}{2}\begin{pmatrix} 1 & 0 & 0 & \sqrt{1-\gamma-\lambda} \\ 0 & 0 & 0 & 0 \\ 0 & 0 & \gamma & 0 \\ \sqrt{1-\gamma-\lambda} & 0 & 0 & 1-\gamma  \end{pmatrix},
    \end{equation}
    so the condition \eqref{eq:fidelity_condition} evaluates to
    \begin{equation}
        \e^{-nt/T_2}+\frac{1}{2^n}\big(1+\e^{-t/T_1}\big)^n\geq 2F.
    \end{equation}
    We solve this numerically for $F$ and plot it in Fig.~\ref{fig:Bell_star_to_GHZ_region}. Specifically, we let $r_1=t/T_1$, $r_2=t/T_2$ be the fractions of the $T_1$ and $T_2$ times, respectively, that the Bell pairs have evolved, and we set $T_2=2T_1$, so that $r_2=\frac{1}{2}r_1\equiv\frac{1}{2}r$. The regions indicate how long the Bell pairs can be kept in memory, as a fraction of the $T_1$ time, while achieving the desired fidelity.
        
\end{itemize}

\begin{figure}
    \centering
    \includegraphics[width=0.95\columnwidth]{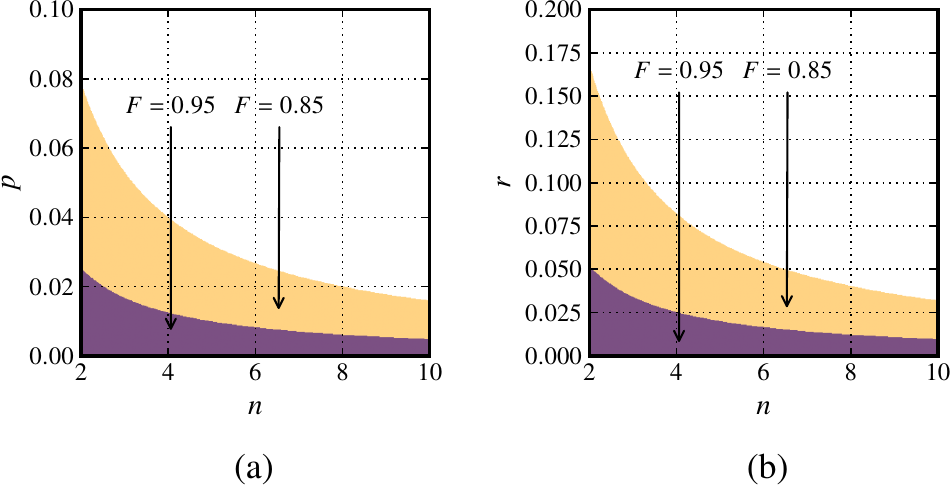}
    \caption{\textbf{Noise tolerance of Protocol~\ref{alg:Bell_star_to_GHZ}.} The desired output fidelity of the protocol is set to be at least $F=0.85,0.95$. The number of Bell pairs is $n$. Shaded regions indicate the values of $n$ and the noise parameter for which the desired fidelity can be achieved. (a) Depolarizing noise, with the tolerance regions given by the condition in \eqref{eq-Bell_star_to_GHZ_noise_tolerance_Dep}. (b) $(T_1,T_2)$ noise, assuming $T_2=2T_1$, and $r=t/T_1$ is the time spent in memory as a fraction of the $T_1$ time.
    }
    \label{fig:Bell_star_to_GHZ_region}
\end{figure}

\begin{theorem}[Fidelity after Protocol~\ref{alg:multiple_GHZ_merge_tree_protocol_main}]\label{thm:GHZ_merge_tree_protocol_post_fid}
    Consider Protocol~\ref{alg:multiple_GHZ_merge_tree_protocol_main} with $m\in\{2,3,\dotsc\}$ noisy GHZ states (of arbitrary size). Then, the fidelity of the state after the protocol with respect to the target GHZ state is given by
    \begin{equation}
        \sum_{\vec{z}\in\{0,1\}^{m-1}}\lambda_{z_1}^{(1)}\lambda_{z_2}^{(2)}\dotsb\lambda_{z_{m-1}}^{(m-1)}\lambda_{z'}^{(m)},
    \end{equation}
    where $z'=z_1\oplus z_2\oplus\dotsb\oplus z_{m-1}$, $\lambda_{z}^{(i)}=\bra{\text{GHZ}^{(z,\vec{0})}}\rho^{(i)}\ket{\text{GHZ}^{(z,\vec{0})}}$ is the fidelity of the $i^{\text{th}}$ state with respect to $\ket{\text{GHZ}^{(z,\vec{0})}}$, $i\in\{1,2,\dotsc,m\}$, and
    \begin{equation}\label{eq-GHZ_basis_main}
        \ket{\text{GHZ}_n^{(z,\vec{x})}}\coloneqq (Z^z\otimes X^{\vec{x}})\ket{\text{GHZ}_n},
    \end{equation}
    with $z\in\{0,1\}$ and $\vec{x}\in\{0,1\}^{n-1}$, is the $n$-qubit GHZ basis.
\end{theorem}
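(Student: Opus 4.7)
The plan is to mirror the strategy used in the proof of Theorem~\ref{thm:Bell_star_to_GHZ_post_fid}, expanding each noisy GHZ state in the GHZ basis $\{\ket{\text{GHZ}^{(z,\vec{x})}}\}$ and then tracking how the labels $(z,\vec{x})$ transform through each CNOT, $Z$-basis measurement, and $X$-correction in Protocol~\ref{alg:multiple_GHZ_merge_tree_protocol_main}. Concretely, I would first write $\rho^{(i)}=\sum c^{(i)}_{(z,\vec{x}),(z',\vec{x}')}\ketbra{\text{GHZ}^{(z,\vec{x})}}{\text{GHZ}^{(z',\vec{x}')}}$ for each of the $m$ noisy GHZ states, so that $\lambda_z^{(i)}$ as defined in the theorem corresponds to $c^{(i)}_{(z,\vec{0}),(z,\vec{0})}$. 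The goal is to show that, after the protocol, the overlap of the resulting state with the ideal target $\ket{\text{GHZ}^{(0,\vec{0})}}$ is a sum over configurations whose off-diagonal contributions vanish and whose surviving diagonal contributions are precisely those counted in the theorem.

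The core technical lemma I would establish is a \emph{fusion rule} for two GHZ basis states: applying a CNOT between a qubit of $\ket{\text{GHZ}_{n_1}^{(z_1,\vec{x}_1)}}$ (source) and a qubit of $\ket{\text{GHZ}_{n_2}^{(z_2,\vec{x}_2)}}$ (target), then measuring the target in the $Z$ basis, yields (up to outcome-dependent $X$-corrections) a single $\ket{\text{GHZ}_{n_1+n_2-1}^{(z_1\oplus z_2,\,\vec{x})}}$, where the new $\vec{x}$ label is a deterministic concatenation of $\vec{x}_1$, $\vec{x}_2$, and the measurement outcome. The key structural point is that the phase label combines by XOR ($z_1\oplus z_2$) while the bit-flip labels only accumulate in the $\vec{x}$ sector; the $X$-correction in Protocol~\ref{alg:multiple_GHZ_merge_tree_protocol_main}, propagated to a leaf along the path with measurement outcomes $z_1,\dotsc,z_{\ell-1}$, is precisely what restores $\vec{x}$ to $\vec{0}$ whenever the inputs were themselves aligned to $\vec{0}$. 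Crucially, off-diagonal terms $\ketbra{\text{GHZ}^{(z,\vec{x})}}{\text{GHZ}^{(z',\vec{x}')}}$ with $z\neq z'$ or $\vec{x}\neq\vec{x}'$ get mapped to off-diagonal operators on distinct GHZ basis elements and therefore contribute zero to the final overlap with $\ket{\text{GHZ}^{(0,\vec{0})}}$.

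Given this fusion rule, I would proceed by induction on the tree. Root the tree and process fusions from leaves upward (the order is irrelevant by commutativity of CNOTs on disjoint qubits). The inductive claim is that after fusing a sub-tree rooted at vertex $v$ with input GHZ states $\rho^{(i_1)},\dotsc,\rho^{(i_k)}$, the resulting mixed state, written in the GHZ basis, has diagonal weight on $\ket{\text{GHZ}^{(z,\vec{0})}}$ equal to $\sum_{z_{i_1}\oplus\cdots\oplus z_{i_k}=z}\lambda_{z_{i_1}}^{(i_1)}\cdots\lambda_{z_{i_k}}^{(i_k)}$, with all $\vec{x}\neq\vec{0}$ diagonal contributions suppressed after the leaf corrections are applied. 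Pushing the induction to the root and taking the overlap with the target then yields the constraint $z_1\oplus z_2\oplus\cdots\oplus z_m=0$, which upon relabeling $z_m=z'=z_1\oplus\cdots\oplus z_{m-1}$ gives exactly the expression in the theorem.

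The main obstacle will be the careful bookkeeping of the $\vec{x}$ labels under repeated fusions, in particular verifying that the outcome-dependent $X$-corrections prescribed in Step~3 of Protocol~\ref{alg:multiple_GHZ_merge_tree_protocol_main} --- namely $X^{z_1\oplus\cdots\oplus z_{\ell-1}}$ at each leaf along its path to the root --- are precisely sufficient to return the $\vec{x}$-label to $\vec{0}$ while leaving the XOR of $z$-labels unaltered. This is analogous to the telescoping of outcomes in the proof of Theorem~\ref{thm:Bell_star_to_GHZ_post_fid} but with the added complication that the tree can be unbalanced, so I would argue it by path-induction from each leaf. Once this label-algebra is established, the vanishing of the off-diagonal contributions is automatic, and summing the diagonal weights gives the claimed formula.
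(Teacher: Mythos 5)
Your overall strategy --- expanding each noisy GHZ state in the GHZ basis, establishing a fusion rule for the basis labels, and propagating it through the tree --- is essentially the same calculation the paper performs (App.~\ref{sec:GHZ_merge_tree}, via Proposition~\ref{prop:multiple_GHZ_merge_linear_protocol_post_fid} and its star-topology analogue, Proposition~\ref{prop:GHZ_merge_star_fid}), except that the paper works in the Heisenberg picture, pulling the Kraus operators onto the target state. However, your justification for discarding the off-diagonal terms has a genuine gap: the fusion map on GHZ-basis labels is \emph{not} injective for a fixed measurement outcome, because the measured qubit's label information is converted into a sign rather than into a distinguishable output label. Concretely, for the elementary fusion of $\ket{\text{GHZ}_{n+1}}_{A_1B_{1:n}}$ with $\ket{\text{GHZ}_{m+1}}_{A_2C_{1:m}}$ (CNOT from $A_1$ to $A_2$, measure $A_2$ with outcome $\beta$, correct with $X^{\beta}$ on $C_{1:m}$), the corresponding Kraus operator $M_{\beta}$ satisfies
\begin{equation}
    M_{\beta}\left(\ket{\text{GHZ}_{n+1}^{(z_1,\vec{0})}}\otimes\ket{\text{GHZ}_{m+1}^{(z_2,\vec{0})}}\right)=\frac{(-1)^{z_2\beta}}{\sqrt{2}}\,\ket{\text{GHZ}_{n+m+1}^{(z_1\oplus z_2,\vec{0})}},
\end{equation}
so the two distinct input label pairs $(z_1,z_2)=(0,0)$ and $(1,1)$ are mapped to the \emph{same} output basis element. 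Consequently, the off-diagonal input operator $\ketbra{\text{GHZ}^{(0,\vec{0})}\otimes\text{GHZ}^{(0,\vec{0})}}{\text{GHZ}^{(1,\vec{0})}\otimes\text{GHZ}^{(1,\vec{0})}}$ is mapped by $M_{\beta}(\cdot)M_{\beta}^{\dagger}$ to the \emph{diagonal} operator $\tfrac{1}{2}(-1)^{\beta}\ketbra{\text{GHZ}^{(0,\vec{0})}}{\text{GHZ}^{(0,\vec{0})}}$, which has nonzero overlap with the target for each individual outcome $\beta$. Such terms are eliminated only by destructive interference upon summing over measurement outcomes, $\sum_{\beta\in\{0,1\}}(-1)^{\beta}=0$ --- this is precisely the role of the identity $\sum_{\vec{\beta}}(-1)^{\vec{\beta}\cdot(\vec{z}\oplus\vec{y})}=2^{n-1}\delta_{\vec{z},\vec{y}}$ in the paper's proofs --- and not by orthogonality of distinct output basis elements, as you claim.

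Your argument is repairable but incomplete as written: the fusion rule must carry the outcome-dependent phase $(-1)^{z\beta}$ explicitly, and the assertion that ``once this label-algebra is established, the vanishing of the off-diagonal contributions is automatic'' must be replaced by an explicit sum over the $m-1$ measurement outcomes (either at each fusion step of your induction, or globally at the end, as the paper does). The cross terms whose bra and ket labels differ in the $\vec{x}$ sector on unmeasured qubits do drop out for the reason you give, but the phase-label cross terms with $z\neq z'$ --- which are exactly the ones threatening the claimed formula --- do not.
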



We note that the fidelity after Protocol~\ref{alg:multiple_GHZ_merge_tree_protocol_main} depends only on the number of input GHZ states, and not on the star topology of the network. Also, similarly to the fidelity after Protocol~\ref{alg:Bell_star_to_GHZ}, the fidelity after Protocol~\ref{alg:multiple_GHZ_merge_tree_protocol_main} depends only on the diagonal elements of the noisy GHZ states with respect to the GHZ basis. In particular, only the fidelities of the noisy GHZ states with respect to $\ket{\text{GHZ}_n^{(0,\vec{0})}}\equiv\ket{\text{GHZ}_n}=\frac{1}{\sqrt{2}}(\ket{0}^{\otimes n}+\ket{1}^{\otimes n})$ and $\ket{\text{GHZ}_n^{(1,\vec{0})}}=\frac{1}{\sqrt{2}}(\ket{0}^{\otimes n}-\ket{1}^{\otimes n})$ are relevant.

Consider fusing $m\in\{2,3,\dotsc\}$ noisy GHZ states according to Protocol~\ref{alg:multiple_GHZ_merge_tree_protocol_main}, where each of the noisy GHZ states has been created using Protocol~\ref{alg:Bell_star_to_GHZ}. Assuming (for simplicity and illustrative purposes) that all of the (non-leaf) nodes in the tree have the same number $n\in\{2,3,\dotsc\}$ of children nodes, and that all of the Bell pairs have the same noise acted upon them, we know from Theorem~\ref{thm:Bell_star_to_GHZ_post_fid} that the fidelity with respect to $\ket{\text{GHZ}^{(0,\vec{0})}}$ is given by
\begin{equation}
    \lambda_0^{(i)}\equiv\lambda_0=\frac{1}{2}\big((\mu_0+\mu_1)^n+(\mu_0-\mu_1)^n\big),
\end{equation}
for all $i\in\{1,2,\dotsc,m\}$. By reasoning similar to the proof of Theorem~\ref{thm:Bell_star_to_GHZ_post_fid}, it is possible to show that the fidelity of the state after Protocol~\ref{alg:Bell_star_to_GHZ} with respect to $\ket{\text{GHZ}^{(1,\vec{0})}}$ is
\begin{equation}
    \lambda_1^{(i)}\equiv\lambda_1=\frac{1}{2}\big((\mu_0+\mu_1)^n-(\mu_0-\mu_1)^n\big),
\end{equation}
for all $i\in\{1,2,\dotsc,m\}$. Therefore, by Theorem~\ref{thm:GHZ_merge_tree_protocol_post_fid}, the final fidelity with respect to the GHZ state is
\begin{align}
    &\sum_{\vec{z}\in\{0,1\}^{m-1}}\lambda_{z_1}\lambda_{z_2}\dotsb\lambda_{z_{m-1}}\lambda_{z'}\nonumber\\
    &\qquad=\sum_{\vec{z}\in\{0,1\}^{m-1}}\lambda_1^{\Abs{\vec{z}}}\lambda_0^{m-1-\Abs{\vec{z}}}\lambda_{z'}\\
    &\qquad=\frac{1}{2}\big((\lambda_0+\lambda_1)^m+(\lambda_0-\lambda_1)^m\big)\\
    &\qquad=\frac{1}{2}\big((\mu_0+\mu_1)^{nm}+(\mu_0-\mu_1)^{nm}\big).
\end{align}
This tells us that, for every noise model, the final fidelity depends essentially on the number of CNOT gates performed among all fusion operations, which is $nm$.


\paragraph*{Evaluation for an example hardware platform.}

We now evaluate the performance of Protocol~\ref{alg:Bell_star_to_GHZ} for a particular hardware platform. Our example is based on the recent experiment~\cite{knaut2024entanglementnetwork} for distributing bipartite entanglement in a network using Silicon-vacancy (SiV) centers. The electronic spins are used as the communication qubits and the nuclear spins are used as the storage/memory qubits. Over a 20~m optical-fiber link, the resulting fidelities of the entangled electronic spins with respect to the states $\Phi^{\pm}$ (taking all realistic elements into account) are~\cite{knaut2024entanglementnetwork} $\mu_0=0.74$ and $\mu_1=0.11$, where we recall the definitions of $\mu_0$ and $\mu_1$ from Theorem~\ref{thm:Bell_star_to_GHZ_post_fid}. Assuming entangled links with these fidelities in a star topology, it immediately follows from \eqref{eq-Bell_star_to_GHZ_post_fid_main_2} that the fidelity after Protocol~\ref{alg:Bell_star_to_GHZ} is $\frac{1}{2}(0.63^n+0.85^n)$ for all $n\in\{2,3,\dotsc\}$, where $n$ is the number of Bell pairs used to make the  $(n+1)$-party GHZ state. Already for $n=2$, the resulting fidelity of the three-party GHZ state is approximately 0.56, even without taking gate noise into account.


\paragraph*{Noise in the gates.}
Now, in addition to noise on the input states of Protocols~\ref{alg:Bell_star_to_GHZ} and \ref{alg:multiple_GHZ_merge_tree_protocol_main}, we want to consider noise in the local CNOT gates. Recall that the two-qubit CNOT gate is $\text{CNOT}=\ketbra{0}{0}\otimes\mathbb{I}+\ketbra{1}{1}\otimes X$, where $X$ is the Pauli-$X$ gate. Let us denote the sequence of CNOT gates in Protocol~\ref{alg:Bell_star_to_GHZ} as
\begin{multline}\label{eq-CNOT_star}
    \text{CNOT}_n\equiv\text{CNOT}_{A_1A_n}\text{CNOT}_{A_1A_{n-1}}\dotsb\\\dotsb\text{CNOT}_{A_1A_3}\text{CNOT}_{A_1A_2},
\end{multline}
where $n\in\{2,3,\dotsc\}$ is the number of qubits. It is straightforward to show that
\begin{equation}
    \text{CNOT}_n=\ketbra{0}{0}\otimes\mathbb{I}^{\otimes (n-1)}+\ketbra{1}{1}\otimes X^{\otimes (n-1)}.
\end{equation}
Now, as our model of the noisy $\text{CNOT}_n$ gate, let us assume that an arbitrary $n$-qubit quantum channel $\mathcal{M}$ is applied immediately after the $\text{CNOT}_n$ gate, which means that the noisy $\text{CNOT}_n$ gate is the quantum channel $(\cdot)\mapsto\mathcal{M}(\text{CNOT}_n(\cdot)\text{CNOT}_n)$.

In order to apply Theorems~\ref{thm:Bell_star_to_GHZ_post_fid} and \ref{thm:GHZ_merge_tree_protocol_post_fid} while taking gate noise into account, we can attempt to absorb the gate noise into the input states. This would mean finding a channel $\mathcal{M}^{\prime}$ such that
\begin{equation}
    \mathcal{M}(\text{CNOT}_n(\cdot)\text{CNOT}_n)=\text{CNOT}_n\mathcal{M}^{\prime}(\cdot)\text{CNOT}_n.
\end{equation}
Because $\text{CNOT}_n$ is a unitary, we immediately see that
\begin{equation}
    \mathcal{M}^{\prime}(\cdot)=\text{CNOT}_n\mathcal{M}(\text{CNOT}_n(\cdot)\text{CNOT}_n)\text{CNOT}_n.
\end{equation}
Letting $\mathcal{M}(\cdot)=\sum_{j=1}^r K_j(\cdot)K_j^{\dagger}$ be a Kraus representation of $\mathcal{M}$, it follows that
\begin{align}
    \mathcal{M}^{\prime}(\cdot)&=\sum_{j=1}^r \text{CNOT}_nK_j\text{CNOT}_n(\cdot)\text{CNOT}_nK_j^{\dagger}\text{CNOT}_n\\
    &=\sum_{j=1}^r K_j^{\prime}(\cdot)K_j^{\prime\dagger},
\end{align}
where $K_j^{\prime}=\text{CNOT}_nK_j\text{CNOT}_n$.

If $\mathcal{M}$ is a Pauli channel, then it can be written as $\mathcal{M}(\cdot)=\sum_{\{\mathbb{I},X,Y,Z\}^{\otimes n}}p_P\,P(\cdot)P$, where the quantities $p_P$, $P\in\{\mathbb{I},X,Y,Z\}^{\otimes n}$, form a probability distribution. This implies that the transformed Kraus operators are given by $P^{\prime}=\text{CNOT}_nP\text{CNOT}_n$. Now, because $\text{CNOT}_n$ is a Clifford gate, it holds that $P^{\prime}\in\{\mathbb{I},X,Y,Z\}^{\otimes n}$ is some other $n$-qubit Pauli operator. In other words, when the noise is any Pauli channel, the $\text{CNOT}_n$ gate can be brought outside of the action of the noise, and can therefore be absorbed into the input states of Protocol~\ref{alg:Bell_star_to_GHZ}, while still retaining the Pauli-covariant structure of the noise. This means that Theorem~\ref{thm:Bell_star_to_GHZ_post_fid} holds even when the $\text{CNOT}_n$ gate is impacted by arbitrary Pauli noise. 

It is straightforward to determine the form of the transformed Pauli operator $P'$. Let $P\equiv \I^{\vec{x}\cdot\vec{z}}X^{\vec{x}}Z^{\vec{z}}$, where $\vec{x},\vec{z}\in\{0,1\}^{n}$ and $X^{\vec{x}}\equiv X^{x_1}\otimes X^{x_2}\otimes\dotsb\otimes X^{x_n}$, $Z^{\vec{z}}=Z^{z_1}\otimes Z^{z_2}\otimes\dotsb\otimes Z^{z_n}$. Then, it holds that $P'=\text{CNOT}_n(\I^{\vec{x}\cdot\vec{z}}X^{\vec{x}}Z^{\vec{z}})\text{CNOT}_n=\I^{\vec{x}\cdot\vec{z}}X^{\vec{x}'}Z^{\vec{z}'}$, where $\vec{x}'=(x_1,x_1\oplus x_2,x_1\oplus x_3,\dotsc,x_1\oplus x_n)$ and $\vec{z}'=(z_1\oplus z_2\oplus\dotsb\oplus z_n,z_2,\dotsc,z_n)$. As an example, let us consider $n=2$, and assume a tensor-product Pauli noise channel, $\mathcal{M}=\mathcal{P}^{\otimes 2}$, where $\mathcal{P}(\cdot)=\sum_{x,z\in\{0,1\}} p_{x,z}X^x Z^z(\cdot)Z^zX^x$, with the coefficients $p_{x,z}$ forming an arbitrary probability distribution. This means that $\mathcal{M}(\cdot)=\sum_{x_1,z_1,x_2,z_2\in\{0,1\}}p_{x_1,z_1}p_{x_2,z_2}(X^{x_1}Z^{z_1}\otimes X^{x_2}Z^{z_2})(\cdot)(Z^{z_1}X^{x_1}\otimes Z^{z_2}X^{x_2})$. Then, the transformed channel is $\mathcal{M}^{\prime}(\cdot)=\sum_{x_1,z_1,x_2,z_2\in\{0,1\}}p_{x_1,z_1\oplus z_2}p_{x_1\oplus x_2,z_2}(X^{x_1}Z^{z_1}\otimes X^{x_2}Z^{z_2})(\cdot)(Z^{z_1}X^{x_1}\otimes Z^{z_2}X^{x_2})$. Observe that this channel is not necessarily a tensor-product channel, due to the correlations in the transformed probability distributions. As such, this example illustrates the important point that transforming an uncorrelated noise channel will generally result in a correlated noise channel. In particular, after absorbing the noise into the input states via this transformation, the initial product state of noisy Bell pairs may not retain the product state form, i.e., the initial input state to Protocol~\ref{alg:Bell_star_to_GHZ} may not be a product state, as assumed in Theorem~\ref{thm:Bell_star_to_GHZ_post_fid}. However, Theorem~\ref{thm:Bell_star_to_GHZ_post_fid} holds even for arbitrarily correlated input states, and we provide the precise statement of this general case in Methods. 

For non-Pauli noise, the form of $\mathcal{M}^{\prime}$ depends heavily on the specific form of $\mathcal{M}$. Taking $\mathcal{M}=\mathcal{A}_{\gamma}^{\otimes 2}$ as an example, where $\mathcal{A}_{\gamma}$ is the amplitude-damping channel defined above, it is straightforward to determine the transformed Kraus operators. Let $K_{i,j}^{\prime}=\text{CNOT}_n(K_i\otimes K_j)\text{CNOT}_n$ be the transformed Kraus operators, for $i,j\in\{0,1\}$. Then,
\begin{align}
    K_{1,1}^{\prime}&=\begin{pmatrix} 1 & 0 & 0 & 0 \\ 0 & \sqrt{1-\gamma} & 0 & 0 \\ 0 & 0 & 1-\gamma & 0 \\ 0 & 0 & 0 & \sqrt{1-\gamma} \end{pmatrix},\\
   K_{1,2}^{\prime}&=\begin{pmatrix} 0 & \sqrt{\gamma} & 0 & 0 \\ 0 & 0 & 0 & 0 \\ 0 & 0 & 0 & 0 \\ 0 & 0 & \sqrt{\gamma(1-\gamma)} & 0\end{pmatrix},\\
   K_{2,1}^{\prime}&=\begin{pmatrix} 0 & 0 & 0 & \sqrt{\gamma} \\ 0 & 0 & \sqrt{\gamma(1-\gamma)} & 0 \\ 0 & 0 & 0 & 0 \\ 0 & 0 & 0 & 0\end{pmatrix},\,
   K_{2,2}^{\prime}=\begin{pmatrix} 0 & 0 & \gamma & 0 \\ 0 & 0 & 0 & 0 \\ 0 & 0 & 0 & 0 \\ 0 & 0 & 0 & 0\end{pmatrix}.
\end{align}
As in the case of Pauli channels, the transformed channel is not a tensor-product channel.

Analogous reasoning can be applied to deal with gate noise in Protocol~\ref{alg:multiple_GHZ_merge_tree_protocol_main}, such that Theorem~\ref{thm:GHZ_merge_tree_protocol_post_fid} can be applied by absorbing the gate noise into the input states.

\section*{Discussion}

In this work, we consider the distribution of GHZ states in arbitrary Bell-pair networks. While this task is not new, we adopt a novel perspective on evaluating protocols to achieve this task by considering the gate cost and the Bell-pair source cost of the protocol, in addition to the number of Bell pairs used. Notably, the gate count imposes a more restricted form of local operations and classical communications (LOCC). Unlike the traditional LOCC framework, in which all local operations are considered free, our approach emphasizes the constraints associated with gate usage. Furthermore, to minimize the number of consumed Bell pairs when distributing a GHZ state among a subset of nodes in a network, a minimum Steiner tree must generally be identified; however, the Steiner tree problem is NP-hard~\cite{garey1977SteinertreeNP}. It is therefore interesting to ask whether other, polynomial-time strategies can be employed to distribute a GHZ state, while still maintaining near-optimal Bell pair consumption, gate usage, source cost. To the best of our knowledge, the number of Bell-pair sources has not been considered in previous works.

We present protocols for GHZ entanglement distribution that use a number of gates that is linear in the size of the subgraph corresponding to the desired nodes to be entangled. Notably, this gate cost is independent of the structure and topology of the network. We also propose an alternative protocol that identifies connected subgraphs using the breadth-first search (BFS) algorithm. While this approach does not yield the optimal Bell pair count, it avoids the need to find Steiner trees, it runs in polynomial time, and we provide numerical evidence that it achieves near-optimal Bell pair consumption. Furthermore, by relaxing optimality in Bell-pair consumption for the sake of computational efficiency, our protocol has better complexity compared to polynomial-time Steiner tree approximation algorithms~\cite{kou1981steiner,mehlhorn1988steiner,robins2005steinertreeapproximation}.


Unlike prior work, which focuses primarily on minimizing the number of Bell pairs and therefore involves solving the Steiner tree problem (either exactly or with an approximate algorithm), our objective has been stronger: our goal was to minimize not only the number of Bell pairs, but also the number of Bell pair sources. What we have shown is that doing this necessitates a different perspective. Specifically, we have shown that we must find not only an approximation of a minimum Steiner tree, in order to achieve a minimal number of Bell pairs, but also an approximation of the minimum dominating set size. In other words, we have shown that focusing on Steiner trees alone will not achieve the minimum number of Bell-pair sources, and this is where our algorithm has its advantage, because it can simultaneously achieve near-optimal Bell-pair consumption and near-optimal Bell-pair source count.

To evaluate our protocols, we performed Monte Carlo simulations on Erd\H{o}s--R\'{e}nyi, Barab\'{a}si--Albert, and Waxman random network models, which represent increasingly realistic network models. For these models, our protocol shows enhanced performance compared to the MMG protocol in dense networks, achieving a nearly quadratic advantage compared to the MMG protocol by minimizing the number of gates required while keeping the number of Bell pairs and sources of Bell pairs as low as possible. Additionally, we analyzed our protocol in the context of noisy Bell pairs and noisy gates, and we provided explicit analytic expressions for the fidelity of the states after our protocols for arbitrary noise models.

\paragraph*{Future directions.} While our current focus has been on creating GHZ states, exploring the distribution of other graph states, such as cluster states, would be relevant for distributed and measurement-based quantum computing. Additionally, analyzing in detail the robustness of our approach against link and node failures is of particular interest, especially in light of classical and quantum research on network resilience, as demonstrated by Refs.~\cite{albert2000attacktolerancenetworks,DKD18,coutinho2022robustness,sadhu2023practical}. Intuitively, by creating local, small-scale GHZ states independently, our protocol is less susceptible to the failure of any single GHZ state creation, which might result from the failure of nearest-neighbor entanglement generation in networks of lossy optical fiber channels. In cases of failure, the other GHZ states can be stored in memory, potentially reducing overall waiting times in the presence of link and node failures. Studying the waiting time of our protocol would generalize studies on the waiting time of linear repeater chains for bipartite entanglement, e.g., Refs.~\cite{KMSD19,waiting_time}, to the multipartite setting.

Currently, our protocol requires full, global knowledge of the network. Therefore, developing variants of our protocol that rely primarily on local knowledge and require only a few rounds of classical communication to establish global awareness, leveraging the ideas of, e.g., Ref.~\cite{haldar2024multiplexing}, would be beneficial.


Beyond theoretical studies, implementing our algorithm on various experimental platforms is also of great interest. Another intriguing approach is to add an optimization layer to our protocol enhance the way the stars are selected and merged, particularly in the context of lossy Bell pairs and lossy memory stations. We anticipate that this can be accomplished using reinforcement learning, by expressing our problem as a Markov decision process (MDP) and leveraging the tools from Refs.~\cite{Kha21b,khatri2022networkMDP,haldar2023fastreliable,simon_RL,evgeny_swap_asap,waiting_time}. Moreover, employing further tools such as (probability) generating functions, the protocols could be optimized in the presence of noise~\cite{tim_coopmans_2024,lars}.

\section*{Methods}

\subsection*{Proof of Theorem~\ref{complexity_complete_proof}}
Protocol~\ref{alg:GHZ_network_complete} consists of three main components:
\begin{enumerate}
    \item Picking stars, which contains two elements: (i) Computing the degree distribution table, which takes $O(N+E)$ time steps~\cite{bfs}; (ii) Sorting the degree distribution in decreasing order takes $O(N \log N)$ time steps, using standard comparison-based sorting algorithms such as mergesort or heapsort~\cite{bfs}.
    \item Modifying stars, which takes $O(E + K)$ time steps, where $K$ is the total number of stars, because we visit all of the nodes for each star to remove extra common nodes. The number of nodes visited in each star corresponds to its degree plus one. Consequently, the total number of nodes visited across all stars is $E+K$. 
    \item Fusions, which take $O(N \cdot K)$ time steps, because $N$ nodes are visited for each of the $K-1$ fusion operations.
\end{enumerate}
Putting everything together, the total time complexity is $O(N+E +N\log N + E + K + N \cdot K)$. Since a graph can have at most $E=O(N^2)$ edges (as in the case of a complete graph), and because $K = O(N)$, we conclude that the total time complexity is $O(E) = O(N^2)$.

\subsection*{Proof of Theorem~\ref{complexity_subset_proof}}

Protocol~\ref{alg:GHZ_network_subset} consists of two main components: (1) BFS traversal, which has a complexity of $O(N + E)$~\cite{bfs}, because it visits all $N$ nodes and $E$ edges; (2) leaf node removal, with a complexity of $O(N)$, as it involves going through every node in the BFS tree one time. Combining these, the total time complexity is $O(N + E + N) = O(N + E)$. In a general graph, the number $E$ of edges is at most $O(N^2)$ (as in the case of a complete graph), so the overall time complexity can be simplified to $O(N^2)$.

\subsection*{Proof of Theorem~\ref{thm:gates_complete_case}}

Let us assume there are a total of $N$ nodes and these are distributed in $k$ GHZ states of different sizes. The number of nodes in each GHZ state is given by $n_1, n_2, n_3,\dotsc,n_k$. The number of gates required to create these GHZ states is given by (according to point 1 above) $(n_1-2), (n_2-2), (n_3-2),\dotsc,(n_k-2)$. Adding up all of these gives us  $n_1 + n_2 +n_3 +\dotsb+n_k  -2k$. As there are $k$ GHZ states, there will be $k-1$ nodes that are repeated one time. Therefore, the total number of nodes in all of the small GHZ states should be $N+k-1$, i.e., $n_1+n_2+n_3+\dotsb+n_k=N+k-1$, which means that the gate cost is $N+k-1-2k=N-k-1$. Now, let us add the gates for the merging of the GHZ states (explained in point 2 above). As there are $k$ GHZ states, there will be $k-1$ merges in total, and each merge requires one gate (see point 2 above), which means that the total number of gates required is $N-k-1+k-1=N-2$.

\subsection*{Proof of Proposition~\ref{prop:gate_cost_MMG}}

The MMG protocol of Ref.~\cite{MMG19} proceeds by starting with a minimum spanning tree for the graph. Then, an arbitrary leaf node of the tree is chosen, and the star expansion protocol is applied successively on non-leaf neighbors of the chosen leaf node until all of the nodes are included in the GHZ state. If the number of non-leaf neighbors is $L$, then the gate cost is $\sum_{i=1}^L g_1(k_i)$. We can bound this from above by summing over all nodes, i.e.,
\begin{align}
    \sum_{i=1}^L g_1(k_i)&\leq \sum_{i=1}^Ng_1(k_i)\\
    &=\frac{1}{2}\sum_{i=1}^N k_i^2+\frac{3}{2}\sum_{i=1}^N k_i-N\\
    &\leq \frac{1}{2}\left(\sum_{i=1}^N k_i\right)^2+\frac{3}{2}\cdot 2(N-1)-N\\
    &=\frac{1}{2}(2(N-1))^2+2N-3\\
    &=2N^2-2N-1\\
    &=O(N^2),
\end{align}
where we have made use of the fact that $\sum_{i=1}^N k_i=2(N-1)$, i.e., the sum of the degrees is equal to twice the number of edges, and the number of edges in the minimum spanning tree is necessarily $N-1$. 

\subsection*{Proof of Proposition~\ref{prop:gate_cost_fisher}}

Consider a chain topology for the network, with nodes labeled from 1 to $N$ from left to right. Suppose that the root node is node 1. In this setup, the protocol of Ref.~\cite{fischer2021graph} requires ``connection transfers'' from each node to node 1, with each transfer utilizing one gate. The farthest node, node $N$, requires $N - 2$ connection transfers to reach node 1. The next farthest node, node $N - 1$, requires $N - 3$ connection transfers, and so on. Consequently, the total number of gates used is given by $(N-2) + (N-3) + (N-4) +\dotsb+ (N-N)=\sum_{j=2}^N(N-j)=\frac{N(N-3)}{2} +1 =O(N^2)$.

\subsection*{Proof of Theorem~\ref{full_edges_proof}}

The proof consists of two parts. First, we establish that the lower bound is $N-1$; second, we show that our protocol achieves this bound.

To begin, the network must be connected, as it is otherwise impossible to generate entanglement among all nodes in the LOCC setting. The minimum number of edges required to connect a graph of $N$ nodes is $N-1$, which implies that at least $N-1$ Bell pairs are needed. This gives us the lower bound of $N-1$ Bell pairs.

To minimize the number of Bell pairs, a single common node between any two stars is necessary and sufficient for merging. If more than one common node is present, additional Bell pairs are consumed. Therefore, in step 10 of Protocol~\ref{alg:GHZ_network_complete}, we modify the stars to ensure that only one common node connects with the rest. This ensures that we require only $N-1$ Bell pairs to create the GHZ state.

\subsection*{Proof of Theorem~\ref{subset_edges_proof}}

For the lower bound, the minimum number of Bell pairs used is determined analogous to the complete case, resulting in $S-1$.

For the upper bound, the maximum number of Bell pairs will be used when all the nodes in the subset are maximally distant from each other. By definition, the diameter gives the maximum distance in terms of the number of Bell pairs. Since there are $S$ maximally distant nodes, the maximum number of Bell pairs used is $d_S(S-1)$.

\subsection*{Proof of Theorem~\ref{thm:bell_pair_dominating_set}}

Let the quantum network be represented by an undirected graph $G = (V, E)$, where we recall that each node $v \in V$ is a quantum node (e.g., a repeater or memory station), and each edge $\{u, v\} \in E$ represents a quantum channel capable of distributing a Bell pair between adjacent nodes.

A Bell pair source placed at a node $s \in V$ can generate a Bell pair locally and distribute one qubit to itself and the other to any one of its neighbors via the quantum channel. Thus, a node $v \in V$ is said to be \emph{covered} if it either hosts a source or is adjacent to a node that hosts a source.

We define a set $D \subseteq V$ of nodes that host Bell pair sources. Then, the set of covered nodes is given by
\begin{equation}
\bigcup_{v \in D} \left( \{v\} \cup N(v) \right),
\end{equation}
where $N(v)$ denotes the set of neighbors of node $v$. The objective is to find the smallest such set $D$ such that this union equals $V$, i.e.,
\begin{equation}
    \bigcup_{v \in D} \left( \{v\} \cup N(v) \right) = V.
\end{equation}
This is precisely the definition of a \emph{dominating set} in graph theory: a set $D \subseteq V$ such that every node $u \in V$ is either in $D$ or adjacent to a node in $D$. Therefore, finding the minimum number of Bell pair sources required to cover the network is equivalent to finding a minimum dominating set in $G$. Specifically, the \textit{domination number} $\gamma(G)\coloneqq\min\big\{|D|:D\text{ is a dominating set of }G\big\}$ is a lower bound on the minimum number of Bell-pair sources in the network.

\subsection*{Random network models}

Let us first briefly describe the three random network models that we consider.

\paragraph*{Erd\H{o}s--R\'{e}nyi model.} The Erdős–Rényi (ER) random network ensemble, commonly referred to as the ``$G(N, p)$ model'', is defined as follows~\cite{erdosrenyi1959,erdosrenyi1960}:
\begin{itemize}
    \item $N\in\{1,2,\dotsc\}$: The number of nodes in the graph.
    \item For every pair of (distinct) nodes, add an edge between them with probability $p\in[0,1]$.
\end{itemize}
In general, graphs in the ER ensemble will not be connected, particularly for small values of $p$. However, in order to ensure that all nodes of the network are included in the final GHZ state, we require the graph to be connected. To address this, we introduce a slight modification to ensure the graph remains connected. The modification ensures connectivity by initially adding a random edge for each node, guaranteeing at least one connection, and then adding additional edges based on the probability $p$. 

\paragraph*{Barab\'{a}si--Albert model.} The Barab\'{a}si--Albert (BA) model~\cite{barabasi1999scalefree} generates scale-free networks, which are characterized by a power-law degree distribution. In such scale-free networks, a few nodes (hubs) have many connections, while most nodes have only a few. As such, it represents a realistic model of the internet. 
The construction of a random graph in the BA model is done as follows. We let $c\in\{1,2,\dotsc\}$, and we suppose that the total number $N$ of nodes satisfies $N\geq c+1$. Then:
\begin{enumerate}
    \item Start with the complete graph (i.e., all-to-all connected graph) of $c+1$ nodes. 
    \item Until the total number $N$ of nodes is reached, add a new node with $c$ edges that links the new node to $c$ different nodes already present in the network. The probability that a new node connects to an existing node $i$ with degree $k_i$ is given by the so-called ``preferential attachment'' model, in which nodes of higher degree are more likely to receive a new edge. Specifically, the probability is $k_i/\sum_{j} k_j$, where the sum in the denominator is over all existing nodes $j$ in the network.
\end{enumerate}
Observe that the parameter $c$ is the minimum possible degree of a node. In other words, every node in every graph generated according to this model has degree at least $c$. For the sake of comparison with the ER model, we consider values of $c$ given by some fraction of $N$, i.e., $c=\ceil{Np}$, where $p\in[0,1]$.

\paragraph*{Waxman model.}

In order to analyse our protocol on more realistic networks, namely, photonic networks based on fiber-optic links, we consider the Waxman model~\cite{waxman1988}. The Waxman model 
incorporates aspects of spatial and geographical proximity by using random geometric graphs~\cite{penrose2003_RGG_book} with a particular distance-dependent probability distribution for the edges. Graphs in this model are generated as follows:

\begin{itemize}
\item Nodes are distributed uniformly at random over a circle of diameter $D$. 
\item The probability of creating an edge between any two nodes $u$ and $v$ depends on their Euclidean distance $d(u,v)$. Specifically, the probability $P(u,v)$ of an edge between two nodes $u$ and $v$ is given by $P(u,v)=\beta\exp[-d(u,v)/L_0]$, where $d(u,v)$ is the distance between the nodes, $L_0$ is the attenuation length, and $\beta\in(0,1]$ is a parameter that controls the density of the graphs.
\end{itemize}
As with the ER model, this model does not guarantee a connected graph. To address this, we generate data by varying the diameter 
$D$ and focusing exclusively on creating a GHZ state among the nodes in the largest connected component of the graph. 


\subsection*{Proof of Theorem~\ref{thm:Bell_star_to_GHZ_post_fid}}\label{sec:Bell_star_to_GHZ_post_fid_pf}

Let us first prove that Protocol~\ref{alg:Bell_star_to_GHZ} performs as claimed. We use the abbreviation $A_{1:n}\equiv A_1A_2\dotsb A_n$, similarly for $B_{1:n}$. The initial state is
\begin{align}
    &\ket{\Phi}_{A_1B_1}\otimes\ket{\Phi}_{A_2B_2}\otimes\dotsb\otimes\ket{\Phi}_{A_nB_n}\nonumber\\
    &\quad=\frac{1}{\sqrt{2^n}}\sum_{\vec{\alpha}\in\{0,1\}^n}\ket{\vec{\alpha}}_{A_{1:n}}\ket{\vec{\alpha}}_{B_{1:n}}\\
    &\quad=\frac{1}{\sqrt{2^n}}\sum_{\vec{\beta}\in\{0,1\}^{n-1}}\Big(\ket{0}_{A_1}\ket{\vec{\beta}}_{A_{2:n}}\ket{0}_{B_1}\ket{\vec{\beta}}_{B_{2:n}}\nonumber\\[-1em]
    &\qquad\qquad\qquad\qquad+\ket{1}_{A_1}\ket{\vec{\beta}}_{A_{2:n}}\ket{1}_{B_1}\ket{\vec{\beta}}_{B_{2:n}}\Big),
\end{align}
where we have made use of the abbreviation $\ket{\vec{\beta}}_{A_{2:n}}\equiv \ket{\beta_1}_{A_2}\otimes\ket{\beta_2}_{A_3}\otimes\dotsb\otimes\ket{\beta_{n-1}}_{A_n}$. Then, after applying the CNOT gates, we obtain
\begin{align}
    &\frac{1}{\sqrt{2^n}}\sum_{\vec{\beta}\in\{0,1\}^{n-1}}\Big(\ket{0}_{A_1}\ket{\vec{\beta}}_{A_{2:n}}\ket{0}_{B_1}\ket{\vec{\beta}}_{B_{2:n}}\nonumber\\[-1em]
    &\qquad\qquad\qquad\qquad+\ket{1}_{A_1}\ket{\vec{\beta}\oplus\vec{1}}_{A_{2:n}}\ket{1}_{B_1}\ket{\vec{\beta}}_{B_{2:n}}\Big)\\
    &=\frac{1}{\sqrt{2^n}}\sum_{\vec{\beta}\in\{0,1\}^{n-1}}\Big(\ket{0}_{A_1}\ket{\vec{\beta}}_{A_{2:n}}\ket{0}_{B_1}\ket{\vec{\beta}}_{B_{2:n}}\nonumber\\[-1em]
    &\qquad\qquad\qquad\qquad+\ket{1}_{A_1}\ket{\vec{\beta}}_{A_{2:n}}\ket{1}_{B_1}\ket{\vec{\beta}\oplus\vec{1}}_{B_{2:n}}\Big).
\end{align}
From this, we can see that upon measurement of the qubits $A_2,A_3,\dotsc,A_n$, we obtain every outcome string $\vec{\beta}\in\{0,1\}^{n-1}$ with probability $\frac{1}{2^{n-1}}$, and conditioned on this outcome the state of the remaining qubits is 
\begin{multline}
    \frac{1}{\sqrt{2}}\left(\ket{0}_{A_1}\ket{0}_{B_1}\ket{\vec{\beta}}_{B_{2:n}}+\ket{1}_{A_1}\ket{1}_{B_1}\ket{\vec{\beta}\oplus\vec{1}}_{B_{2:n}}\right)\\=X_{B_2}^{\beta_1}\otimes X_{B_3}^{\beta_2}\otimes\dotsb\otimes X_{B_n}^{\beta_{n-1}}\ket{\text{GHZ}_{n+1}}_{A_1B_{1:n}},
\end{multline}
where $\beta_i\in\{0,1\}$ corresponds to the measurement outcome for the qubit $A_i$, $i\in\{2,3,\dotsc,n\}$. With the convention that $X^0\equiv\mathbb{I}$, we see that by applying the Pauli-$X$ gate to every qubit $B_i$ whose corresponding measured qubit $A_i$ had outcome 1, we obtain the desired GHZ state on the qubits $A_1,B_1,B_2,\dotsc,B_n$.

This protocol has the following representation as an LOCC quantum channel:
\begin{multline}\label{eq-Bell_star_to_GHZ_channel}
    \mathcal{L}(\rho_{A_{1:n}B_{1:n}})\\=\sum_{\vec{\beta}\in\{0,1\}^{n-1}} \left(\bra{\vec{\beta}}_{A_{2:n}}\text{CNOT}_n^{\dagger}\otimes X_{B_{2:n}}^{\vec{\beta}}\right)\left(\rho_{A_{1:n}B_{1:n}}\right)\\\left(\text{CNOT}_n\ket{\vec{\beta}}_{A_{2:n}}\otimes X_{B_{2:n}}^{\vec{\beta}}\right),
\end{multline}
where we have made use of the abbreviations
\begin{align}
    \text{CNOT}_n&\equiv \text{CNOT}_{A_1A_n}\text{CNOT}_{A_1A_{n-1}}\dotsb\nonumber\\
    &\qquad\qquad\qquad\text{CNOT}_{A_1A_3}\text{CNOT}_{A_1A_2},\\
    X_{B_{2:n}}^{\vec{\beta}}&\equiv X_{B_2}^{\beta_1}\otimes X_{B_3}^{\beta_2}\otimes\dotsb\otimes X_{B_n}^{\beta_{n-1}}.
\end{align}
for all $\vec{\beta}=(\beta_1,\beta_2,\dotsc,\beta_{n-1})\in\{0,1\}^{n-1}$.

For the proof of Theorem~\ref{thm:Bell_star_to_GHZ_post_fid}, we start by noting that
\begin{align}
    \ket{v(\vec{\beta})}&\coloneqq\left(\text{CNOT}_n\ket{\vec{\beta}}_{A_{2:n}}\otimes X_{B_{2:n}}^{\vec{\beta}}\right)\ket{\text{GHZ}_{n+1}}_{A_1B_{1:n}}\\
    &=\frac{1}{\sqrt{2}}\Big(\text{CNOT}_n\ket{0}_{A_1}\ket{\vec{\beta}}_{A_{2:n}}\otimes\ket{0}_{B_1}\ket{\vec{\beta}}_{B_{2:n}}\nonumber\\
    &\qquad+\text{CNOT}_n\ket{1}_{A_1}\ket{\vec{\beta}}_{B_{2:n}}\otimes\ket{1}_{B_1}\ket{\vec{\beta}\oplus\vec{1}}_{B_{2:n}}\Big)\\
    &=\frac{1}{\sqrt{2}}\Big(\ket{0}_{A_1}\ket{\vec{\beta}}_{A_{2:n}}\otimes\ket{0}_{B_1}\ket{\vec{\beta}}_{B_{2:n}}\nonumber\\
    &\qquad+\ket{1}_{A_1}\ket{\vec{\beta}\oplus\vec{1}}_{A_{2:n}}\otimes\ket{1}_{B_1}\ket{\vec{\beta}\oplus\vec{1}}_{B_{2:n}}\Big)\\
    &=\frac{1}{\sqrt{2}}\sum_{x\in\{0,1\}}\!\!\ket{x,x}_{A_1B_1}\bigotimes_{i=1}^{n-1}\ket{\beta_i\oplus x,\beta_i\oplus x}_{A_{i+1}B_{i+1}}.
\end{align}
Now, we use the fact that
\begin{equation}\label{eq-comp_diag_to_Bell}
    \ket{x}_A\ket{x}_B=\frac{1}{\sqrt{2}}\sum_{z\in\{0,1\}}(-1)^{xz}\ket{\Phi^{z,0}},
\end{equation}
for all $x\in\{0,1\}$. With this, we obtain
\begin{widetext}
\begin{align}
    \ket{v(\vec{\beta})}&=\frac{1}{\sqrt{2}}\frac{1}{\sqrt{2^n}}\sum_{x\in\{0,1\}}\sum_{z_0,z_1,\dotsc,z_{n-1}\in\{0,1\}}(-1)^{xz_0}\ket{\Phi^{z_0,0}}_{A_1B_1}\bigotimes_{i=1}^{n-1}\sum_{z_i\in\{0,1\}}(-1)^{(\beta_i\oplus x)z_i}\ket{\Phi^{z_i,0}}_{A_{i+1}B_{i+1}}\\
    &=\frac{1}{\sqrt{2}}\frac{1}{\sqrt{2^n}}\sum_{z_0,z_1,\dotsc,z_{n-1}\in\{0,1\}}\sum_{x\in\{0,1\}}(-1)^{x(z_0+z_1+\dotsb+z_{n-1})}(-1)^{\beta_1z_1+\beta_2z_2+\dotsb+\beta_{n-1}z_{n-1}}\ket{\Phi^{z_0,0}}_{A_1B_1}\bigotimes_{i=1}^{n-1}\ket{\Phi^{z_i,0}}_{A_{i+1}B_{i+1}}\\
    &=\frac{1}{\sqrt{2^{n-1}}}\sum_{\vec{z}\in\{0,1\}^{n-1}}(-1)^{\vec{\beta}\cdot\vec{z}}\ket{\Phi^{z',0}}_{A_1B_1}\bigotimes_{i=1}^{n-1}\ket{\Phi^{z_i,0}}_{A_{i+1}B_{i+1}},
\end{align}
where for the final equality we used the fact that
\begin{equation}
    \sum_{x\in\{0,1\}}(-1)^{x(z_0+z_1+\dotsb+z_{n-1})}=2\delta_{z_0+z_1+\dotsb+z_{n-1},0}=2\delta_{z_0,z_1\oplus z_2\oplus\dotsb\oplus z_{n-1}},
\end{equation}
and we let $z'\equiv z_1\oplus z_2\oplus\dotsb\oplus z_{n-1}$. Therefore, for an arbitrary state $\rho_{A_{1:n}B_{1:n}}$, we have
\begin{align}
    &\bra{\text{GHZ}_{n+1}}_{A_1B_{1:n}}\mathcal{L}\left(\rho_{A_{1:n}B_{1:n}}\right)\ket{\text{GHZ}_{n+1}}_{A_1B_{1:n}}\nonumber\\
    &\quad=\sum_{\vec{\beta}\in\{0,1\}^{n-1}}\bra{v(\vec{\beta})}\rho_{A_{1:n}B_{1:n}}\ket{v(\beta)}\\
    &\quad=\frac{1}{2^{n-1}}\sum_{\vec{\beta},\vec{z},\vec{y}\in\{0,1\}^{n-1}}(-1)^{\vec{\beta}\cdot\vec{z}}(-1)^{\vec{\beta}\cdot\vec{y}}\left(  \bra{\Phi^{z',0}}_{A_1B_1}\bigotimes_{i=1}^{n-1}\bra{\Phi^{z_i,0}}_{A_{i+1}B_{i+1}}\right)\left(\rho_{A_{1:n}B_{1:n}}\right)\left(\ket{\Phi^{y',0}}_{A_1B_1}\bigotimes_{i=1}^{n-1}\ket{\Phi^{y_i,0}}_{A_{i+1}B_{i+1}}\right)\\
    &\quad=\sum_{\vec{z}\in\{0,1\}^{n-1}} \left(\bra{\Phi^{z',0}}_{A_1B_1}\bigotimes_{i=1}^{n-1}\bra{\Phi^{z_i,0}}_{A_{i+1}B_{i+1}}\right)\left(\rho_{A_{1:n}B_{1:n}}\right)\left(\ket{\Phi^{z',0}}_{A_1B_1}\bigotimes_{i=1}^{n-1}\ket{\Phi^{z_i,0}}_{A_{i+1}B_{i+1}}\right),\label{eq-Bell_star_to_GHZ_fidelity_pf}
\end{align}
where to obtain the final equality we used the fact that
\begin{equation}
    \sum_{\vec{\beta}\in\{0,1\}^{n-1}}(-1)^{\vec{\beta}\cdot\vec{z}}(-1)^{\vec{\beta}\cdot\vec{y}}=\sum_{\vec{\beta}\in\{0,1\}^{n-1}}(-1)^{\vec{\beta}\cdot(\vec{z}\oplus\vec{y})}=2^{n-1}\delta_{\vec{z},\vec{y}},
\end{equation}
for all $\vec{z},\vec{y}\in\{0,1\}^{n-1}$. As \eqref{eq-Bell_star_to_GHZ_fidelity_pf} holds for all states $\rho_{A_{1:n}B_{1:n}}$, it holds also for the tensor-product state in the statement of the proposition. The proof is therefore complete.

\end{widetext}

\subsection*{Fusion of GHZ states in a star topology}\label{sec-GHZ_multiple_merge}


Consider $N\in\{2,3,\dotsc\}$ GHZ states, $\ket{\text{GHZ}_{n_1+1}}_{A_1B_{1:n_1}^1}, \ket{\text{GHZ}_{n_2+1}}_{A_1B_{1:n_2}^2},\dotsc,\ket{\text{GHZ}_{n_N+1}}_{A_NB_{1:n_N}^N}$, such that the qubits $A_i$, $i\in\{1,2,\dotsc,N\}$, are located at the same node. The protocol for fusing all of these GHZ states into the one large GHZ state $\ket{\text{GHZ}_{1+N_{\text{tot}}}}_{A_1B_{1:n_1}^2\dotsb B_{1:n_N}^N}$, with $N_{\text{tot}}=n_1+n_2+\dotsb+n_N$, is shown in Protocol~\ref{alg:multiple_GHZ_merge_protocol}.

\begin{algorithm}[H]
\caption{Fusion of GHZ states in a star topology~\cite{kruszynska2006purificationgraphstates,PWD18}}\label{alg:multiple_GHZ_merge_protocol}
\begin{algorithmic}[1]
\Require $N\in\{2,3,\dotsc\}$ GHZ states in a star topology, with one qubit of every GHZ state at a central node.
\Ensure GHZ state shared by all nodes. 
\State Apply the gates $\text{CNOT}_{A_1A_2},\text{CNOT}_{A_1A_3},\dotsc,\text{CNOT}_{A_1A_N}$.
\State Measure each of the qubits $A_2,A_3,\dotsc,A_N$ in the basis $\{\ket{0},\ket{1}\}$.
\State The outcome of measurement on $A_2$ is communicated to $B_{1:n_2}^2$, the outcome of measurement on $A_3$ is communicated to $B_{1:n_3}^3$, etc.
\State For every set $B_{1:n_i}^i$ of qubits, $i\in\{2,3,\dotsc,N\}$: if the outcome communicated by $A_i$ is 0, then do nothing; if the outcome communicated by $A_i$ is 1, then apply the Pauli-$X$ gate to all qubits $B_{1:n_i}^i$.
\end{algorithmic}
\end{algorithm}
The number of gates needed in Protocol~\ref{alg:multiple_GHZ_merge_protocol} is the number of $\text{CNOT}$ gates in the first step, which is $N-1$.
    
Let us verify that Protocol~\ref{alg:multiple_GHZ_merge_protocol} works as claimed. The initial state is
\begin{align}
    &\bigotimes_{j=1}^N \ket{\text{GHZ}_{n_j+1}}_{A_jB_{1:n_j}^j}\nonumber\\
    &\quad=\frac{1}{\sqrt{2^N}}\left(\sum_{x_1\in\{0,1\}} \ket{x_1}_{A_1}\otimes\ket{x_1}_{B_{1:n_1}^1}^{\otimes n_1}  \right)\nonumber\\
    &\qquad\qquad\qquad\otimes\left(\sum_{x_2\in\{0,1\}}\ket{x_2}_{A_2}\otimes\ket{x_2}_{B_{1:n_2}^2}^{\otimes n_2}  \right)\otimes\dotsb\nonumber\\
    &\qquad\qquad\qquad\otimes\left(\sum_{x_N\in\{0,1\}}\ket{x_N}_{A_N}\otimes\ket{x_N}_{B_{1:n_N}^N}^{\otimes n_N}  \right).
\end{align}
After the CNOT gates $\text{CNOT}_{A_1A_2},\text{CNOT}_{A_1A_3},\dotsc,\text{CNOT}_{A_1A_N}$, we obtain the state
\begin{multline}
    \frac{1}{\sqrt{2^N}}\sum_{x_1,x_2,\dotsc,x_N\in\{0,1\}}\ket{x_1}_{A_1}\otimes\ket{x_1}_{B_{1:n_1}^1}^{\otimes n_1}\otimes\ket{x_1\oplus x_2}_{A_2}\\\otimes\ket{x_2}_{B_{1:n_2}^2}^{\otimes n_2}\otimes\dotsb\otimes\ket{x_1\oplus x_N}_{A_N}\otimes\ket{x_{n_N}}_{B_{1:n_N}^N}^{\otimes n_N}.
\end{multline}
Then, for every collection $(y_1,y_2,\dotsc,y_{N-1})\in\{0,1\}^{N-1}$ of measurement outcomes corresponding to the $Z$-basis measurement of the qubits $A_2,A_3,\dotsc,A_N$, we obtain the state
\begin{widetext}
\begin{align}
    &\frac{1}{\sqrt{2}}\sum_{x_1,x_2,\dotsc,x_N\in\{0,1\}^N}\ket{x_1}_{A_1}\otimes\ket{x_1}_{B_{1:n_1}^1}^{\otimes n_1}\otimes\braket{y_1}{x_1\oplus x_2}\ket{x_2}_{B_{1:n_2}^2}^{\otimes n_2}\otimes\dotsb\otimes\braket{y_{N-1}}{x_1\oplus x_N}\ket{x_N}_{B_{1:n_N}^N}^{\otimes n_N}\\
    &\quad=\sum_{x_1\in\{0,1\}}\ket{x_1}_{A_1}\otimes\ket{x_1}_{B_{1:n_1}^1}^{\otimes n_1}\otimes\ket{x_1\oplus y_1}_{B_{1:n_2}^2}^{\otimes n_2}\otimes\dotsb\otimes\ket{x_1\oplus y_{N-1}}_{B_{1:n_N}^N}^{\otimes n_N}\\
    &\quad = \left(\mathbb{I}_{A_1B_{1:n_1}^1}\otimes (X^{y_1})_{B_{1:n_2}^2}^{\otimes n_2}\otimes\dotsb\otimes (X^{y_{N-1}})_{B_{1:n_N}^N}^{\otimes n_N}\right)\ket{\text{GHZ}_{1+N_{\text{tot}}}}_{A_1B_{1:n_1}\dotsb B_{1:n_N}},
\end{align}
as required.

Protocol~\ref{alg:multiple_GHZ_merge_protocol} has the following form as an LOCC channel:
\begin{multline}\label{eq-multiple_GHZ_merge_output}
    \mathcal{L}\!\left(\rho_{A_1B_{1:n_1}^1A_2B_{1:n_2}^2\dotsb A_NB_{1:n_N}^N}\right)=\sum_{\vec{y}\in\{0,1\}}\left(\bra{\vec{y}}_{A_{2:N}}\text{CNOT}_N\otimes (X^{y_1})_{B_{1:n_2}^2}^{\otimes n_2}\otimes\dotsb\otimes (X^{y_{N-1}})_{B_{1:n_N}^N}^{\otimes n_N}\right)\rho_{A_1B_{1:n_1}^1A_2B_{1:n_2}^2\dotsb A_NB_{1:n_N}^N}\\\times \left(\text{CNOT}_N\ket{\vec{y}}_{A_{2:N}}\otimes (X^{y_1})_{B_{1:n_2}^2}^{\otimes n_2}\otimes\dotsb\otimes (X^{y_{N-1}})_{B_{1:n_N}^N}^{\otimes n_N}\right),
\end{multline}
where $\rho_{A_1B_{1:n_1}^1A_2B_{1:n_2}^2\dotsb A_NB_{1:n_N}^N}$ is an arbitrary state, and we recall the definition of $\text{CNOT}_N$ from \eqref{eq-CNOT_star}.

\begin{proposition}[Fidelity after Protocol~\ref{alg:multiple_GHZ_merge_protocol}]\label{prop:GHZ_merge_star_fid}
    Consider states $\rho_{A_1B_{1:n_1}^1}^{1},\rho_{A_2B_{1:n_2}^2}^2,\dotsc,\rho_{A_NB_{1:n_N}^N}^{N}$, representing noisy input states to Protocol~\ref{alg:multiple_GHZ_merge_protocol}. With these input states, the fidelity of the state produced by Protocol~\ref{alg:multiple_GHZ_merge_protocol} with respect to the final, ideal GHZ state is
    \begin{align}
        &\bra{\text{GHZ}_{1+N_{\text{tot}}}}\mathcal{L}\!\left(\rho_{A_1B_{1:n_1}^1}^1\otimes\rho_{A_2B_{1:n_2}^2}^2\otimes\dotsb\rho_{ A_NB_{1:n_N}^N}^N\right)\ket{\text{GHZ}_{1+N_{\text{tot}}}}\nonumber\\
        &\quad=\sum_{\vec{z}\in\{0,1\}^{N-1}}\bra{\text{GHZ}_{n_1+1}^{(z',\vec{0})}}\rho_{A_1B_{1:n_1}^1}^1\ket{\text{GHZ}_{n_1+1}^{(z',\vec{0})}}\bra{\text{GHZ}_{n_2+1}^{(z_1,\vec{0})}}\rho_{A_2B_{1:n_2}^2}^2\ket{\text{GHZ}_{n_2+1}^{(z_1,\vec{0})}}\dotsb\bra{\text{GHZ}_{n_N+1}^{(z_{N-1},\vec{0})}}\rho_{A_NB_{1:n_N}^N}^N\ket{\text{GHZ}_{n_N+1}^{(z_{N-1},\vec{0})}},
    \end{align}
    $z'=z_1\oplus z_2\oplus\dotsb\oplus z_{N-1}$, and we have used the definition of the $n$-qubit GHZ basis in \eqref{eq-GHZ_basis_main}.
\end{proposition}

\begin{proof}
    We start by defining
    \begin{align}
        \ket{w(\vec{y})}&\coloneqq \left(\text{CNOT}_N\ket{\vec{y}}_{A_{2:N}}\otimes (X^{y_1})_{B_{1:n_2}^2}^{\otimes n_2}\otimes\dotsb\otimes (X^{y_{N-1}})_{B_{1:n_N}^N}^{\otimes n_N}\right)\ket{\text{GHZ}_{1+N_{\text{tot}}}}_{A_1B_{1:n_1}^1B_{1:n_2}^2\dotsb B_{1:n_N}^N}\\
        &=\frac{1}{\sqrt{2}}\sum_{x\in\{0,1\}}\text{CNOT}_N\ket{x}_{A_1}\ket{\vec{y}}_{A_{2:N}}\otimes\ket{x}_{B_{1:n_1}^1}^{\otimes n_1}\otimes (X^{y_1})_{B_{1:n_2}^2}^{\otimes n_2}\ket{x}_{B_{1:n_2}^2}^{\otimes n_2}\otimes\dotsb\otimes (X^{y_{N-1}})_{B_{1:n_N}^N}^{\otimes n_N}\ket{x}_{B_{1:n_N}^N}^{\otimes n_N}\\
        &=\frac{1}{\sqrt{2}}\sum_{x\in\{0,1\}}\ket{x}_{A_1}\ket{y_1\oplus x}_{A_2}\dotsb\ket{y_{N-1}\oplus x}_{A_N}\otimes\ket{x}_{B_{1:n_1}^1}^{\otimes n_1}\otimes \ket{y_1\oplus x}_{B_{1:n_2}^2}^{\otimes n_2}\otimes\dotsb\otimes\ket{y_{N-1}\oplus x}_{B_{1:n_N}}^{\otimes n_N}\\
        &=\frac{1}{\sqrt{2}}\sum_{x\in\{0,1\}}\ket{x}_{A_1}\ket{x}_{B_{1:n_1}}^{\otimes n_1}\otimes\ket{y_1\oplus x}_{A_2}\ket{y_1\oplus x}_{B_{1:n_2}^2}^{\otimes n_2}\otimes\dotsb\otimes\ket{y_{N-1}\oplus x}_{A_N}\ket{y_{N-1}\oplus x}_{B_{1:n_N}^N}^{\otimes n_N}.
    \end{align}
    Now, it is straightforward to show that $\ket{x}^{\otimes n}=\frac{1}{\sqrt{2}}\sum_{z\in\{0,1\}}(-1)^{zx}\ket{\text{GHZ}_n^{(z,\vec{0})}}$, for $x\in\{0,1\}$. With this, we find that
    \begin{align}
        \ket{w(\vec{y})}&=\frac{1}{\sqrt{2}}\sum_{x\in\{0,1\}}\left(\frac{1}{\sqrt{2}}\sum_{z_1\in\{0,1\}}(-1)^{z_1x}\ket{\text{GHZ}_{n_1+1}^{(z_1,\vec{0})}}_{A_1B_{1:n_1}^1}\right)\otimes\left(\frac{1}{\sqrt{2}}\sum_{z_2\in\{0,1\}}(-1)^{z_2(y_1\oplus x)}\ket{\text{GHZ}_{n_2+1}^{(z_2,\vec{0})}}_{A_2B_{1:n_2}^2}\right)\otimes\dotsb\nonumber\\
        &\qquad\qquad\qquad\qquad\dotsb\otimes\left(\frac{1}{\sqrt{2}}\sum_{z_N\in\{0,1\}}(-1)^{z_N(y_{N-1}\oplus x)}\ket{\text{GHZ}_{n_N+1}^{(z_N,\vec{0})}}_{A_NB_{1:n_N}^N}\right)\\
        &=\frac{1}{\sqrt{2^{N+1}}}\sum_{z_1,z_2,\dotsc,z_N\in\{0,1\}}(-1)^{z_2y_1+z_3y_2+\dotsb+z_Ny_{N-1}}\underbrace{\left(\sum_{x\in\{0,1\}}(-1)^{x(z_1\oplus z_2\oplus\dotsb\oplus z_N)}\right)}_{2\delta_{z_1,z_2\oplus\dotsb\oplus z_N}}\nonumber\\
        &\qquad\qquad\qquad\qquad\times\ket{\text{GHZ}_{n_1+1}^{(z_1,\vec{0})}}_{A_1B_{1:n_1}^1}\otimes\ket{\text{GHZ}_{n_2+1}^{(z_2,\vec{0})}}_{A_2B_{1:n_2}^2}\otimes\dotsb\otimes\ket{\text{GHZ}_{n_N+1}^{(z_N,\vec{0})}}_{A_NB_{1:n_N}^N}\\
        &=\frac{1}{\sqrt{2^{N-1}}}\sum_{z_2,z_3,\dotsc,z_N\in\{0,1\}}(-1)^{z_2y_1+z_3y_2+\dotsb+z_Ny_{N-1}}\ket{\text{GHZ}_{n_1+1}^{(z',\vec{0})}}_{A_1B_{1:n_1}^1}\otimes\ket{\text{GHZ}_{n_2+1}^{(z_2,\vec{0})}}_{A_2B_{1:n_2}^2}\otimes\dotsb\otimes\ket{\text{GHZ}_{n_N+1}^{(z_N,\vec{0})}}_{A_NB_{1:n_N}^N},
    \end{align}
    where in the final line we let $z'=z_2\oplus z_3\oplus\dotsb\oplus z_N$. Therefore, the fidelity of the state in \eqref{eq-multiple_GHZ_merge_output} is given by
    \begin{align}
        &\bra{\text{GHZ}_{1+N_{\text{tot}}}}\mathcal{L}\!\left(\rho_{A_1B_{1:n_1}^1A_2B_{1:n_2}^2\dotsb A_NB_{1:n_N}^N}\right)\ket{\text{GHZ}_{1+N_{\text{tot}}}}\nonumber\\
        &\quad=\sum_{\vec{y}\in\{0,1\}^{N-1}}\bra{w(\vec{y})}\rho_{A_1B_{1:n_1}^1A_2B_{1:n_2}^2\dotsb A_NB_{1:n_N}^N}\ket{w(\vec{y})}\\
        &\quad=\frac{1}{2^{N-1}}\sum_{\vec{y},\vec{z},\vec{w}\in\{0,1\}^{N-1}}(-1)^{\vec{z}\cdot\vec{y}}(-1)^{\vec{w}\cdot\vec{y}}\left(\bra{\text{GHZ}_{n_1+1}^{(z',\vec{0})}}_{A_1B_{1:n_1}^1}\otimes\bra{\text{GHZ}_{n_2+1}^{(z_1,\vec{0})}}_{A_2B_{1:n_2}^2}\otimes\dotsb\otimes\bra{\text{GHZ}_{n_N+1}^{(z_N,\vec{0})}}_{A_NB_{1:n_N}^N}\right)\nonumber\\
        &\qquad\qquad\times\rho_{A_1B_{1:n_1}^1A_2B_{1:n_2}^2\dotsb A_NB_{1:n_N}^N}\left(\ket{\text{GHZ}_{n_1+1}^{(w',\vec{0})}}_{A_1B_{1:n_1}^1}\otimes\ket{\text{GHZ}_{n_2+1}^{(w_1,\vec{0})}}_{A_2B_{1:n_2}^2}\otimes\dotsb\otimes\ket{\text{GHZ}_{n_N+1}^{(w_N,\vec{0})}}_{A_NB_{1:n_N}^N}\right)\\
        &\quad=\sum_{\vec{z}\in\{0,1\}^{N-1}}\left(\bra{\text{GHZ}_{n_1+1}^{(z',\vec{0})}}_{A_1B_{1:n_1}^1}\otimes\bra{\text{GHZ}_{n_2+1}^{(z_1,\vec{0})}}_{A_2B_{1:n_2}^2}\otimes\dotsb\otimes\bra{\text{GHZ}_{n_N+1}^{(z_N,\vec{0})}}_{A_NB_{1:n_N}^N}\right)\nonumber\\
        &\qquad\qquad\times\rho_{A_1B_{1:n_1}^1A_2B_{1:n_2}^2\dotsb A_NB_{1:n_N}^N}\left(\ket{\text{GHZ}_{n_1+1}^{(z',\vec{0})}}_{A_1B_{1:n_1}^1}\otimes\ket{\text{GHZ}_{n_2+1}^{(z_1,\vec{0})}}_{A_2B_{1:n_2}^2}\otimes\dotsb\otimes\ket{\text{GHZ}_{n_N+1}^{(z_N,\vec{0})}}_{A_NB_{1:n_N}^N}\right),
    \end{align}
    as required.
\end{proof}

\end{widetext}

\section*{Data and code availability}

The data and code for this project are available at \url{https://github.com/SidCh1/Graph-State-Generation-Project}.

\begin{acknowledgments}

We thank Frederik Hahn, Ernst Althaus, Luzie Marianczuk, Eleanor Reiffel, Kurt Mehlhorn, Lina Vandr\'{e}, Kenneth Goodenough and Annirudha Sen for useful discussions. We acknowledge funding from the BMBF/BMFTR in Germany (QR.X/QR.N, PhotonQ, QuKuK, QuaPhySI, HYBRID), the EU's Horizon Research and Innovation Actions (CLUSTEC), and from the Deutsche Forschungsgemeinschaft (DFG, German Research Foundation)--Project-ID 429529648 – TRR 306 QuCoLiMa (``Quantum Cooperativity of Light and Matter'').

\end{acknowledgments}

\section*{Author contributions}

SC, SK, and PvL jointly conceived the project. SC and SK conducted the analysis, and SC developed and executed the numerical simulations. All authors worked together to write the manuscript.

\section*{Competing interests}

The authors declare no competing interests.

\newpage


\bibliography{refs_main_npjQI}

\clearpage\newpage

\let\addcontentsline\oldaddcontentsline 
\title{Supplementary Information: A resource- and computationally-efficient protocol for\\multipartite entanglement distribution in Bell-pair networks}

\maketitle

\onecolumngrid


\tableofcontents

\section{Relation to prior work}\label{sec-prior_work}

Prior work on the distribution of multipartite entanglement in genuine network settings, going beyond repeater chains, includes Refs.~\cite{CC12,WZM+16,PWD18,KVS+19,PD19,MMG19,brito2020statistical,deBone2020GHZdistillBell,fischer2021graph,bugalho2023multipartite,sadhu2023practical,avis2023multipartitecentral,roga2023dickestatedistribution,shimizu2024GHZdistribution, sen2023multipartitesubgraphcomplementation}. Below, we highlight and compare our work to some of the most relevant of these prior works. 
    \begin{itemize}
        \item Reference~\cite{CC12} considers the distribution of arbitrary graph states, in which the graph of the corresponding graph state matches the topology of the underlying network of Bell pairs.
        
        \item Reference~\cite{fischer2021graph} involves local preparation of the desired graph state, followed by teleportation of the state to the desired parties using the Bell pair network, which can be arbitrary. The basis of their protocol is the so-called ``Bipartite~A'' protocol of Ref.~\cite{CC12}. Essentially, their protocol is the Bipartite~A protocol supplemented by an analysis of completion time based on particular algorithms for finding paths in the Bell-pair network for the purpose of teleporting the locally-created graph state.
        
        \item Reference~\cite{avis2023multipartitecentral} considers the distribution of GHZ states in a star Bell-pair network topology, in which the central ``factory'' node locally produces the GHZ states and then uses the Bell pairs to teleport it to the outer parties. This is also essentially the Bipartite~A protocol of Ref.~\cite{CC12}, with noisy Bell pairs and an analysis of the rate at which the GHZ states can be distributed.
        
        \item Reference~\cite{MMG19} uses the Steiner tree, along with a sub-routine called ``star expansion'', to distribution a GHZ state between a given set of nodes in the network.
        
        \item Reference~\cite{bugalho2023multipartite} also makes use of the Steiner tree problem, and includes an analysis with noisy Bell pairs.
         
        \item Similarly to Ref.~\cite{avis2023multipartitecentral}, Ref.~\cite{sen2023multipartitesubgraphcomplementation} requires a star Bell-pair topology. Here, an arbitrary graph state can be produced using a ``subgraph complementation system'' associated with the graph of the desired graph state. Reference~\cite{shimizu2024GHZdistribution} also considers a star topology only.
    \end{itemize}

\begin{figure*}
    \centering
    \includegraphics[width=0.8\textwidth]{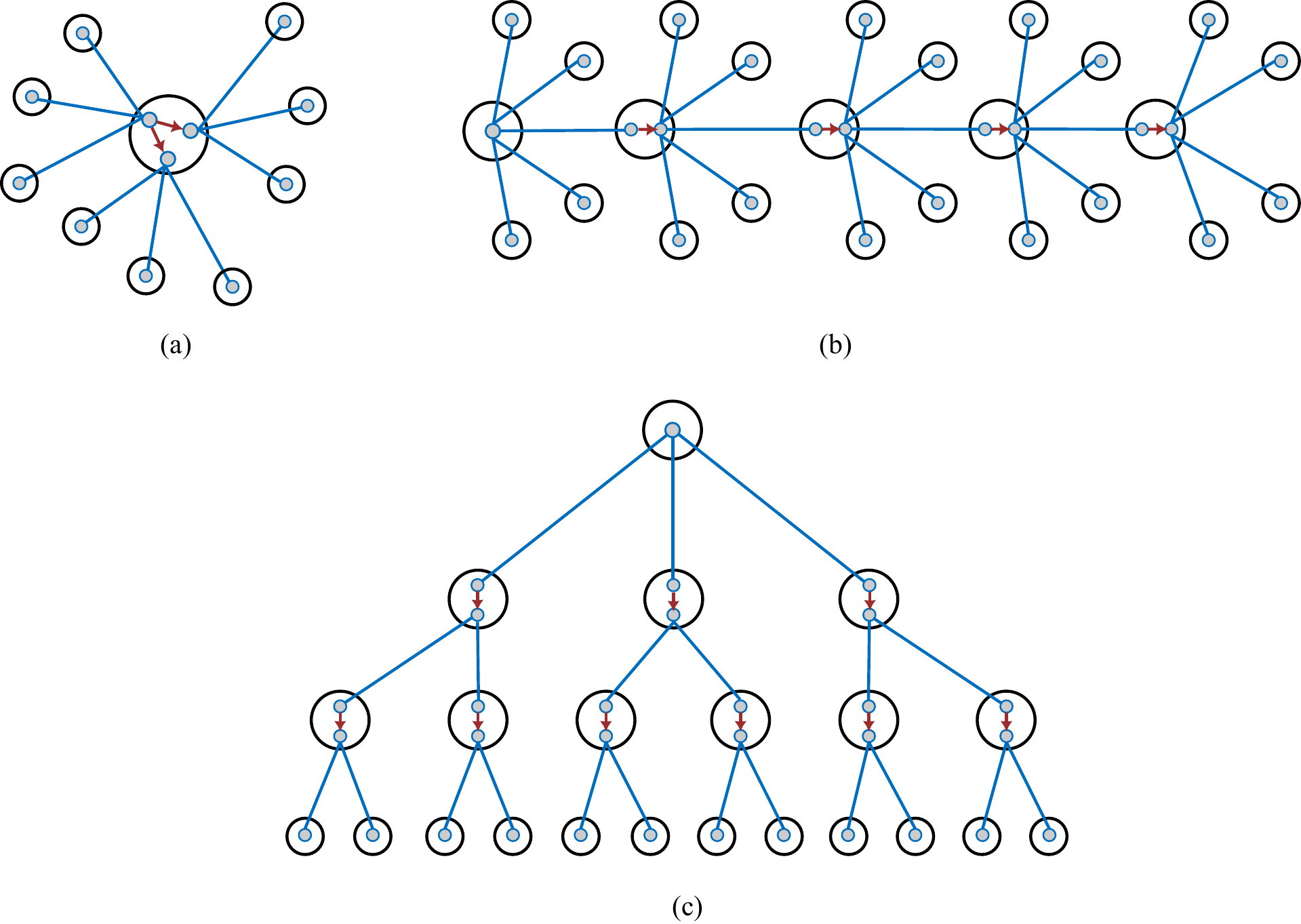}
    \caption{Fusion of GHZ states in various configurations. The arrows indicate CNOT gates, pointing from the control qubit to the target qubit. The target qubit is then measured in the Pauli-$Z$ basis. (a) Three GHZ states in a star configuration (Protocol~\ref{alg:multiple_GHZ_merge_protocol}). (b) Five GHZ states in a linear chain (Protocol~\ref{alg:multiple_GHZ_merge_linear_protocol}). (c) 10 GHZ states in a tree configuration (Protocol~\ref{alg:multiple_GHZ_merge_tree_protocol_main}).}
    \label{fig:GHZ_merging}
\end{figure*}

\section{Fusion of GHZ states in a linear topology (Protocol~\ref{alg:multiple_GHZ_merge_linear_protocol})}

Fusing GHZ states in a linear topology is presented in Protocol~\ref{alg:multiple_GHZ_merge_linear_protocol}.

\begin{algorithm}[H]
\caption{Fusion of GHZ states in a linear topology~\cite{kruszynska2006purificationgraphstates,PWD18}}\label{alg:multiple_GHZ_merge_linear_protocol}
\begin{algorithmic}[1]
\Require $N\in\{2,3,\dotsc\}$ GHZ states, $\ket{\text{GHZ}_{n_1+1}}_{A_{1:n_1}^1R_1},\, \ket{\text{GHZ}_{n_2+2}}_{R_2A_{1:n_2}^2R_3},\dotsc,\ket{\text{GHZ}_{n_{N-1}+2}}_{R_{2N-4}A_{1:n_{N-1}}^{N-1}R_{2N-3}},\,\ket{\text{GHZ}_{n_N+1}}_{R_{2N-2}A_{1:n_N}^N}$, such that the qubits $R_i$ and $R_{i+1}$, $i\in\{1,3,5,\dotsc,2N-1\}$, are located at the same node; see Fig.~\ref{fig:GHZ_merging}(b).
\Ensure GHZ state $\ket{\text{GHZ}_{N_{\text{tot}}}}$, with $N_{\text{tot}}=n_1+n_2+\dotsb+n_N+N-1$, shared by $A_{1:n_1}^1,R_1,A_{1:n_2}^2,R_3,\dotsc,A_{1:n_{N-1}}^{N-1},R_{2N-3},A_{1:n_N}^N$.
\State Apply the CNOT gates $\text{CNOT}_{R_iR_{i+1}}$, $i\in\{1,3,5,\dotsc,2N-3\}$.
\State Measure the target qubits $R_2,R_4,R_6,\dotsc,R_{2N-2}$ in the $Z$-basis $\{\ket{0},\ket{1}\}$. The outcomes $z_1,z_2,\dotsc,z_{N-1}\in\{0,1\}$ are communicated to all nodes.
\State The qubits associated with the second GHZ state apply the Pauli-$X$ correction $X^{z_1}$; the qubits associated with the third GHZ state apply the Pauli-$X$ correction $X^{z_1\oplus z_2}$; etc. In general, the qubits associated with the $k$-th node, $k\in\{2,3,\dotsc,N\}$, apply the Pauli-$X$ correction $X^{z_1\oplus z_2\oplus\dotsb\oplus z_{k-1}}$.
\end{algorithmic}
\end{algorithm}

Let us verify that Protocol~\ref{alg:multiple_GHZ_merge_linear_protocol} performs as expected. The initial state is
\begin{multline}
    \ket{\text{GHZ}_{n_1+1}}_{A_{1:n_1}^1R_1}\otimes\ket{\text{GHZ}_{n_2+2}}_{R_2A_{1:n_2}^2R_3}\otimes\ket{\text{GHZ}_{n_3+2}}_{R_4A_{1:n_3}^3R_5}\dotsb\otimes\ket{\text{GHZ}_{n_{N-1}+2}}_{R_{2N-4}A_{1:n_{N-1}}^{N-1}R_{2N-3}}\otimes\ket{\text{GHZ}_{n_N+1}}_{R_{2N-2}A_{1:n_N}^N}\\=\frac{1}{\sqrt{2^N}}\sum_{x_1,x_2,x_3,\dotsc,x_{N-1},x_N\in\{0,1\}}\ket{x_1}_{A_{1:n_1}^1}^{\otimes n_1}\ket{x_1}_{R_1}\otimes\ket{x_2}_{R_2}\ket{x_2}_{A_{1:n_2}^2}^{\otimes n_2}\ket{x_2}_{R_3}\otimes\ket{x_3}_{R_4}\ket{x_3}_{A_{1:n_3}}^{\otimes n_3}\ket{x_3}_{R_5}\otimes\dotsb\\\dotsb\otimes\ket{x_{N-1}}_{R_{2N-4}}\ket{x_{N-1}}_{A_{1:n_{N-1}}^{N-1}}^{\otimes n_{N-1}}\ket{x_{N-1}}_{R_{2N-3}}\otimes\ket{x_N}_{R_{2N-2}}\ket{x_N}_{A_{1:n_N}^N}^{\otimes n_N}.
\end{multline}
After the CNOT gates, we obtain
\begin{multline}
    \frac{1}{\sqrt{2^N}}\sum_{x_1,x_2,x_3,\dotsc,x_{N-1},x_N\in\{0,1\}}\ket{x_1}_{A_{1:n_1}^1}^{\otimes n_1}\ket{x_1}_{R_1}\otimes\ket{x_1\oplus x_2}_{R_2}\ket{x_2}_{A_{1:n_2}^2}^{\otimes n_2}\ket{x_2}_{R_3}\otimes\ket{x_2\oplus x_3}_{R_4}\ket{x_3}_{A_{1:n_3}^3}^{\otimes n_3}\ket{x_3}_{R_5}\otimes\dotsb\\
    \dotsb\otimes\ket{x_{N-2}\oplus x_{N-1}}_{R_{2N-4}}\ket{x_{N-1}}_{A_{1:n_{N-1}}^{N-1}}^{\otimes n_{N-1}}\ket{x_{N-1}}_{R_{2N-3}}\otimes\ket{x_{N-1}\oplus x_N}_{R_{2N-2}}\ket{x_N}_{A_{1:n_N}^N}^{\otimes n_N}.
\end{multline}
Then, upon measuring the target qubits $R_2,R_4,\dotsc$ in the $\{\ket{0},\ket{1}\}$ basis, to every outcome string $(z_1,z_2,\dotsc,z_{N-1})\in\{0,1\}^{N-1}$ we have the following associated (unnormalized) post-measurement state:
\begin{multline}
    \frac{1}{\sqrt{2^N}}\sum_{x_1,x_2,x_3,\dotsc,x_{N-1},x_N\in\{0,1\}}\ket{x_1}_{A_{1:n_1}^1}^{\otimes n_1}\ket{x_1}_{R_1}\otimes\braket{z_1}{x_1\oplus x_2}_{R_2}\ket{x_2}_{A_{1:n_2}^2}^{\otimes n_2}\ket{x_2}_{R_3}\otimes\braket{z_2}{x_2\oplus x_3}_{R_4}\ket{x_3}_{A_{1:n_3}^3}^{\otimes n_3}\ket{x_3}_{R_5}\otimes\dotsb\\
    \dotsb\otimes\braket{z_{N-2}}{x_{N-2}\oplus x_{N-1}}_{R_{2N-4}}\ket{x_{N-1}}_{A_{1:n_{N-1}}^{N-1}}^{\otimes n_{N-1}}\ket{x_{N-1}}_{R_{2N-3}}\otimes\braket{z_{N-1}}{x_{N-1}\oplus x_N}_{R_{2N-2}}\ket{x_N}_{A_{1:n_N}^N}^{\otimes n_N},
\end{multline}
which we can simplify to
\begin{multline}
    \frac{1}{\sqrt{2^N}}\sum_{x_1\in\{0,1\}}\ket{x_1}_{A_{1:n_1}^1}^{\otimes n_1}\ket{x_1}_{R_1}\otimes\ket{z_1\oplus x_1}_{A_{1:n_2}^2}^{\otimes n_2}\ket{z_1\oplus x_1}_{R_3}\otimes\ket{z_1\oplus z_2\oplus x_1}_{A_{1:n_3}^3}^{\otimes n_3}\ket{z_1\oplus z_2\oplus x_1}_{R_5}\otimes\dotsb\\
    \dotsb\otimes\ket{z_1\oplus z_2\oplus\dotsb\oplus z_{N-2}\oplus x_{1}}_{A_{1:n_{N-1}}^{N-1}}^{\otimes n_{N-1}}\ket{z_1\oplus z_2\oplus\dotsb\oplus z_{N-2}\oplus x_{1}}_{R_{2N-3}}\otimes\ket{z_1\oplus z_2\oplus\dotsb\oplus z_{N-1}\oplus x_1}_{A_{1:n_N}^N}^{\otimes n_N}\\=\left((X^{z_1})_{A_{1:n_2}^2R_3}^{\otimes(n_2+1)}\otimes (X^{z_1\oplus z_2})_{A_{1:n_3}^3R_5}^{\otimes (n_3+1)}\otimes\dotsb\otimes (X^{z_1\oplus z_2\oplus\dotsb\oplus z_{N-2}})_{A_{1:n_{N-1}}^{N-1}R_{2N-3}}^{\otimes(n_{N-1}+1)}\otimes (X^{z_1\oplus z_2\oplus\dotsb\oplus z_{N-1}})_{A_{1:n_N}^N}^{\otimes n_N}\right)\ket{\text{GHZ}_{N_{\text{tot}}}}.
\end{multline}
We thus see that, as expected, the resulting state is the required GHZ state up to the appropriate Pauli-$X$ corrections.

As an LOCC channel, we can write the action of Protocol~\ref{alg:multiple_GHZ_merge_linear_protocol} as follows:
\begin{multline}\label{eq-multpiple_GHZ_merge_linear_protocol_LOCC_chan}
    \mathcal{L}\left(\rho_{A_{1:n_1}^1R_1R_2A_{1:n_2}^2R_3\dotsb R_{2N-4}A_{1:n_{N-1}}^{N-1}R_{2N-3}R_{2N-2}A_{1:n_N}^N}\right)\\=\sum_{\vec{z}\in\{0,1\}^{N-1}}\left((X^{y_1})_{A_{1:n_2}^2R_3}^{\otimes(n_2+1)}\otimes (X^{y_2})_{A_{1:n_3}^3R_5}^{\otimes (n_3+1)}\otimes\dotsb\otimes (X^{y_{N-2}})_{A_{1:n_{N-1}}^{N-1}R_{2N-3}}^{\otimes(n_{N-1}+1)}\otimes (X^{y_{N-1}})_{A_{1:n_N}^N}^{\otimes n_N}\right)\\\times\left(\bra{\vec{z}}_{R_2R_4R_6\dotsb R_{2N-2}}\text{CNOT}_N^{\prime}\right)\rho_{A_{1:n_1}^1R_1R_2A_{1:n_2}^2R_3\dotsb R_{2N-4}A_{1:n_{N-1}}^{N-1}R_{2N-3}R_{2N-2}A_{1:n_N}^N}\left(\text{CNOT}_N^{\prime}\ket{\vec{z}}_{R_2R_4R_6\dotsb R_{2N-2}}\right)\\\times\left((X^{y_1})_{A_{1:n_2}^2R_3}^{\otimes(n_2+1)}\otimes (X^{y_2})_{A_{1:n_3}^3R_5}^{\otimes (n_3+1)}\otimes\dotsb\otimes (X^{y_{N-2}})_{A_{1:n_{N-1}}^{N-1}R_{2N-3}}^{\otimes(n_{N-1}+1)}\otimes (X^{y_{N-1}})_{A_{1:n_N}^N}^{\otimes n_N}\right),
\end{multline}
where
\begin{align}
    y_k&\coloneqq z_1\oplus z_2\oplus\dotsb\oplus z_k,\quad k\in\{1,2,\dotsc,N-1\},\label{eq-multiple_GHZ_merge_linear_pf1}\\
    \text{CNOT}_N^{\prime}&\coloneqq \text{CNOT}_{R_1R_2}\text{CNOT}_{R_3R_4}\dotsb \text{CNOT}_{R_{2N-3}R_{2N-2}}.\label{eq-multiple_GHZ_merge_linear_pf2}
\end{align}

\begin{proposition}[Fidelity after Protocol~\ref{alg:multiple_GHZ_merge_linear_protocol}]\label{prop:multiple_GHZ_merge_linear_protocol_post_fid}
    Consider states $\rho_{A_{1:n_1}^1R_1}^1,\rho_{R_2A_{1:n_2}^2R_3}^2,\dotsc,\rho_{R_{2N-2}A_{1:n_N}^N}^N$, representing noisy input states to Protocol~\ref{alg:multiple_GHZ_merge_linear_protocol}. With these input states, the fidelity of the state produced by Protocol~\ref{alg:multiple_GHZ_merge_linear_protocol} with respect to the final, ideal GHZ state is
    \begin{multline}\label{eq-multiple_GHZ_merge_linear_protocol_fidelity}
        \bra{\text{GHZ}_{N_{\text{tot}}}}\mathcal{L}\!\left(\rho_{A_{1:n_1}^1R_1}^1\otimes\rho_{R_2A_{1:n_2}^2R_3}^2\otimes\dotsb\otimes\rho_{R_{2N-2}A_{1:n_N}^N}^N\right)\\
        =\sum_{y_1,y_2,\dotsc,y_{N-1}\in\{0,1\}}\bra{\text{GHZ}_{n_1+1}^{(y_1,\vec{0})}}\rho_{A_{1:n_1}^1R_1}^1\ket{\text{GHZ}_{n_1+1}^{(y_1,\vec{0})}}\bra{\text{GHZ}_{n_2+2}^{(y_2,\vec{0})}}\rho_{R_2A_{1:n_2}^2R_3}^2\ket{\text{GHZ}_{n_2+2}^{(y_2,\vec{0})}}\dotsb\\\dotsb\bra{\text{GHZ}_{n_N+1}^{(y',\vec{0})}}\rho_{R_{2N-2}A_{1:n_N}}^N\ket{\text{GHZ}_{n_N+1}^{(y',\vec{0})}},
    \end{multline}
    where $y'=y_1\oplus y_2\oplus\dotsb\oplus y_{N-1}$.
\end{proposition}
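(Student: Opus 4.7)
The plan is to mirror the proof of Proposition~\ref{prop:GHZ_merge_star_fid}: back-propagate the target GHZ state through the LOCC channel representation \eqref{eq-multpiple_GHZ_merge_linear_protocol_LOCC_chan}, expand the resulting computational-basis states in the GHZ basis using the identity in \eqref{eq-n_fold_ket_GHZ}, and then collapse the double sum via orthogonality of the Pauli phases.

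Concretely, for each outcome string $\vec{z}\in\{0,1\}^{N-1}$ I would define
\begin{equation}
    \ket{v(\vec{z})} \coloneqq \bigl(\text{CNOT}_N^{\prime}\ket{\vec{z}}_{R_2R_4\dotsb R_{2N-2}}\otimes C(\vec{z})\bigr)\ket{\text{GHZ}_{N_{\text{tot}}}},
\end{equation}
where $C(\vec{z})$ is the tensor product of the Pauli-$X$ correction operators in \eqref{eq-multpiple_GHZ_merge_linear_protocol_LOCC_chan} controlled by the partial sums $y_k$ in \eqref{eq-multiple_GHZ_merge_linear_pf1}. Writing $\ket{\text{GHZ}_{N_{\text{tot}}}}=\frac{1}{\sqrt{2}}\sum_{x\in\{0,1\}}\ket{x}^{\otimes N_{\text{tot}}}$, the CNOT gates on the register pairs $(R_{2k-1},R_{2k})$ simply copy the value of $x$ into each $R_{2k}$, and the projections $\bra{z_k}_{R_{2k}}$ force the $k$-th block (corresponding to the $(k{+}1)$-th GHZ state) to be in the computational-basis value $x\oplus y_k$ (which is precisely compensated by the correction $X^{y_k}$ on that block). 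After these manipulations, $\ket{v(\vec{z})}$ reduces to $\frac{1}{\sqrt{2}}\sum_{x\in\{0,1\}}$ of a tensor product where the first block is $\ket{x}^{\otimes(n_1+1)}$ and each subsequent $k$-th block is also $\ket{x}^{\otimes(n_k+1)}$ or $\ket{x}^{\otimes n_N}$ for the last block.

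Next, I would apply the expansion \eqref{eq-n_fold_ket_GHZ} to every block, introducing auxiliary variables $w_1,\dotsc,w_N\in\{0,1\}$. The sum over $x$ produces the factor $\sum_{x}(-1)^{x(w_1\oplus\dotsb\oplus w_N)}=2\delta_{w_1,w_2\oplus\dotsb\oplus w_N}$, enforcing a parity constraint that eliminates one degree of freedom. Collecting the remaining phases yields
\begin{equation}
    \ket{v(\vec{z})}=\frac{1}{\sqrt{2^{N-1}}}\!\!\sum_{w_2,\dotsc,w_N\in\{0,1\}}(-1)^{\phi(\vec{z},\vec{w})}\,\ket{\text{GHZ}^{(w',\vec{0})}_{n_1+1}}\otimes\bigotimes_{k=2}^{N}\ket{\text{GHZ}^{(w_k,\vec{0})}_{\bullet}},
\end{equation}
with $w'=w_2\oplus\dotsb\oplus w_N$ and where $\phi(\vec{z},\vec{w})$ is linear in $\vec{z}$. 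A routine change of variables from $\vec{z}$ to the partial sums $\vec{y}$ given in \eqref{eq-multiple_GHZ_merge_linear_pf1} (which is a bijection on $\{0,1\}^{N-1}$) makes this phase the familiar $\vec{y}\cdot\vec{w}$-type expression.

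Finally, I would insert $\ket{v(\vec{z})}$ into the fidelity $\sum_{\vec{z}}\bra{v(\vec{z})}\rho\ket{v(\vec{z})}$ for the product input state. The double sum over the GHZ-basis labels $\vec{w}$ and $\vec{w}\,'$ will have the cross-phase $\sum_{\vec{z}}(-1)^{\vec{z}\cdot(\vec{w}\oplus \vec{w}\,')}=2^{N-1}\delta_{\vec{w},\vec{w}\,'}$, which kills the off-diagonal terms in the GHZ basis and leaves exactly the claimed diagonal sum in \eqref{eq-multiple_GHZ_merge_linear_protocol_fidelity}, after identifying each $w_k$ with the corresponding $y_k$ via the change of variables.

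The main obstacle is bookkeeping rather than conceptual: one has to carefully track how the $X^{y_k}$ correction on the $k$-th block (with $y_k$ a partial sum of the $z_j$) interacts with the measurement projectors $\bra{z_k}$ and the CNOT copy operation, so that every block ends up as $\ket{x}^{\otimes(n_k+1)}$ (or $\ket{x}^{\otimes n_N}$) with the same $x$. Once this telescoping cancellation is made explicit, the remainder of the argument is a verbatim copy of the star-topology proof, with the sole structural difference being the bijection $\vec{z}\leftrightarrow\vec{y}$ that encodes the linear chain rather than the star.
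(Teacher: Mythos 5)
Your proposal follows essentially the same route as the paper's proof: back-propagate the target GHZ state through the Kraus operators of \eqref{eq-multpiple_GHZ_merge_linear_protocol_LOCC_chan}, expand each block in the GHZ basis via \eqref{eq-n_fold_ket_GHZ}, and kill the off-diagonal terms by summing the $\vec{z}$-linear phase over all outcomes (the paper does this directly with telescoping deltas on the partial sums rather than via an explicit change of variables, but the two are equivalent since the map $\vec{z}\mapsto\vec{y}$ is an invertible linear map over $\mathbb{F}_2$). One bookkeeping slip worth correcting: the corrections do \emph{not} return every block to $\ket{x}^{\otimes(\cdot)}$ --- in the back-propagated vector the $(k{+}1)$-th block ends up as $\ket{x\oplus y_k}^{\otimes(\cdot)}$, and it is precisely this residual dependence on the outcomes that generates the phase $(-1)^{w_{k+1}y_k}$ on which your orthogonality step relies; as written, your claim that the blocks all carry the same $x$ would make $\ket{v(\vec{z})}$ independent of $\vec{z}$ and contradict the $\vec{z}$-linear phase you invoke immediately afterwards. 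With that corrected, the rest of the argument goes through exactly as in the paper's proofs of Propositions~\ref{prop:GHZ_merge_star_fid} and \ref{prop:multiple_GHZ_merge_linear_protocol_post_fid}.
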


\begin{proof}
    We start by defining the vector
    \begin{multline}
        \ket{u(\vec{z})}\coloneqq\left(\text{CNOT}_N^{\prime}\ket{\vec{z}}_{R_2R_4R_6\dotsb R_{2N-2}}\right)\left((X^{y_1})_{A_{1:n_2}^2R_3}^{\otimes(n_2+1)}\otimes (X^{y_2})_{A_{1:n_3}^3R_5}^{\otimes (n_3+1)}\otimes\dotsb\right.\\\left.\dotsb\otimes (X^{y_{N-2}})_{A_{1:n_{N-1}}^{N-1}R_{2N-3}}^{\otimes(n_{N-1}+1)}\otimes (X^{y_{N-1}})_{A_{1:n_N}^N}^{\otimes n_N}\right)\ket{\text{GHZ}_{N_{\text{tot}}}}_{A_{1:n_1}^1R_1A_{1:n_2}^2R_3\dotsb A_{1:n_{N-1}}^{N-1}R_{2N-3}A_{1:n_N}^N},
    \end{multline}
    which can be simplified as follows:
    \begin{align}
        \ket{u(\vec{z})}&=\frac{1}{\sqrt{2}}\sum_{\vec{x}\in\{0,1\}}\ket{x}_{A_{1:n_1}^1}^{\otimes n_1}\text{CNOT}_{R_1R_2}\ket{x,z_1}_{R_1R_2}\otimes(X^{y_1}\ket{x})_{A_{1:n_2}^2}^{\otimes n_2}\text{CNOT}_{R_3R_4}X_{R_3}^{y_1}\ket{x,z_2}_{R_3R_4}\otimes (X^{y_2}\ket{x})_{A_{1:n_3}^3}^{\otimes n_3}\text{CNOT}_{R_5R_6}X_{R_5}^{y_2}\ket{x,z_3}_{R_5R_6}\otimes\dotsb\nonumber\\
        &\qquad\qquad\dotsb\otimes(X^{y_{N-2}}\ket{x})_{A_{1:n_{N-1}}^{N-1}}^{\otimes n_{N-1}}\text{CNOT}_{R_{2N-3}R_{2N-2}}X_{R_{2N-3}}^{y_{N-2}}\ket{x,z_{N-1}}_{R_{2N-3}R_{2N-2}}(X^{y_{N-1}}\ket{x})_{A_{1:n_N}}^{\otimes n_{N}}\\
        &=\frac{1}{\sqrt{2}}\sum_{x\in\{0,1\}}\ket{x}_{A_{1:n_1}}^{\otimes n_1}\ket{x}_{R_1}\otimes\ket{x\oplus z_1}_{R_2}\ket{x\oplus y_1}_{A_{1:n_2}^2}^{\otimes n_2}\ket{x\oplus y_1}_{R_3}\otimes\ket{x\oplus y_1\oplus z_2}_{R_4}\ket{x\oplus y_2}_{A_{1:n_3}^3}^{\otimes n_3}\ket{x\oplus y_2}_{R_5}\ket{x\oplus y_2\oplus z_3}_{R_6}\otimes\dotsb\nonumber\\
        &\qquad\qquad\dotsb\otimes\ket{x\oplus y_{N-2}}_{A_{1:n_{N-1}}^{N-1}}^{\otimes n_{N-1}}\ket{x\oplus y_{N-2}}_{R_{2N-3}}\otimes\ket{x\oplus y_{N-2}\oplus z_{N-1}}_{R_{2N-2}}\ket{x\oplus y_{N-1}}_{A_{1:n_N}^N}^{\otimes n_{N}}\\
        &=\frac{1}{\sqrt{2}}\sum_{x\in\{0,1\}}\ket{x}_{A_{1:n_1}}^{\otimes n_1}\ket{x}_{R_1}\otimes\ket{x\oplus z_1}_{R_2}\ket{x\oplus z_1}_{A_{1:n_2}^2}^{\otimes n_2}\ket{x\oplus z_1}_{R_3}\otimes\ket{x\oplus z_1\oplus z_2}_{R_4}\ket{x\oplus z_1\oplus z_2}_{A_{1:n_3}^3}^{\otimes n_3}\ket{x\oplus z_1\oplus z_2}_{R_5}\nonumber\\
        &\qquad\qquad\otimes\ket{x\oplus z_1\oplus z_2\oplus z_3}_{R_6}\otimes\dotsb\otimes\ket{x\oplus z_1\oplus\dotsb\oplus z_{N-2}}_{A_{1:n_{N-1}}^{N-1}}^{\otimes n_{N-1}}\ket{x\oplus z_1\oplus\dotsb\oplus z_{N-2}}_{R_{2N-3}}\nonumber\\
        &\qquad\qquad\otimes\ket{x\oplus z_1\oplus\dotsb\oplus z_{N-2}\oplus z_{N-1}}_{R_{2N-2}}\ket{x\oplus z_1\oplus\dotsb\oplus z_{N-1}}_{A_{1:n_N}^N}^{\otimes n_{N}},
    \end{align}
    where we have used \eqref{eq-multiple_GHZ_merge_linear_pf1} and \eqref{eq-multiple_GHZ_merge_linear_pf2}. Let us now use $\ket{x}^{\otimes n}=\frac{1}{\sqrt{2}}\sum_{z\in\{0,1\}}(-1)^{zx}\ket{\text{GHZ}_n^{(z,\vec{0})}}$ to make further simplifications, as follows:
    \begin{align}
        \ket{u(\vec{z})}&=\frac{1}{\sqrt{2}}\frac{1}{\sqrt{2^N}}\sum_{x\in\{0,1\}}\sum_{y_1,y_2,\dotsc,y_N\in\{0,1\}}(-1)^{xy_1}\ket{\text{GHZ}_{n_1+1}^{(y_1,\vec{0})}}_{A_{1:n_1}^1R_1}\otimes (-1)^{(x\oplus z_1)y_2}\ket{\text{GHZ}_{n_2+2}^{(y_2,\vec{0})}}_{R_2A_{1:n_2}^2R_3}\nonumber\\
        &\qquad\qquad\otimes (-1)^{(x\oplus z_1\oplus z_2)y_3}\ket{\text{GHZ}_{n_3+2}^{(y_3,\vec{0})}}_{R_4A_{1:n_3}^3R_5}\otimes\dotsb\otimes (-1)^{y_{N-1}(x\oplus z_1\oplus\dotsb\oplus z_{N-2})}\ket{\text{GHZ}_{n_{N-1}+2}^{(y_{N-1},\vec{0})}}_{R_{2N-4}A_{1:n_{N-1}}^NR_{2N-3}}\\
        &\qquad\qquad\otimes(-1)^{y_N(x\oplus z_1\oplus\dotsb\oplus z_{N-1})}\ket{\text{GHZ}_{n_N+1}^{(y_N,\vec{0})}}_{R_{2N-2}A_{1:n_N}}\\
        &=\frac{1}{\sqrt{2}}\frac{1}{\sqrt{2^N}}\sum_{y_1,y_2,\dotsc,y_N\in\{0,1\}}\left(\sum_{x\in\{0,1\}}(-1)^{x(y_1+y_2+\dotsb+y_N)}\right)(-1)^{z_1(y_2+\dotsb +y_N)}(-1)^{z_2(y_3+\dotsb+y_N)}\dotsb(-1)^{z_{N-2}(y_{N-1}+y_N)}(-1)^{z_{N-1}y_{N}}\nonumber\\
        &\qquad\qquad\times \ket{\text{GHZ}_{n_1+1}^{(y_1,\vec{0})}}_{A_{1:n_1}^1R_1}\otimes\ket{\text{GHZ}_{n_2+2}^{(y_2,\vec{0})}}_{R_2A_{1:n_2}^2R_3}\otimes\ket{\text{GHZ}_{n_3+2}^{(y_3,\vec{0})}}_{R_4A_{1:n_3}^3R_5}\otimes\dotsb\nonumber\\
        &\qquad\qquad\dotsb\otimes\ket{\text{GHZ}_{n_{N-1}+2}^{(y_{N-1},\vec{0})}}_{R_{2N-4}A_{1:n_{N-1}}^NR_{2N-3}}\otimes \ket{\text{GHZ}_{n_N+1}^{(y_N,\vec{0})}}_{R_{2N-2}A_{1:n_N}}\\
        &=\frac{1}{\sqrt{2^{N-1}}}\sum_{y_1,y_2,\dotsc,y_{N-1}\in\{0,1\}}(-1)^{z_1y_1}(-1)^{z_2(y_1+y_2)}\dotsb(-1)^{z_{N-2}(y_1+y_2+\dotsb+y_{N-2})}(-1)^{z_{N-1}(y_1+y_2+\dotsb+y_{N-1})}\nonumber\\
        &\qquad\qquad\times \ket{\text{GHZ}_{n_1+1}^{(y_1,\vec{0})}}_{A_{1:n_1}^1R_1}\otimes\ket{\text{GHZ}_{n_2+2}^{(y_2,\vec{0})}}_{R_2A_{1:n_2}^2R_3}\otimes\ket{\text{GHZ}_{n_3+2}^{(y_3,\vec{0})}}_{R_4A_{1:n_3}^3R_5}\otimes\dotsb\nonumber\\
        &\qquad\qquad\dotsb\otimes\ket{\text{GHZ}_{n_{N-1}+2}^{(y_{N-1},\vec{0})}}_{R_{2N-4}A_{1:n_{N-1}}^NR_{2N-3}}\otimes \ket{\text{GHZ}_{n_N+1}^{(y',\vec{0})}}_{R_{2N-2}A_{1:n_N}},
    \end{align}
    where $y'=y_1+y_2+\dotsb+y_{N-1}$. Therefore, the fidelity of the state in \eqref{eq-multpiple_GHZ_merge_linear_protocol_LOCC_chan} is
    \begin{align}
        &\sum_{z_1,z_2,\dotsc,z_{N-1}\in\{0,1\}}\bra{u(\vec{z})}\rho_{A_{1:n_1}^1R_1R_2A_{1:n_2}^2R_3\dotsb R_{2N-4}A_{1:n_{N-1}}^{N-1}R_{2N-3}R_{2N-2}A_{1:n_N}^N}\ket{u(\vec{z})}\\
        &\quad=\frac{1}{2^{N-1}}\sum_{\substack{z_1,z_2,\dotsc z_{N-1}\in\{0,1\}\\y_1,y_2,\dotsc,y_{N-1}\in\{0,1\}\\x_1,x_2,\dotsc,x_{N-1}\in\{0,1\}}}(-1)^{z_1(x_1+y_1)}(-1)^{z_2(y_1+y_2+x_1+x_2)}\dotsb(-1)^{z_{N-2}(y_1+\dotsb+y_{N-2}+x_1+\dotsb+x_{N-2})}(-1)^{z_{N-1}(y_1+\dotsb+y_{N-1}+x_1+\dotsb+x_{N-1})}\nonumber\\
        &\qquad\qquad\times \left(\bra{\text{GHZ}_{n_1+1}^{(y_1,\vec{0})}}_{A_{1:n_1}^1R_1}\otimes\bra{\text{GHZ}_{n_2+2}^{(y_2,\vec{0})}}_{R_2A_{1:n_2}^2R_3}\otimes\bra{\text{GHZ}_{n_3+2}^{(y_3,\vec{0})}}_{R_4A_{1:n_3}^3R_5}\otimes\dotsb\right.\nonumber\\
        &\qquad\qquad\dotsb\otimes\left.\bra{\text{GHZ}_{n_{N-1}+2}^{(y_{N-1},\vec{0})}}_{R_{2N-4}A_{1:n_{N-1}}^NR_{2N-3}}\otimes \bra{\text{GHZ}_{n_N+1}^{(y',\vec{0})}}_{R_{2N-2}A_{1:n_N}}\right)\rho_{A_{1:n_1}^1R_1R_2A_{1:n_2}^2R_3\dotsb R_{2N-4}A_{1:n_{N-1}}^{N-1}R_{2N-3}R_{2N-2}A_{1:n_N}^N}\nonumber\\
        &\qquad\qquad\times \left(\ket{\text{GHZ}_{n_1+1}^{(x_1,\vec{0})}}_{A_{1:n_1}^1R_1}\otimes\ket{\text{GHZ}_{n_2+2}^{(x_2,\vec{0})}}_{R_2A_{1:n_2}^2R_3}\otimes\ket{\text{GHZ}_{n_3+2}^{(x_3,\vec{0})}}_{R_4A_{1:n_3}^3R_5}\otimes\dotsb\right.\nonumber\\
        &\qquad\qquad\dotsb\otimes\left.\ket{\text{GHZ}_{n_{N-1}+2}^{(x_{N-1},\vec{0})}}_{R_{2N-4}A_{1:n_{N-1}}^NR_{2N-3}}\otimes \ket{\text{GHZ}_{n_N+1}^{(x',\vec{0})}}_{R_{2N-2}A_{1:n_N}}\right)\\
        &\quad=\sum_{y_1,y_2,\dotsc,y_{N-1}\in\{0,1\}}\left(\bra{\text{GHZ}_{n_1+1}^{(y_1,\vec{0})}}_{A_{1:n_1}^1R_1}\otimes\bra{\text{GHZ}_{n_2+2}^{(y_2,\vec{0})}}_{R_2A_{1:n_2}^2R_3}\otimes\bra{\text{GHZ}_{n_3+2}^{(y_3,\vec{0})}}_{R_4A_{1:n_3}^3R_5}\otimes\dotsb\right.\nonumber\\
        &\qquad\qquad\dotsb\otimes\left.\bra{\text{GHZ}_{n_{N-1}+2}^{(y_{N-1},\vec{0})}}_{R_{2N-4}A_{1:n_{N-1}}^NR_{2N-3}}\otimes \bra{\text{GHZ}_{n_N+1}^{(y',\vec{0})}}_{R_{2N-2}A_{1:n_N}}\right)\rho_{A_{1:n_1}^1R_1R_2A_{1:n_2}^2R_3\dotsb R_{2N-4}A_{1:n_{N-1}}^{N-1}R_{2N-3}R_{2N-2}A_{1:n_N}^N}\nonumber\\
        &\qquad\qquad\times \left(\ket{\text{GHZ}_{n_1+1}^{(y_1,\vec{0})}}_{A_{1:n_1}^1R_1}\otimes\ket{\text{GHZ}_{n_2+2}^{(y_2,\vec{0})}}_{R_2A_{1:n_2}^2R_3}\otimes\ket{\text{GHZ}_{n_3+2}^{(y_3,\vec{0})}}_{R_4A_{1:n_3}^3R_5}\otimes\dotsb\right.\nonumber\\
        &\qquad\qquad\dotsb\otimes\left.\ket{\text{GHZ}_{n_{N-1}+2}^{(y_{N-1},\vec{0})}}_{R_{2N-4}A_{1:n_{N-1}}^NR_{2N-3}}\otimes \ket{\text{GHZ}_{n_N+1}^{(y',\vec{0})}}_{R_{2N-2}A_{1:n_N}}\right),
    \end{align}
    as required.
\end{proof}

\subsection{Fusion of GHZ states in a tree topology (Protocol~\ref{alg:multiple_GHZ_merge_tree_protocol_main})}\label{sec:GHZ_merge_tree}

Protocol~\ref{alg:multiple_GHZ_merge_linear_protocol} can be used almost identically in the case that the GHZ states are in a tree topology, as shown in Fig.~\ref{fig:GHZ_merging}. (Observe that, in fact, the previous examples of a star topology and a linear topology are both special cases of tree graphs.)  In this case (see Protocol~\ref{alg:multiple_GHZ_merge_tree_protocol_main}), we start by taking the graph and arranging it explicitly as a tree, as in Fig.~\ref{fig:GHZ_merging}. (Note that this can be done by taking an arbitrary node as the root.) Then, we perform CNOT gates ``downward'' through the tree, followed by measuring the target qubits in the Pauli-$Z$ basis. Finally, the measurement outcomes are announced globally. All of the nodes apply Pauli-$X$ corrections based on their level in the tree, with the root node being level 0.

When the input GHZ states in Protocol~\ref{alg:multiple_GHZ_merge_tree_protocol_main} are noisy, the output state after executing Protocol~\ref{alg:multiple_GHZ_merge_tree_protocol_main} has fidelity to the ideal GHZ state given by an expression analogous to the one in \eqref{eq-multiple_GHZ_merge_linear_protocol_fidelity}.

\end{document}